\begin{document}

\title{Universal Dynamics with Globally Controlled Analog Quantum Simulators}

\begin{abstract}

Analog quantum simulators with global control fields have emerged as powerful platforms for exploring complex quantum phenomena. Recent breakthroughs, such as the coherent control of thousands of atoms, highlight the growing potential for quantum applications at scale. Despite these advances, a fundamental theoretical question remains unresolved: to what extent can such systems realize universal quantum dynamics under global control? Here we establish a necessary and sufficient condition for universal quantum computation using only global pulse control, proving that a broad class of analog quantum simulators is, in fact, universal. We further extend this framework to fermionic and bosonic systems, including modern platforms such as ultracold atoms in optical superlattices.
Moreover, we observe that analog simulators driven by random global pulses exhibit information scrambling comparable to random unitary circuits. In a dual-species neutral-atom array setup, the measurement outcomes anti-concentrate on a $\log N$ timescale despite the presence of only temporal randomness, opening opportunities for efficient randomness generation. To bridge theoretical possibility with experimental reality, we introduce \emph{direct quantum optimal control}, a control framework that enables the synthesis of complex effective Hamiltonians while incorporating realistic hardware constraints. 
Using this approach, we experimentally engineer three-body interactions outside the blockade regime and demonstrate topological dynamics on a Rydberg-atom array. Experimental measurements reveal dynamical signatures of symmetry-protected-topological edge modes, confirming both the expressivity and feasibility of our method. Our work opens a new avenue for quantum simulation beyond native hardware Hamiltonians, enabling the engineering of effective multi-body interactions and advancing the frontier of quantum information processing with globally-controlled analog platforms.

\end{abstract}
\date{\today}

\author{Hong-Ye Hu}
\altaffiliation{These authors contributed equally to this work.}
\thanks{Corresponding author: \href{mailto:hongyehu@fas.harvard.edu}{hongyehu@fas.harvard.edu}}
\affiliation{Department of Physics, Harvard University, Cambridge, MA 02138, USA}
\author{Abigail McClain Gomez}
\altaffiliation{These authors contributed equally to this work.}
\affiliation{Department of Physics, Harvard University, Cambridge, MA 02138, USA}
\author{Liyuan Chen}
\altaffiliation{These authors contributed equally to this work.}
\affiliation{Department of Physics, Harvard University, Cambridge, MA 02138, USA}
\affiliation{School of Engineering and Applied Sciences, Harvard University, Allston, MA 02134, USA}
\author{Aaron Trowbridge}
\affiliation{Robotics Institute, Carnegie Mellon University, Pittsburgh, PA 15213, USA}
\affiliation{Harmoniqs, Inc.}
\author{Andy J. Goldschmidt}
\affiliation{Department of Computer Science,University of Chicago,
Chicago, IL 60637, USA}
\author{Zachary Manchester}
\affiliation{Robotics Institute, Carnegie Mellon University, Pittsburgh, PA 15213, USA}
\author{Frederic T. Chong}
\affiliation{Department of Computer Science,University of Chicago,
Chicago, IL 60637, USA}
\author{Arthur Jaffe}
\affiliation{Department of Physics, Harvard University, Cambridge, MA 02138, USA}
\affiliation{Department of Mathematics, Harvard University, Cambridge, Massachusetts 02138, USA}
\author{Susanne F. Yelin}
\thanks{Corresponding author: \href{mailto:syelin@g.harvard.edu}{syelin@g.harvard.edu}}
\affiliation{Department of Physics, Harvard University, Cambridge, MA 02138, USA}

\maketitle

\section{Introduction}

Precise analog pulse control lies at the foundation of all modern quantum technology. Building on this physical principle, two main paradigms of programmable quantum platforms have emerged: digital quantum computers and analog quantum simulators. In recent years, digital quantum computers have often drawn the spotlight. Remarkable progress in these devices has enabled fault-tolerant quantum error correction that surpasses the break-even point of noise \cite{google_breakeven,2022arXiv220801863R,harvard_breakeven,bosonic_code,real_time_qec}. At the hardware level, digital quantum computers rely on continuous-time analog dynamics governed by carefully shaped control pulses, which implement the local, few-qubit gates necessary for computation \cite{995,PhysRevLett.123.170503,iongates}. 
In parallel, there has been rapid progress in the development of analog quantum simulators with global control fields.
Recent experiments
have demonstrated coherent manipulation of hundreds to thousands of atoms and ions \cite{3000qubits,6000atoms,500ions,spinliquid,nist_ions,France_atoms,2024arXiv240702553K}.
These globally-controlled
analog platforms have emerged as leading candidates for scaling to large system sizes compared to digital architectures that require fully local control.

Although there have been many compelling demonstrations of analog quantum simulation in the study of quantum many-body physics, such as the exploration of complex out-of-equilibrium phenomena \cite{harvard_dynamics,fermigas,google_digital_analog,thermalization,fragmentation,YouLi_information_scrambling}, superconductivity \cite{low_temperature,AFM,quantum_dots,Mott,ustc_3D,Bloch_review}, topological phases \cite{haldane_phase,YouLi_topology,PhysRevA.110.023318,Antoine_ssh}, and lattice gauge theories \cite{quera_string_breaking,Z2}, analog simulators are still often perceived as being limited to emulating Hamiltonians closely tied to their native physical interactions. To bridge the gap between analog and digital quantum platforms, a fundamental theoretical question must be addressed: to what extent can analog systems realize universal quantum dynamics under global control?

In this work, we address this question with a surprisingly simple and general condition for achieving universal quantum computation using only global pulse control, which is both necessary and sufficient. Our approach relies on the tool of dynamical Lie algebras \cite{dAlessandro2021} and matrix representation theory \cite{KHANEJA200111}. In contrast to previous studies \cite{Simon1,Benjamin_PRL,qaoa_universal,Hannes_dual_species}, which analyze specific systems on a case-by-case basis, our condition identifies the fundamental obstruction of universality implicitly broken in these constructions, independent of the specific physical Hamiltonians.
A striking implication is that a broad class of analog quantum simulators are, in fact, capable of universal quantum computation when equipped with appropriate global control pulses. We further extend this framework to establish universal simulation in fermionic and bosonic systems, including modern platforms such as ultracold atoms in optical superlattices \cite{superlattice_chemistry,superlattice_review,PhysRevLett.134.053402}.
One important application of globally driven analog systems without fine-tuned control pulses is the study of quantum information scrambling and randomness, which play central roles in modern quantum science, including quantum supremacy proposals \cite{google_supremacy,PhysRevLett.134.090601}, randomized benchmarking \cite{Joonhee_23,PhysRevLett.131.110601}, and cryptography \cite{2023arXiv230301625A,Liu_25,CHAMON2022}. Motivated by recent advances in dual-species neutral-atom arrays \cite{dualspecies,dualspecies2,dualspecies3}, we numerically investigate information scrambling in the measurement outcomes of such systems. Remarkably, we find that the measurement outcomes anti-concentrate on a $\log N$ timescale despite the presence of only temporal randomness. This result highlights the potential of globally driven analog platforms for efficient randomness generation with broad applications.

$$
\boxed{
\underset{\text{\normalsize (Theory)}}{\text{Dynamical~Lie~Algebra}} +
\underset{\text{\normalsize (Experiment)}}{\text{Direct Quantum Optimal Control}}
\;\Longrightarrow\;
\text{Useful Universal Analog Simulator}
}
$$

While the theory of dynamical Lie algebras tells us what is achievable with global pulse engineering, it is not constructive. As suggested by the main thematic equation above, realizing a useful and universal analog quantum simulator also requires a practical tool. To this end, we introduce a new quantum control technique into the experimental workflow: direct quantum optimal control.
Originally developed in the context of aerospace and robotics \cite{Zac_Direct_Method,goldschmidt2022model,trowbridge2023direct}, this method treats both the analog control pulses and the quantum states (or unitaries) at each time point as trainable parameters, with the Schrödinger equation imposed as a constraint. In contrast to conventional pulse control methods such as Floquet engineering \cite{FloquetPRX,Bloch_floquet,RydbergFloquet,floquet_spin,2024arXiv240802741U} and GRadient Ascent Pulse Engineering (GRAPE) \cite{grape} (categorized as indirect methods), direct quantum optimal control allows the optimization algorithm to initialize in and explore ``nonphysical'' regions --- these are points in the joint state and control trajectory space that are \textit{infeasible}, i.e., they do not satisfy the dynamics constraints --- during intermediate steps before convergence to a feasible solution. This greater flexibility in finding effective solutions makes direct methods particularly well suited for realistic experimental settings involving multiple hardware constraints, where traditional techniques often struggle with local minima or fail to accommodate system limitations.

In this work, we combine our new developments in both theory and experiment to demonstrate the potential of globally controlled analog quantum simulators. To substantiate the power of our approach, we focus on Rydberg atom arrays with van der Waals interactions. Qubits are encoded in the ground and Rydberg states, which are also known as the analog encoding. Although the native Hamiltonian contains only long-range two-body interactions, we demonstrate the engineering of effective \emph{three-body} interactions using global pulse control. As a concrete example, we focus on the synthesis of a ZXZ Hamiltonian, which lies in a symmetry-protected topological (SPT) phase \cite{ruben_spt}.

In Rydberg analog quantum simulation, previous experiments have typically operated within the Rydberg blockade regime, where strong native two-body interactions enable high-precision control \cite{995,semeghini_spinliquid,quantum_scar,Fast_Rydberg_gate_2000,quantum_ising_blockade}. However, using the theoretical tools we developed, we show that the ZXZ Hamiltonian cannot be realized within the typical blockade regime but becomes possible outside it. To achieve this goal, we overcome several key experimental challenges. The first is decoherence arising from the residual atomic temperature, since atoms are untrapped during evolution in the analog encoding. We develop a phenomenological model that captures this process well. In addition to thermal motion, we must also satisfy strict constraints on the laser control parameters, further increasing the experimental complexity. Using our control framework, we identify smooth, short-duration pulses that achieve high-fidelity dynamics. Experimental measurements reveal dynamical signatures of symmetry-protected topological edge modes, confirming both the expressivity and experimental feasibility of our approach.

Our work paves a new path for analog quantum simulation through advanced pulse control techniques that go beyond the native Hamiltonians of the simulators. It significantly expands the expressive power of analog platforms and lays the foundation for simulating exotic quantum many-body phenomena through synthesized multi-body interactions, as well as for large-scale quantum information processing using analog quantum devices.

\begin{figure*}[htbp]
    \centering
    \includegraphics[width=0.98\linewidth]{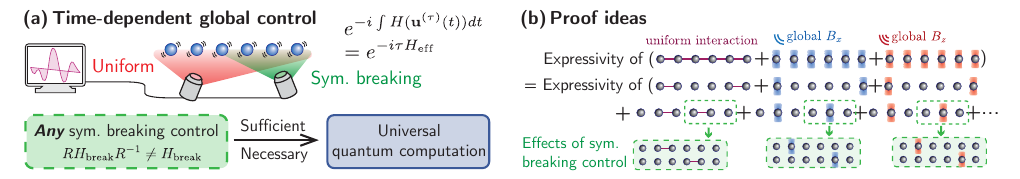}
    \caption{\textbf{Expressivity of globally controlled analog quantum simulators.} (a) An analog quantum system driven by uniform global control fields (red) together with an additional symmetry-breaking field (green) that breaks reflection symmetry. Any global field that violates this symmetry suffices. Here we illustrate this with a green field applied only to the right half of the system. We prove that the presence of such a symmetry-breaking control is both \emph{necessary} and \emph{sufficient} for universal quantum computation. The symmetry-breaking Hamiltonian $H_{\mathrm{break}}$ only needs to satisfy $R H_{\mathrm{break}} R^{-1} \neq H_{\mathrm{break}}$, where $R$ denotes the reflection operator.
(b) Schematic illustration of the proof strategy based on group representation theory. Uniform global controls enable near-independent control of local fields and interactions, up to a residual reflection symmetry that couples qubits at mirrored positions (e.g., the first and last sites). Using representation theory, we show that \emph{any} additional symmetry-breaking control suffices to lift this constraint and complete the controllable space.}
    \label{fig:fig1}
\end{figure*}

\section{Universal quantum computation under global controls}

\subsection{Minimal requirement for universal quantum computation}

The Hilbert space of an $N$-qubit system is the $N$-fold tensor product of $\mathbb{C}^2$, i.e., $\mathcal{H}_q=(\mathds{C}^2)^{\otimes N}$. Given a set of time-dependent control pulses $\mathbf{u}(t)$, the dynamics of an analog system are governed by a time-dependent Hamiltonian of the form
$H(t)=\sum_{\alpha}u_{\alpha}(t)H_{\alpha}$.
Here, $H_{\alpha} \in \{H_1,H_2,\cdots,H_l\}$ is a set of basic control Hamiltonians, and $u_{\alpha}(t)$ is the corresponding component of the control pulse vector $\mathbf{u}(t)$.
More concretely, the unitary dynamics generated by $H(t)$ is given by:
\eqs{\label{eq:time_dependent_Hamiltonian}
U=\mathcal{T}\left[e^{-i\int H(t)dt}\right]=\mathcal{T}\left[e^{-i\int \sum_{\alpha} u_{\alpha}(t)H_{\alpha}dt}\right]\;,
}
where $\mathcal{T}$ denotes time-ordering. 
A system is said to be universal or to realize universal quantum computation (UQC) if we can approximate any target unitary $V$ on $\mathcal{H}_q$ by $U$.
To be precise, given any target unitary $V \in \mathrm{SU}(2^N)$ and any desired precision $\epsilon$, there exists a control sequence $\mathbf{u}(t)$ such that $\norm{U-V}_{\diamond}\leq \epsilon$, where $\norm{\cdot}_{\diamond}$ is the diamond norm between quantum channels \cite{seth_science_UQC,Seth_almost_any_gate_universal,PhysRevLett.89.247902}.

In this work, we investigate UQC in globally-controlled systems.
We denote the support of a basic control Hamiltonian $H_\alpha$, i.e., the qubits it acts nontrivially on, by $\mathrm{Supp}(H_\alpha)$.
If all control fields have extensive support, namely $|\mathrm{Supp}(H_{\alpha})| = \mathcal{O}(N)$ for all $H_\alpha$, we say the system is globally-controlled. In such systems, only $\mathcal{O}(1)$ independent control fields are required, significantly simplifying experimental implementation.
By contrast, traditional approaches to UQC rely on universal local gate sets, such as the Clifford plus T gates~\cite{Nielsen_Chuang_2010}, where each gate acts  on only a small constant number of qubits, i.e. $|\mathrm{Supp}(H_{\alpha})| = \mathcal{O}(1)$ for each $H_\alpha$. These local control schemes require full spatial addressability of individual qubits, with the number of controls scaling as $\mathcal{O}(N)$. This demand introduces considerable experimental overhead compared to global control. Therefore, it naturally raises the question: is an extensive amount of local control necessary for UQC, or can one achieve universality using only a small number of global control fields, independent of system size?

Surprisingly, the latter is possible. In a pioneering work, Benjamin \cite{Benjamin_PRL} showed that a globally-controlled quantum dot system with alternating on-off interactions can realize universal computation. This theoretical framework has since inspired the design of contemporary globally controlled hardware platforms \cite{Menta1,Menta2}. Furthermore, it has been shown that this question is also related to the expressivity of variational quantum algorithms, where the control pulses $\mathbf{u}(t)$ serve as trainable parameters \cite{Lie_barrenplateau,classification_lie,Shengyu_DLA}. In this context, Lloyd \cite{qaoa_universal} demonstrated that a globally controlled variant of the Quantum Approximate Optimization Algorithm (QAOA) is universal. However, previous studies of universality under global control have primarily followed a case-by-case analysis for a specific set of control Hamiltonians $\{H_{\alpha}\}$. 
To approach the problem from a fundamental perspective, we move beyond such examples and establish a necessary and sufficient condition for UQC under global control.

To this end, we first identify a natural framework for globally controlled analog quantum simulators and analyze the conditions for universality within it. 
Without loss of generality, we focus on a one-dimensional chain of qubits; extensions to higher-dimensional lattices are straightforward (see \Cref{appendix:Universality}). 
Each two-level system functions as a qubit provided it supports rotations about the \(X\)- and \(Z\)-axes via suitable control pulses. Consequently, a natural extension for a globally controlled system is to have global \(X\)- and \(Z\)-rotations, i.e. $\sum_{i}X_i$ and $\sum_{i}Z_i$. In addition, the system needs interactions and entanglements to be universal. Here, we choose the uniform single-Pauli type interaction, i.e. $\sum_{i}P_{i}\otimes P_{i+1}$, as a canonical representation of the interaction. Most current physical platforms, such as Rydberg atom arrays and trapped ions \cite{spinliquid, nist_ions}, meet this description. Without loss of generality, we choose the Ising-type $ZZ$ interaction. In Appendix~\ref{appendix:Universality}, we show the detail of the single-Pauli interaction is not important and also discuss some sufficient requirements for arbitrary homogeneous nearest-neighbor couplings.

In this canonical setting, the system we analyze is described by
\begin{equation}
H_q(t)
= u_X(t) H_X + u_Z(t) H_Z + u_{ZZ}(t) H_{ZZ}
= u_X(t)\sum_{i} X_i + u_Z(t)\sum_{i} Z_i + u_{ZZ}(t)\sum_{\langle i,j\rangle} Z_i Z_j\;.
\label{eq:uniform_global}
\end{equation}
As a one-dimensional open chain, this time-dependent Hamiltonian possesses only lattice reflection symmetry about its center. 
We identify this geometrical symmetry as the sole obstruction to universality. 
In particular, introducing \emph{any} additional control term $H_{\mathrm{break}}$ that breaks reflection symmetry renders the system universal, as formalized in \Cref{qubit:UQC_chain} and shown in \Cref{fig:fig1} (b).

\begin{theorem}[Minimal requirement for universal quantum computation on a qubit chain]
\label{qubit:UQC_chain}
Consider an open chain of qubits with a homogeneous nearest-neighbor single-Pauli interaction of the form $\sum_{i} P_i \otimes P_{i+1}$. 
Assume the system is equipped with tunable global $X$- and $Z$-fields,
$
H_{X} = \sum_j X_j, 
H_{Z} = \sum_j Z_j .
$
Then the system is capable of universal quantum computation if and only if there exists at least one additional control field $H_{\mathrm{break}}$ that breaks lattice reflection symmetry, i.e.
$
R H_{\mathrm{break}} R^{-1} \neq H_{\mathrm{break}},
$
where $R$ denotes the matrix representation of lattice reflection.
\end{theorem}

\Cref{qubit:UQC_chain} is general in various senses.
First, the theorem reveals a fundamental connection between geometric symmetry breaking and universal quantum computation. In particular, it does not depend on the specific form of $H_{\mathrm{break}}$, provided that $R H_{\mathrm{break}} R^{-1} \neq H_{\mathrm{break}}$.
The one-dimensional proof is generalizable to any higher-dimensional lattice, or even to a general graph (see discussions in \Cref{appendix:Universality}), and we leave the detailed proof for future study.
In addition, our theorem implies the results of previous studies as special cases.
For example, existing proofs of universality for globally controlled quantum dots~\cite{Benjamin_PRL} and QAOA~\cite{qaoa_universal} require qubits arranged in an even–odd alternating pattern of the form $ABAB\cdots$. In the language of dynamical Lie algebras, their control schemes are equivalent to the following set of Hamiltonians, which together generate universal quantum computation:
\eqs{
H_1 = \sum_j X_j,~ H_A=\sum_j Z_{2j},~H_B=\sum_j Z_{2j+1},~H_{AB}=\sum_{j}Z_{2j}Z_{2j+1},~H_{BA}=\sum_{j}Z_{2j+1}Z_{2j+2}.
\label{eq:Seth}
}
Compared to \Cref{eq:uniform_global}, we find the equivalence $H_X = H_1$, $H_Z = H_A + H_B$, and $H_{ZZ} = H_{AB}+H_{BA}$. Therefore, the control structure in \Cref{eq:Seth} is included in \Cref{qubit:UQC_chain}, with two specific additional symmetry-breaking terms $H_{\text{break1}} = \sum_j Z_{2j}$ and $H_{\text{break2}} = \sum_j Z_{2j} Z_{2j+1}$.
Notably, this structure turns out to be convenient in current dual-species neutral atom platforms \cite{dualspecies,dualspecies2,dualspecies3}, which further demonstrates the applicability of our theorem. 

We also extend the above framework to fermionic and bosonic systems, with an emphasis on ultracold atoms in optical superlattices \cite{superlattice_chemistry,PhysRevLett.134.053402,programming_optical_lattice}. In the Methods and \Cref{appendix:NNN_hopping}, we also show that such systems are surprisingly powerful, supporting universal fermionic and bosonic quantum simulation. One open question is the complexity of universal analog quantum systems and their associated compilation efficiency. While a comprehensive investigation lies beyond the scope of this work, we provide a detailed analysis of pulse optimization complexity in \Cref{app:compilation_efficiency}. Building on \Cref{prop:size_QOC_problem} in the Appendix, we show that the complexity of the quantum optimal control problem scales polynomially with the dimension $D_{\mathcal{U}^+}$ of the manifold of unitaries reachable in time polynomial to $D_{\mathcal{U}^+}$.

\subsection{Information scrambling and anti-concentration in globally driven analog quantum systems}
\begin{figure*}[htbp]
    \centering
    \includegraphics[width=0.99\linewidth]{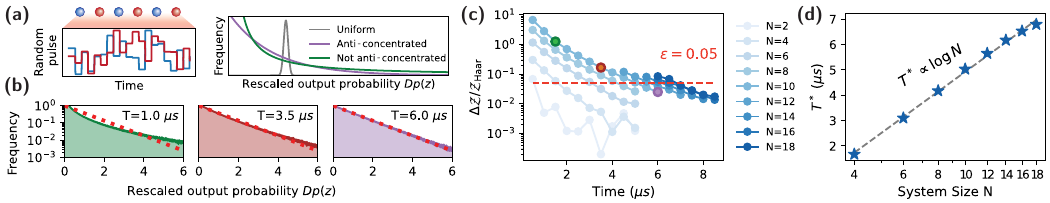}
    \caption{\textbf{Information scrambling in globally driven quantum systems.} (a) Dual-species neutral-atom array driven by spatially uniform and temporally random global pulses, and illustration of the rescaled output probability distribution $D p(z)$, where $D$ is the Hilbert space dimension. A uniform distribution yields a delta peak at $D p(z)=1$, while random Haar states follow the Porter–Thomas distribution, corresponding to an anti-concentrated output. In contrast, non-anti-concentrated distributions exhibit long tails, indicating enhanced probabilities for specific outcomes.
(b) Histograms of rescaled output probabilities for a dual-species atom array with interatomic spacing $d = 8.9~\mu\mathrm{m}$ under random global pulses. As the evolution time increases, the distribution transitions from non-anti-concentrated to anti-concentrated. See \Cref{app:scrambling} for details.
(c) Relative error of the averaged collision probability $\mathcal{Z}$ with respect to Haar-random states for different system sizes.
(d) Defining the characteristic anti-concentration time $T^*$ by a relative error threshold $\epsilon = 5\%$, we observe a logarithmic scaling $T^* \propto \log N$.}
    \label{fig:scrambling}
\end{figure*}

Randomness is a cornerstone of modern quantum science, underpinning our understanding of complexity in random quantum circuits (RUCs) \cite{google_supremacy,PhysRevLett.134.090601} and black hole dynamics \cite{Patrick_Hayden_2007,Haferkamp}, while enabling protocols for benchmarking \cite{Joonhee_23,PhysRevLett.131.110601,shadow,PhysRevResearch.5.023027,PhysRevResearch.6.043118,PhysRevResearch.4.013054} and cryptography \cite{2023arXiv230301625A,Liu_25,CHAMON2022}. While RUCs provide a paradigmatic model of scrambling via local gates, the emergence of randomness in globally driven Hamiltonian systems remains less understood. Although fully controllable systems eventually converge to Haar-random ensembles \cite{PhysRevX.14.041059}, the efficiency of this process under global control is an open question. Here, we demonstrate that a globally controlled system, inspired by dual-species Rydberg atom arrays, exhibits fast information scrambling in the measurement basis \footnote{We emphasize that the convergence to the PT distribution in the measurement basis does not indicate that it converges to a random state design. We leave it to future study.}. We show that, despite lacking spatial control compared to RUCs, this system approaches the Porter-Thomas (PT) distribution in $O(\log N)$ time, matching the scrambling rate of RUCs with spatio-temporal randomness \cite{Brandao2022}. We implement this architecture using a dual-species neutral atom array \cite{dualspecies,dualspecies2,dualspecies3}, arranged as an interwoven one-dimensional chain [\Cref{fig:scrambling} (a)]. The system features static van der Waals interactions driven by species-selective global Rabi controls $\Omega_{A/B}(t)$ and detuning $\Delta_{A/B}(t)$, which together enable universal quantum dynamics (see \Cref{app:scrambling} for details). Under global time-dependent random pulses, the system evolves from an initial state $\ket{0}^{\small \otimes N}$ to a complex state $\ket{\psi}$ at late times. If one measures all the qubits, instead of having a perfect uniform probability of measuring all bitstrings $z$, it exhibits a speckle pattern: over different $z$’s, $p(z)$ fluctuates about $1/D$ due to random interference in coherent quantum dynamics, with $D$ being the Hilbert space dimension (see \Cref{fig:scrambling} (a)). Its statistical properties are universal: the fraction of $p(z)$’s in a given interval $p(z)\in [x+dx]$ is given by the Porter-Thomas (PT) distribution, i.e. $P[p(z)=x]dx=\mu^{-1}\exp(-x/\mu)dx$ with a mean $\mu = 1/D$. This phenomenon is also called ``anti-concentration", which is the cornerstone for quantum supremacy benchmarking \cite{google_supremacy}.

As shown in \Cref{fig:scrambling} (a), we use random piecewise constant global pulses with a constant time resolution $\Delta t=0.05\mu\mathrm{s}$. Each constant value is uniformly sampled from an allowed range, with details of the parameters in \Cref{app:scrambling}. We showed the frequency of rescaled output probability $Dp(z)$ for different evolution times in \Cref{fig:scrambling} (b) for a ten-atom system. It is clear that when the quench time is short, such as $T=1.0\mu s$, the frequency deviates from the ideal PT distribution and has longer tails, which indicates that some bitstring outcomes are more likely to occur. However, this frequency quickly converges to the PT distribution at $T=6.0\mu s$, indicating the information scrambling. To better understand how quickly such a globally driven system converges to the PT distribution, we calculate the collision probability over the ensemble of random states generated $\mathcal{Z}_{\mathcal{E}}=\mathbb{E}_{\ket{\psi}\in \mathcal{E}}\left[\sum_{\textbf{z}}|\langle \textbf{z}|\psi\rangle|^{4}\right]$. For the Haar random state ensemble, the ideal value is $\mathcal{Z}_{\mathrm{Haar}}=2/(D+1)$. In \Cref{fig:scrambling} (c), we show the decay of relative error $\varepsilon_r=|\mathcal{Z}_{\mathcal{E}}-\mathcal{Z}_{\mathrm{Haar}}|/\mathcal{Z}_{\mathrm{Haar}}$. Suppose we choose a threshold of relative error $\varepsilon_r=5\%$ for the characteristic time $T^{*}$ to reach the PT distribution or anti-concentration, we clearly see a logarithmic scaling $T^{*}\propto \log N$ in \Cref{fig:scrambling} (d). This indicates fast scrambling in the measurement outcomes at the same rate as one-dimensional RUC, despite only temporal randomness. This result would further inspire new applications in randomized benchmarking and cryptography with globally driven analog quantum simulators. Lastly, in \Cref{app:additional_large_scale}, we further demonstrate that even in analog quantum simulators with always-on interactions and global control pulses, high-fidelity subregion entangling gates can be engineered despite the system’s natural tendency to generate global entanglement.

\section{Engineered three-body interaction and topological dynamics\label{sec:SPT}}

Building on our theoretical framework, we demonstrate that globally-controlled analog simulators can realize effective Hamiltonians far beyond their native physical interactions through pulse engineering [\Cref{fig:fig1}(b)]. In this section, we showcase this potential by experimentally engineering a three-body interacting Hamiltonian using Rydberg atom arrays, as illustrated in \Cref{fig:fig1}(c). To our knowledge, this is the first realization of effective three-body interactions with Rydberg atoms outside the blockade regime, going beyond the native two-body interactions. Crucially, we introduce a new control technique, dubbed \emph{direct quantum optimal control}, that bridges the gap between theoretical possibility and experimental reality. This method enables us to overcome key experimental challenges, including hardware constraints and atom position fluctuations outside the blockade regime, by identifying smooth, short-duration pulses that achieve high-fidelity dynamics. In the following, we introduce the physics model of interest, which has a topological phase, and describe its implementation using Rydberg atom arrays.

\subsection{Example application: symmetry-protected topological phases}

Topology has emerged as a cornerstone of modern quantum many-body physics. Topological phases are typically characterized by non-local order parameters and topological edge modes localized at the system boundaries \cite{RevModPhys.89.041004, RevModPhys.83.1057, ruben_spt}. In one-dimensional systems, such a topological phase is well defined in the presence of symmetries, giving rise to symmetry-protected topological (SPT) phases \cite{ruben_spt,PhysRevX.8.011055, GZC2D,disorder_spt}. 
With the development of programmable quantum devices, researchers have begun experimentally investigating exotic topological phases of matter, ranging from preparing topological quantum states \cite{semeghini_spinliquid, Hannes_state_prep} to exploring topological phases and transitions \cite{Antoine_ssh,spt_transition_ssh, YuJie_phase_transition}, Floquet topological dynamics \cite{Dongling_fspt, FSPT_IONS}, and anyon statistics \cite{ion_anyon,anyon2,liyuan}. Most of these demonstrations and proposals have relied on digital quantum circuits. 

\begin{figure*}[htbp]
    \centering
    \includegraphics[width=1\linewidth]{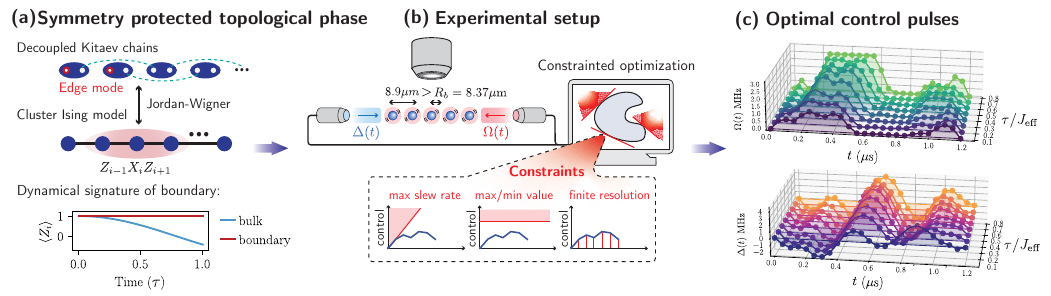}
    \caption{\textbf{Experimental realization of symmetry-protected topological dynamics using optimal control}.
(a) Symmetry-protected topological Hamiltonian. A pair of decoupled Kitaev chains, which host topological edge modes, can be mapped to a qubit model with three-body interactions known as the cluster-Ising model. Due to experimental constraints, only measurements in the Z-basis are available. A key dynamical signature distinguishing boundary from bulk qubits is that the Z expectation values of the boundary qubits remain unchanged under evolution with the ZXZ Hamiltonian.
(b) Schematic of the experimental setup. Rydberg atom arrays are globally driven by time-dependent Rabi frequency $\Omega(t)$ and detuning $\Delta(t)$. Atoms are spaced beyond the blockade radius $R_b$ \cite{blockade}, resulting in position fluctuations due to residual atomic temperature. Optimal quantum control is employed to design global pulses that mitigate errors and also satisfies machine constraints.
(c) Overview of optimal control pulses used in the experiments. Time-dependent control waveforms $\Omega(t)$ and $\Delta(t)$ are engineered to simulate the effective SPT Hamiltonian. Constraints such as maximum slew rate, amplitude bounds, and finite time resolution are incorporated in the control optimization.}
    \label{fig:fig2}
\end{figure*}

In parallel, there has been significant progress in analog quantum devices and digital-analog hybrid platforms. By avoiding the discretization errors introduced by Trotterization, analog and hybrid devices offer improved simulation quality, especially for quantum dynamics \cite{thermalization, stability_analog, practical_advantage}. Notably, Ref.~\cite{Antoine_ssh} demonstrated a bosonic version of the SPT model using Rydberg atom arrays. However, including this demonstration, most analog quantum simulations still relied heavily on the native Hamiltonians of the device and primarily focused on engineered two-body interactions.
Building on the theoretical framework we developed, we show that the expressive power of analog quantum devices is significantly greater. These systems can simulate broad classes of exotic Hamiltonians that go well beyond their native two-body interactions. For example, with global pulse engineering, one can, in principle, engineer multi-body interactions. In this work, we demonstrate the quantum dynamics driven by an effective ZXZ Hamiltonian (also called cluster-Ising model):
\eqs{
H_{\text{ZXZ}}=J_{\text{eff}}\sum_{j}Z_{j-1}X_{j}Z_{j+1}\;,\label{eq:zxz}
}
which also serves as a pedagogical example of a system realizing a symmetry-protected topological (SPT) phase. Through the Jordan-Wigner transformation, this model can be mapped to a pair of decoupled Kitaev chains \cite{KitaevChain, ruben_spt}, supporting a topological edge mode, as illustrated in \Cref{fig:fig2}(a) (see \Cref{app:spt} for details). Although both the cluster-Ising model and the bosonic Su–Schrieffer–Heeger model in Ref.\cite{Antoine_ssh} are SPT Hamiltonians, our work highlights the potential to engineer \emph{multi-body interactions} beyond the native interactions of the hardware.

In particular, we use the platform of Rydberg atoms trapped in a tweezer array, where a qubit is encoded as the ground and Rydberg states of the atom, which is commonly referred to as the analog mode of Rydberg atom arrays. In the analog mode, atoms are initially trapped in their ground states but become untrapped once global control pulses excite them to Rydberg states. The residual thermal motion causes fluctuations in atomic positions, which in turn lead to noise during quantum simulation \cite{Ceren}. As a result, most analog quantum simulations with Rydberg atom arrays have been performed within the blockade regime \cite{semeghini_spinliquid,quantum_scar}, where neighboring atoms are separated by less than the blockade radius $R_b = (C_6 / \Omega_{\mathrm{max}})^{1/6}$ \cite{blockade,Fast_Rydberg_gate_2000,quantum_ising_blockade}. In the blockade regime, the nearest-neighbor interactions are much stronger than the Rabi frequency. Therefore, simultaneous excitations of neighboring atoms are suppressed, and nearest-neighbor interactions are effectively eliminated. While this configuration offers greater robustness against atomic position fluctuations, entanglement has also been experimentally demonstrated in the weak coupling regime using carefully timed coherent control \cite{weakly_coupling}.

Using the theoretical tools we developed, we show that the three-body ZXZ Hamiltonian cannot be engineered within the typical blockade regime. Instead, it is only possible to realize the effective Hamiltonian $H_{ZXZ}$ outside it (see \Cref{appendix:limitation_blockade}).  In our experiment, the maximum Rabi frequency is $\Omega_{\mathrm{max}} = 2.4~\mathrm{MHz}$, yielding a blockade radius of $R_b = 8.37~\mu\mathrm{m}$. As illustrated in \Cref{fig:fig2}(b), we arrange atoms in a one-dimensional chain with a spacing of $d = 8.9~\mu\mathrm{m}$, just outside the blockade radius \cite{blockade}. The atoms are globally driven by a time-dependent Rabi frequency $\Omega(t)$ and detuning $\Delta(t)$, which serves the role of $u_{\alpha}(t)$ in \Cref{eq:time_dependent_Hamiltonian}. The goal is to design experimentally feasible control pulses $u_{\alpha}^{(\tau)}(t)$ such that
\eqs{
\mathcal{T}\left[e^{-i\int dt \sum_{\alpha}u_{\alpha}^{(\tau)}(t)H_{\alpha} }\right]=e^{-i \tau H_{\mathrm{ZXZ}}},
\label{eq:goal}
}
thereby realizing the target unitary evolution governed by the cluster–Ising Hamiltonian. Note that the control pulse $\mathbf{u}^{(\tau)}(t)$ depends on the effective simulation time $\tau$. Further details of the experimental setup are provided in the Methods.

Several key experimental challenges must be overcome to achieve this goal. As illustrated in \Cref{fig:fig2}(b), the laser control parameters $\Omega(t)$ and $\Delta(t)$ are constrained by maximum slew rates, bounded amplitudes, and finite time discretization. These limitations make the optimization problem non-convex and challenging for traditional quantum control methods. Moreover, since the atoms are not trapped during evolution (see \Cref{subsec:experimental_results}), long pulse durations inevitably lead to deviations from ideal dynamics. We emphasize that although there exists a circuit decomposition $e^{-i \tau H_\mathrm{ZXZ}} = U_\mathrm{ZZ}(\frac{\tau}{2}) e^{-i \tau H^{\mathrm{bulk}}_{X}} U_{ZZ}(\frac{-\tau}{2})$ with $U_\mathrm{ZZ}(\theta)=\prod_{i}^{N-1}\mathrm{exp}(-i\theta Z_i Z_{i+1})$ and $H^{\mathrm{bulk}}_{X} = \sum_{j=2}^{N-1} X_j$, this circuit decomposition is not directly accessible on the analog quantum hardware with always-on interactions, which is the focus of this work. These features highlight the potential challenges that the control problem faces.  Thus, it is essential to design smooth, short-duration control pulses. To overcome these obstacles, we introduce a new quantum control method adapted from a cutting-edge robotics technique called direct trajectory optimization and benchmark it against standard approaches. This method successfully identifies a family of smooth pulses, as shown in \Cref{fig:fig2}(c), that achieve the experimental target defined in \Cref{eq:goal}, where traditional methods fail.

\subsection{Direct Quantum Optimal Control}
\begin{figure}[htbp]
    \centering
    \includegraphics[width=0.99\linewidth]{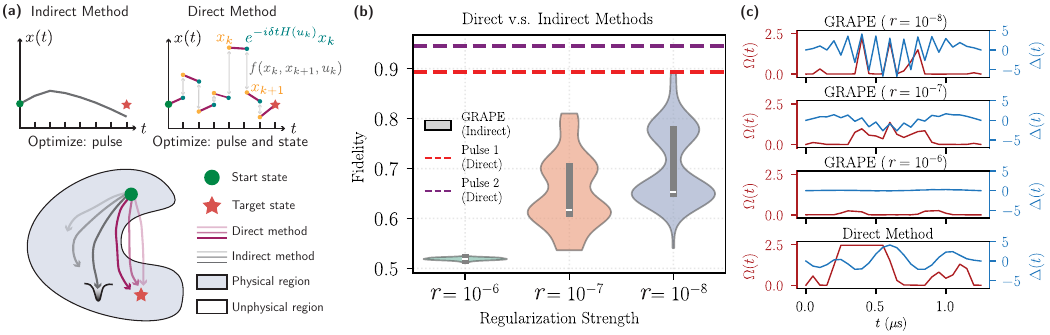}
    \caption{\textbf{Comparison between direct and indirect quantum optimal control methods.} (a) Schematic illustration of indirect v.s. direct quantum optimal control. The indirect method (left) optimizes only the control pulses $\mathbf{u}_k$, with intermediate states $x_k$ determined solely by the Schr\"odinger equation. In contrast, the direct method (right) simultaneously optimizes both the control pulses and intermediate states, treating the Schr\"odinger equation as a constraint — allowing traversal of unphysical regions during optimization (visualized as discrete jumps in state space). Therefore, they explore different loss landscape during optimization as shown in the lower panel. 
(b) Comparison of final unitary fidelities achieved by GRAPE (indirect) and the direct method, targeting a effective evolution time $\tau/J_{\text{eff}}=0.8$ under realistic machine constraints. GRAPE results are shown for three regularization strengths $r$ that penalize the second derivative to impose pulse smoothness. For each $r$, 100 GRAPE runs with random initializations are shown as violin plots; the thick bar marks the interquartile range and the white marker indicates the median. Two smooth pulse control obtained from the direct method (Pulse 1 with $T=1.2\mu\mathrm{s}$ and Pulse 2 with $T=3.6\mu \mathrm{s}$) are marked for comparison.
(c) Typical examples of optimized pulses for Rabi frequency $\Omega(t)$ (MHz) and detuning $\Delta(t)$ (MHz) using GRAPE and the direct method (Pulse 1).}
    \label{fig:fig3}
\end{figure}

To tackle these difficulties, we introduce a new quantum optimal control approach, dubbed \emph{direct quantum optimal control} (direct method hereafter), inspired by the direct trajectory optimization technique from robotics \cite{Zac_Direct_Method}. As we will demonstrate later in this section, this method explores a fundamentally different loss landscape compared to conventional quantum control techniques, making it particularly effective in the presence of stringent hardware constraints where other methods often fail.

In quantum optimal control (QOC), we start with the unitary propagator $U(t=0)=I_{2^N}$. It evolves according to the Sch\"odinger equation $\frac{\partial U}{\partial t}=-iH[\textbf{u}(t)]U,\label{eq:schodinger_dynamics}$
where $\textbf{u}(t)$ represents the control pulse trajectory. The goal of QOC is to find an optimal $\textbf{u}(t),~t\in[0,T]$ such that $U(T)$ can approximate a target unitary $U_{\mathrm{target}}$ while satisfying hardware constraints. We discretize the time interval $[0, T]$ into $n$ steps with duration $\delta t$ and denote the unitary propagator and control parameters at the $k$-th time step by $U_k$ and $\mathbf{u}_k$, respectively. Then, the QOC problem can be expressed as the following optimization problem: 
\begin{maxi!}
  {\{\mathbf{u}_k\}_{k=1}^n}
  {\mathcal{F}(U_n, U_{\mathrm{target}})}
  {\label{eq:rollout}}
  {}
  \addConstraint{U_n}{= \prod_{k=1}^{n} \exp\left[-i H(\mathbf{u}_k)\delta t\right]}
\end{maxi!}
where the objective function is the unitary fidelity, denoted $\mathcal{F}$, between $U_n$ and $U_{\text{target}}$. A widely used approach for solving this type of optimization problem is the GRadient Ascent Pulse Engineering (GRAPE) method \cite{grape}. For an initial guess of the control pulse $\mathbf{u}(t)$, one can evaluate the gradient $\partial \mathcal{F}(U_n,U_{\mathrm{target}})/\partial \textbf{u}_k$ with respect to pulse control parameters and then use the gradient ascent method to refine it iteratively. This workflow
is illustrated on the left side and lower panel of \Cref{fig:fig3}(a). In the optimal control literature, this strategy is commonly known as \emph{indirect shooting method} \cite{Betts_survey}.

While GRAPE performs well in relatively simple settings without control constraints, we find that it struggles when hardware constraints are introduced and is often prone to getting trapped in local minima. In contrast, aerospace and robotics researchers have developed alternative techniques known as \emph{direct methods} \cite{direct_collocation_intro,OC_book}, which reformulate the above optimization in a fundamentally different way: Instead of using the Schr\"odinger dynamics in \Cref{eq:schodinger_dynamics} to compute the unitary evolution $U_n$ across time steps $[t_1, t_2, \dots, t_n]$, the Schr\"odinger equation is treated as an additional constraint, and the intermediate unitary operators $U_k$ at each time step are augmented as trainable parameters. Specifically, the above optimization problem is rewritten as the following constrained nonlinear program:

\begin{maxi!}
  {\{\mathbf{u}_k, U_k\}_{k=1}^n}
  {\mathcal{F}(U_n, U_{\mathrm{target}})}
  {\label{eq:optimal_control}}
  {}
  \addConstraint{U_{k+1}}{= \exp\left[-i H(\mathbf{u}_k)\delta t\right] U_k,}{\quad \forall k \in \{1,\dots,n\}.}
\end{maxi!}

Although it may appear to be a simple reformulation, the direct method explores a fundamentally different loss landscape compared to the indirect approach. In practice, the Schrödinger equation is enforced via Lagrange multipliers. The optimization can begin with an infeasible initial guess, allowing violations of the Schrödinger dynamics that manifest as discontinuities in $U(t)$, as shown in the right panel of \Cref{fig:fig3}(a). This flexibility enables the optimizer to explore unphysical regions (e.g., unphysical jumps in $U(t)$), which can help in escaping local minima. This key distinction is further illustrated in the lower panel of \Cref{fig:fig3}(a). Another advantage of the direct method is that the framework naturally handles any constraints on its trainable parameters, which include both states and controls. This feature is particularly useful for hardware experiments, where laser controls are subject to numerous technical limitations and durations should be kept as short as possible to limit decoherence.

We compared the performance of the direct method and GRAPE in synthesizing the unitary evolution generated by the ZXZ Hamiltonian for an effective time of $\tau/J_{\mathrm{eff}} = 0.8$. 
Using the direct method, we explored two total pulse durations: $T = 1.2~\mu\mathrm{s}$ and $T = 3.6~\mu\mathrm{s}$ (we will justify these choices later). The shorter pulse, Pulse 1 (shown in \Cref{fig:fig3}(b)), achieves a unitary fidelity of 89.4\% on a three-atom chain, while the longer Pulse 2 yields a higher fidelity of 94.5\% \footnote{Here, we emphasize that unitary fidelity may be an overly stringent metric, as analog simulations typically focus on local observables and signals, which are often more robust and perform better than global measures like unitary fidelity \cite{stability_analog}.}. For GRAPE, we imposed the same hardware constraints using Lagrange multipliers and introduced an additional penalty term $r$ on the second derivative of the control pulses to enforce smoothness. For each value of $r = 10^{-6}, 10^{-7}, 10^{-8}$, we performed 100 random initializations of the controls. \Cref{fig:fig3}(b) shows the distribution of resulting unitary fidelities as violin plots, where the width reflects the density of solutions. Compared to GRAPE, the direct method achieves significantly higher fidelities, and GRAPE’s performance is highly sensitive to the choice of $r$. \Cref{fig:fig3}(c) shows typical control pulses found by both methods. For small $r$, GRAPE yields irregular control solutions, while slightly larger $r$ jeopardizes the optimization, resulting in nearly flat pulses. In contrast, the direct method finds smooth, high-fidelity pulses that meet all experimental constraints. Further details are provided in \Cref{app:direct_and_indirect_data}. In the next section, we implement the global control pulses (Pulse 1 and Pulse 2) in the experiment. As we will see, although Pulse 2 achieves higher unitary fidelity, its longer duration leads to increased decoherence, since the atoms are not trapped. In contrast, despite having lower unitary fidelity, the shorter Pulse 1 produces clearer topological edge signatures. This highlights that unitary fidelity can be an overly stringent metric, as analog simulations primarily focus on local observables, which tend to be more robust and better preserved than global quantities like unitary fidelity \cite{stability_analog}.

\section{Experimental Results \label{subsec:experimental_results}}

\begin{figure*}[htbp]
    \centering
    \includegraphics[width=1\linewidth]{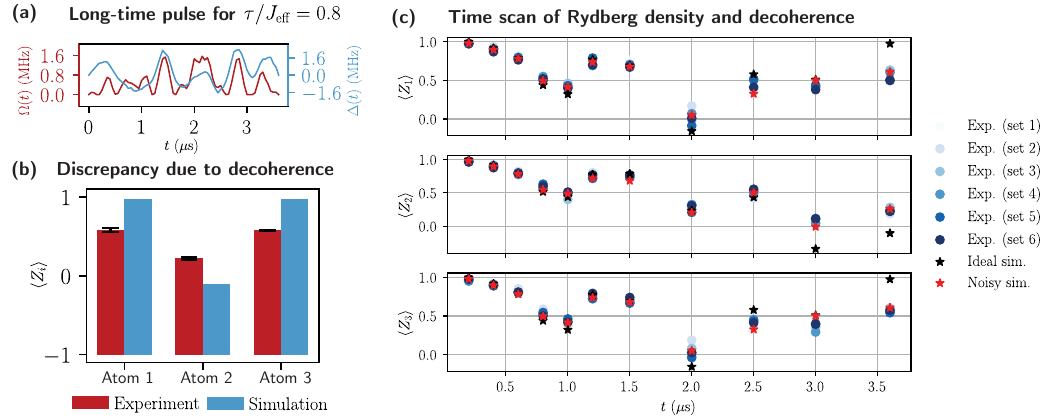}
    \caption{\textbf{Characterizing noise in experiment using time-resolved Rydberg density}.
(a) Control pulses ($\Omega(t)$ and $\Delta (t)$) obtained from the direct method (Pulse 2) targeting effective evolution time $\tau/J_{\text{eff}}=0.8$.
(b) $\langle Z_i \rangle = 1 - 2\langle n_i \rangle$ measured at the end of the pulse sequence for a three-atom chain initially prepared in the ground state. While the ideal unitary evolution predicts two peaks at atom 1 and atom 3, experimental results (red) deviate from the ideal simulation (blue), indicating the presence of noise.
(c) Time-resolved scan of ($\langle Z_i \rangle $) for all three atoms during pulse application. Six separated three-atom clusters were measured in experiments in parallel (light to dark blue). Black stars indicate noiseless simulation results, and red stars represent noisy simulations incorporating Rydberg decay (incoherent error) and drifts in $\Omega(t)$ and $\Delta(t)$ (coherent error). The apparent deviation for $t>1.5\mu \mathrm{s}$ from ideal dynamics highlights the impact of realistic experimental imperfections. Each experiment consists of 1000 measurement shots, and the size of the experimental error bars (standard deviation) is smaller than that of the markers.}
    \label{fig:fig4}
\end{figure*}

In the experiments, we apply global laser controls $\Omega(t)$ and $\Delta(t)$ to a chain of atoms separated by $d = 8.9~\mu\mathrm{m}$, slightly larger than the Rydberg blockade radius. The atoms are initially prepared in their ground states, trapped, and sorted into the desired configuration using optical tweezers. We use fluorescence imaging to preselect valid initial configurations and discard experimental runs in which atoms are not successfully imaged. In the current experimental setup \cite{aquila}, measurements are restricted to the Pauli-$Z$ basis of each atom, corresponding to imaging the Rydberg density. With this constraint, our key observable is the stability of the boundary operators $Z_L$ and $ Z_R$, which correspond to the two edge qubits. Theoretically, they satisfy the commutation relations:
\eqs{
[Z_\mathrm{L}, H_{\text{ZXZ}}] = [Z_\mathrm{R}, H_{\text{ZXZ}}] = 0,\\
[Z_\mathrm{L} Z_\mathrm{R}, H_{\text{ZXZ}}] = 0.
}
 Moreover, the persistence of stable correlations between the edge qubits, captured by the stability of $\langle Z_L Z_R\rangle$, serves as a dynamical signature of topological edge modes. We acknowledge this $Z$ basis measurement constraint is a limitation in the current platform, which can be mitigated by future hardware improvements. For example, the digital-analog Rydberg array platform \cite{analog_digital_lu,PhysRevA.107.042602} may enable measurement in arbitrary bases through single-qubit rotations prior to readout.

Although longer pulse durations offer greater control flexibility and can generally achieve higher fidelities, they are not ideal in real experiments due to decoherence from Rydberg decay. In the previous section, we found two global control pulses (Pulse 1 and Pulse 2) that  can synthesize the unitary evolution of the ZXZ Hamiltonian for an effective time of $\tau/J_{\mathrm{eff}}=0.8$. While Pulse 2 (shown in \Cref{fig:fig4}(a)) achieves higher unitary fidelity, its experimental performance deviates from the ideal pulse simulation. As illustrated in \Cref{fig:fig4}(b), ideal simulations on a three-atom chain predict high values of $\langle Z_i \rangle=1-2\langle n_i\rangle$ for the boundary atoms, whereas the experimental data show a clear decay from these ideal values. To better understand this discrepancy, we apply the global control pulse and measure $\langle Z_i\rangle$ of each atom at various intermediate time points. These results, shown as blue dots in \Cref{fig:fig4}(c), reveal increasing deviation from the ideal pulse simulation (black stars) as time progresses.

\begin{figure*}[htbp]
    \centering
    \includegraphics[width=1\linewidth]{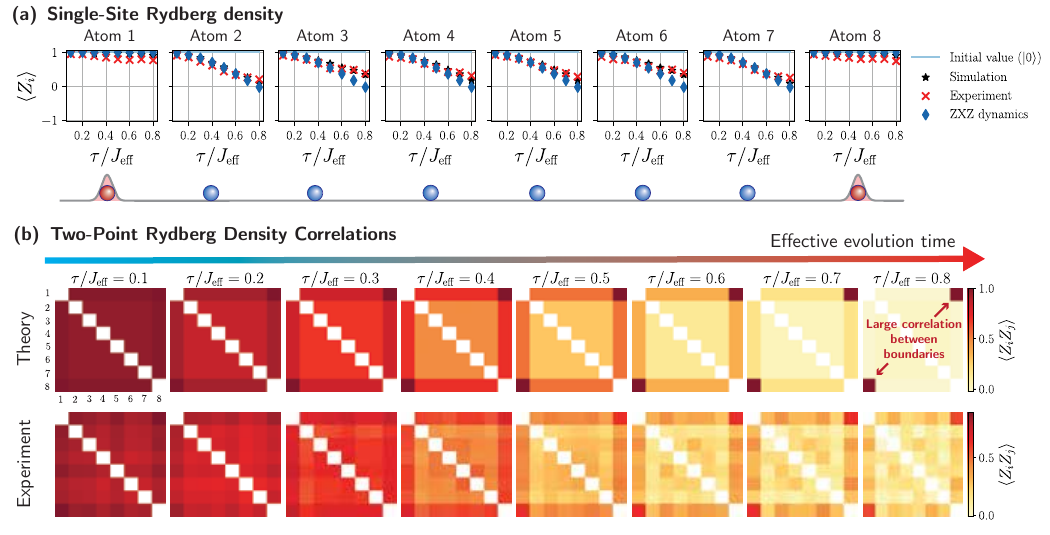}
    \caption{\textbf{Experimental signatures of topological dynamics.}
(a) Expectation values $\langle Z_i \rangle$ for each atom after applying the optimized global pulse [see \Cref{fig:fig2}(c)] to an eight-atom chain initialized in the ground state. Boundary atoms (1 and 8) retain large expectation values, while bulk atoms exhibit decay with increasing $\tau/J_{\mathrm{eff}}$, consistent with the edge-mode structure of the ZXZ Hamiltonian (blue diamonds). Experimental data (red crosses) closely agree with ideal simulations without noise (black stars), enabled by short-duration control pulses. Residual deviations at atoms 3 and 6 arise from control imperfections. See \Cref{app:additional_large_scale} for further discussion. (b) We measure connected two-point correlations $\langle Z_i Z_j\rangle$ between all pairs of atoms at different $\tau/J_{\mathrm{eff}}$. The top row shows theoretical predictions under ZXZ dynamics, which feature strong and persistent correlations between the boundary atoms. The bottom row presents experimental results, revealing similar edge correlations that reflect the presence of topological edge modes in the realized dynamics. Each experiment consists of 1000 measurement shots, and the size of the experimental error bars (standard deviation) is smaller than that of the markers.}
    \label{fig:fig5}
\end{figure*}

This discrepancy between the experimental data and the ideal simulation for the long-duration pulse can be well explained by the following Lindblad error model:
\eqs{
\partial \rho/\partial t=-\frac{i}{\hbar}[H(t),\rho]+\sum_{l}\gamma_l \left(\sigma_{l}^{-}\rho\sigma_{l}^{+}-\frac{1}{2}\left\{\sigma_{l}^{+}\sigma_{l}^{-},\rho\right\}\right)
}
where the Lindblad operators $\sigma_{l}^{-}$ capture Rydberg-state decay for each atom (incoherent error). To account for coherent errors, we incorporate calibration offsets in the Rabi frequency $\Omega(t)$ and detuning $\Delta(t)$ of the Hamiltonian $H(t)$ (see \Cref{eq:Rydberg_Hamiltonian}):
\eqs{
\Delta_{\mathrm{exp}}(t)=\Delta_{\mathrm{input}}(t)+\delta \Delta,\quad \Omega_{\mathrm{exp}}(t)=\Omega_{\mathrm{input}}(t)+(\delta \Omega+k\Omega_{\mathrm{input}}(t)),
}
where $\Delta_{\mathrm{exp}}(t)$ and $\Omega_{\mathrm{exp}}(t)$ are the experimentally realized values of detuning and Rabi frequency respectively; $\Delta_{\mathrm{input}}(t)$ and $\Omega_{\mathrm{input}}(t)$ are input control parameters; and $\delta \Delta$ and $\delta \Omega$ are constant shifts \footnote{We include a multiplicative shift term $k\Omega(t)$ in the model for $\Omega(t)$ because calibration experiments, including our own and those described in Ref.~\cite{Ceren}, indicate that the control error varies with the input values.}. Using this error model, we find the best fit with $\gamma_l=0.049$, $\delta \Delta = -0.049~\mathrm{MHz}$, $\delta \Omega = -0.032~\mathrm{MHz}$, and $k = -0.05$. The resulting noisy simulation (red stars in \Cref{fig:fig4}(c)) shows good agreement with the experimental data and is consistent with our calibration experiments.

While the coherent errors in the model can be compensated by adjusting the input parameters, the most effective strategy to mitigate incoherent errors without modifying the hardware is to employ short-duration control pulses. Indeed, \Cref{fig:fig4}(c) shows that the experimental results closely match the ideal simulations when the pulse duration is shorter than $1.5~\mu\mathrm{s}$. Based on this, we use the direct method to design a smooth control pulse with a total duration of $T = 1.2~\mu\mathrm{s}$ (Pulse 1) for $\tau/J_{\mathrm{eff}} = 0.8$. Starting from this pulse, we gradually construct smooth control pulses for $\tau/J_{\mathrm{eff}} = 0.7, 0.6, \dots, 0.1$. This family of global control pulses is visualized in \Cref{fig:fig2}(c). 

We apply this family of global laser controls to a Rydberg atom array consisting of eight atoms, initialized in their ground states and arranged into the desired spatial configuration. After the system evolves under the control pulses, we measure the Rydberg density $n_i$ (proportional to the Pauli-$Z$ operator) and two-point correlations between all pairs of atoms using fluorescence imaging. As shown in \Cref{fig:fig5}(a), the quantity $\langle Z_i \rangle$ exhibits behavior consistent with the theoretical predictions of topological SPT dynamics: the expectation values for the two boundary atoms remain stabilized, while those for the bulk atoms decay as the effective evolution time $\tau/J_{\mathrm{eff}}$ increases. As for correlations, since $[Z_\mathrm{L} Z_\mathrm{R},H_{\mathrm{ZXZ}}]=0$, the correlation between the two boundary atoms is expected to be stabilized. Since the system starts in a product state of ground states, strong correlations between the boundaries are expected to emerge and persist during the evolution under the ZXZ Hamiltonian. Our experimental results confirm this behavior, as shown in \Cref{fig:fig5}(b), where the boundary correlations follow the signature given by theoretical predictions. 

The deviations between the experimental results and ideal ZXZ dynamics are primarily attributed to the constraints of short-duration control pulses, a limitation imposed by current hardware coherence times. In \Cref{app:additional_large_scale}, we provide supplementary numerical comparisons between ideal ZXZ dynamics, short-time experimental pulses (both with and without Lindbladian noise), and a high-fidelity long-duration pulse. These results validate our interpretation. Furthermore, to demonstrate the scalability of our approach and the ``train-on-small, deploy-on-large'' paradigm, we numerically evaluate the experimental pulse performance on a 50-atom chain. While the finite trapping window of our current hardware precludes experimental realization at this scale, our simulations (detailed in \Cref{app:additional_large_scale}) reveal that the 50-atom system exhibits the same characteristic dynamical signatures observed in the 8-atom experiments. This confirms that the globally optimized control fields remain effective when extrapolated to the many-body regime.

We acknowledge that, due to current hardware limitations, measurements are restricted to the Pauli-$Z$ basis. Nonetheless, the above results provide strong evidence for the realization of the ZXZ Hamiltonian. In \Cref{app:additional_experimental}, we present additional experimental data where an initial global control pulse is applied to drive the system from the ground state to a different initial state, followed by our main control pulse. These results also show good agreement with theoretical predictions. However, a drawback of this approach is that prepending an extra pulse increases the total experimental time and inevitably introduces additional noise due to atomic motion, which is not ideal. In the future, we look forward to the expansion of hardware capabilities to enable measurements in arbitrary bases, allowing for full ansatz-free Hamiltonian learning \cite{HL_Gu,robust_learning,rss,ansatz_free_learning} and direct verification \cite{higher_correlations} of the effective Hamiltonian. 
Despite current limitations, our experiment represents a significant first step toward synthesizing \emph{three-body} interactions beyond the native hardware Hamiltonian on analog quantum simulators using quantum optimal control. We anticipate that this work will inspire a wide range of future applications.

\section{Discussion and Outlook}

Our work provides both a rigorous theoretical foundation and an experimental pathway for universal dynamics in globally controlled analog quantum simulators. By proving a necessary and sufficient condition for universality and introducing direct quantum optimal control, we demonstrate the engineering of effective multi-body Hamiltonians beyond native interactions, including topological dynamics in Rydberg arrays. This synergy highlights the expressive power of analog platforms under global control and establishes a versatile framework for quantum simulation. In addition, our matrix representation tools offer a systematic approach to classifying dynamical Lie algebras \cite{classification_lie}, while the numerical control techniques developed here have broad applicability across diverse experimental platforms \cite{classification_lie,global_superconducting_architecture,belt_superconducting,tunable_superconducting}.

To fully realize the potential of programmable quantum simulators, we identify three key directions for future exploration: (1) efficient pulse-level calibration and noise learning, (2) large-scale quantum optimal control, and (3) robust analog quantum simulation. In experimental settings, control noise and pulse imperfections are unavoidable. While our current method accounts for known error models during optimization, unknown or drifting errors may still degrade performance. It is therefore important to develop tools that use our approach as a warm start for pulse design, while incorporating real-time experimental feedback to calibrate and refine control sequences on the fly. Such an online optimization strategy could benefit from randomized measurement protocols \cite{classical_shadow,randomized_measurement,rss}, which offer scalable quantum-classical data interfaces. These measurement results can also be used to infer effective Hamiltonians, enabling both improved pulse calibration and direct verification of synthesized interactions \cite{ansatz_free_learning,robust_learning,HL_Gu}.

A second direction is the extension of our approach to large-scale quantum systems beyond classical simulation limits. For example, it would be valuable to engineer exotic Hamiltonians with multi-body interactions in two-dimensional systems. Several strategies could support this goal. Recent developments in classical simulation tools, such as Pauli propagation \cite{pauli_prop,j1gg-s6zb,PhysRevLett.133.120603} and tensor network methods \cite{PhysRevResearch.6.013326,PRXQuantum.5.010308,bp_TN,2025arXiv250711424R}, enable efficient estimation of local observables in noisy, high-dimensional systems. Incorporating these tools into quantum optimal control frameworks could produce high-quality control pulses offline. Then, one can further refine them with online optimal control. Alternatively, machine learning models could be trained to learn the smooth extrapolation of optimal global pulses found on different small system sizes to larger system sizes \cite{PhysRevX.12.011059}. Another promising approach is to adopt a divide-and-conquer strategy, in which control protocols optimized for small subsystems are combined to enable large-scale simulation \cite{10.1145/3618260.3649722}.

Finally, although analog quantum simulation lacks the full protection of fault-tolerant architectures, it demonstrates inherent robustness for local observables \cite{stability_analog}. Understanding the sensitivity of the new effective analog simulation scheme to noise remains an important direction. A natural question is whether ideas from quantum error correction can be incorporated into analog protocols to further enhance their resilience. Recent advances have shown that logical subspaces engineered via error detecting codes can suppress coherent errors by introducing a spectral gap between the analog subspace and higher-lying states \cite{2024arXiv241207764C}. As the gap increases, the system becomes more robust to coherent control errors, which resembles the threshold behavior of quantum error correction. Extending such ideas to pulse-level control may offer a promising path toward scalable and noise-resilient analog quantum technologies.

From a quantum optimal control perspective, our approach to effective analog quantum simulation via variational global pulse control also raises fundamental theoretical questions. One key question is how to define and quantify the “distance” between a machine-native Hamiltonian and a desired target Hamiltonian \cite{Geometric_speed,QSL,geometric_lower_bound}. While different native Hamiltonians may be capable of realizing the same target dynamics, the difficulty or overhead required can vary significantly. This notion of distance could help classify control Hamiltonians into equivalence classes under polynomial overhead. It also prompts the broader question of whether all universal Hamiltonians are equivalent in this sense. We believe these new results will stimulate further discussions in quantum many-body physics, quantum optimal control, and quantum simulation.

\section{Methods}
\subsection{Proof ideas for the main theorem}

In this section, we outline the tools, main ideas and essential lemmas for proving the main theorems with details can be found in \Cref{appendix:Universality}. To establish this result, we employ the framework of the dynamical Lie algebra (DLA), a tool originating in quantum control theory \cite{dAlessandro2021,2008arXiv0803.1193D}, and more recently applied to variational quantum algorithms \cite{Lie_barrenplateau,classification_lie,Shengyu_DLA,analogQML} as well as matrix representation theory \cite{KHANEJA200111}. The dynamical Lie algebra (DLA) is formally defined in \Cref{def:DLA_method}.

\begin{definition}[Dynamical Lie Algebra] \label{def:DLA_method}
   Given a control system with generators $\mathcal{G} = \{iH_1,iH_2,\dots,iH_l\}$, the
Dynamical Lie Algebra (DLA) $\mathfrak{g}$ is the subalgebra of
$\mathfrak{su}(d)$ spanned by the repeated nested commutators of
the elements in $\mathcal{G}$, i.e.
$$\mathfrak{g}=\mathrm{span}_{\mathbb{R}}\langle iH_1,iH_2,\dots,iH_l\rangle_{\mathrm{Lie}}\subseteq \mathfrak{su}(d)\;,$$
where $\mathrm{span}_{\mathbb{R}}\langle iH_1,iH_2,\dots,iH_l\rangle_{\mathrm{Lie}}$ denotes the Lie closure under nested commutators, and $d$ is the dimension of the Hilbert space.
\end{definition}

Intuitively, given a set of control Hamiltonians $\{H_1,H_2,\dots,H_{l}\}$, one can construct the linear space $\mathcal{V} = \mathrm{span}_{\mathbb{R}}\langle \cup_{j=1}^l i H_j\rangle_{\mathrm{Lie}}$, defined as the span of all possible nested commutators,
$[\cdots[[iH_i, iH_j], \dots], iH_k]$, over the real numbers. It can be shown that for any anti-Hermitian operator $L \in \mathcal{V}$, the corresponding unitary $e^{-L}$ can be approximated to arbitrary precision $\epsilon$ by designing an appropriate control sequence $\mathbf{u}(t)$ (see \Cref{lemma:Linear_Combo_Generators,lemma:Commutator_Generators} in \Cref{appendix:Notations_Preliminaries}). Therefore, a dynamical system governed by \Cref{eq:time_dependent_Hamiltonian} is said to be universal if $\mathrm{span}_{\mathbb{R}}\langle \cup_{j=1}^l i H_j\rangle_{\mathrm{Lie}} = \mathrm{span}_{\mathbb{R}}\langle i\mathbb{P}_N\backslash\mathbb{I}\rangle_{\mathrm{Lie}}$, where $\mathbb{P}_N\backslash \mathbb{I}$ denotes the set of all $N$-qubit Pauli operators $\mathbb{P}_N$ excluding the identity operator $\mathbb{I}$. This tool enables us to study the expressivity of dynamical systems without considering the details of $\mathbf{u}(t)$ and instead focusing on the important control Hamiltonians $H_\alpha$.

A well-known universal gate set is generated by single-qubit $X$- and $Z$-rotations plus arbitrary nearest-neighboring-qubit interactions~\cite{Barenco_1995_UQC}.
Formally speaking, the generators $X_j,Z_j$ and $Z_jZ_{j+1}$ for all site $j$ span the whole $\mathfrak{su}(2^N)$ Lie algebra, i.e., $\mathrm{span}_{\mathbb{R}}\langle \cup_j iX_j, \cup_j iZ_j,\cup_j iZ_jZ_{j+1}\rangle_{\mathrm{Lie}} = \mathrm{span}_{\mathbb{R}}\langle i\mathbb{P}_N\backslash\mathbb{I}\rangle_{\mathrm{Lie}}$.
Surprisingly, we found that the uniform controls, as in \Cref{eq:uniform_global}, allow for almost individual controllability of local fields and interactions, except for a remaining reflection symmetry that couples qubits at mirrored positions (e.g., the first and last qubits).
Given the reflection operation $\mathcal{R}$ yielding $\mathcal{R}(X_j/Z_j) = X_{(N-j)+1}/Z_{(N-j)+1}$, the uniform control Hamiltonians generate the following DLA:
\eqs{ \label{eqn:universal_symm_algebra}
\mathrm{span}_{\mathbb{R}}\langle iH_{X},iH_{Z},iH_{ZZ}\rangle_{\mathrm{Lie}}
=\mathrm{span}_{\mathbb{R}}\langle \cup_{j} i\left( X_j+\mathcal{R}(X_j)\right),\cup_{j} i\left( Z_j + \mathcal{R}(Z_j)\right),
&\cup_j i\left(Z_{j}Z_{j+1} + \mathcal{R}(Z_{j})\mathcal{R}(Z_{j+1}) \right)\rangle_{\mathrm{Lie}}\;,
}
where the reflection-symmetric controls are illustrated in \Cref{fig:fig1} (a) without the green control fields. 
Next, we introduce an additional control term $H_{\mathrm{break}}$ that breaks the lattice reflection symmetry, i.e. $RH_{\mathrm{break}}R^{-1}\neq H_{\mathrm{break}}$ where $R$ denotes the matrix representation of lattice reflection. The specific form of $H_{\text{break}}$ is not important—it could represent interactions, local rotations, or other types of controls. Then we use matrix representation theory to show $H_{\text{break}}$ makes the whole system universal. First, we show the following observation (\Cref{prop:fix-span})
\eqs{
\mathfrak{l}\coloneqq \{X\in\mathfrak{su}(2^N):[X,R]=0\}=\mathrm{span}_{\mathbb{R}}\langle iH_{X},iH_{Z},iH_{ZZ}\rangle_{\mathrm{Lie}}.
}
Decompose \(\mathcal H=\mathcal H_+\oplus_v\mathcal H_-\) into \(R\)–eigenspaces with dimensions \(d_\pm\).
An operator commutes with \(R\) iff it is block–diagonal in this basis (\Cref{lem:block-diag}).
Imposing skew–Hermiticity and tracelessness yields the decomposition of the representation matrix.
\eqs{
\mathfrak l\cong\mathfrak{su}(d_+)\oplus_m \mathfrak{su}(d_-) \oplus_m \mathfrak u(1)_{\mathrm{rel}},
}
with the relative phase $\mathfrak{u}(1)_{\mathrm{rel}}$ central in \(\mathfrak l\) (\Cref{prop:l-structure}). In the above, we use $\oplus_v$ to denote the direct sum of the vector space (Hilbert space) and $\oplus_m$ for the direct sum of the matrix space. For any $X\in \mathfrak{su}(2^N)$, its matrix representation in the $R$-diagonal basis can be written as $X=X_{+}+X_{-}$, where $RX_{+}R^{-1}=X_{+}$ and $RX_{-}R^{-1}=X_{-}$. And all $X_-$ span the space $\mathfrak{m}\coloneqq \{X: RXR^{-1}=-X\}$, which indicates $\mathfrak{su}(2^N)=\mathfrak{l}\oplus_{m}\mathfrak{m}$ (\Cref{lem:theta-proj}). Therefore, $H_{\text{break}}=(H_{\text{break}})_{+}+(H_{\text{break}})_{-}$, with $(H_{\text{break}})_{+}$ is block-diagonal and $(H_{\text{break}})_{-}$ is block-off-diagonal. It is crucial that with uniform global control, we have universality in $\mathfrak{l}$. Then in \Cref{prop:irreducible-m}, we show for any nonzero $M\in \mathfrak{m}$, 
\eqs{
\mathrm{span}_{\mathbb R}\Big\langle\,\ad_{K_t}\cdots\ad_{K_1}(M):\ t\ge0,\ K_s\in\mathfrak l\,\Big\rangle=\mathfrak m,
}
where $\ad_{K_i}(M)=[K_i,M]$. It means that there exists one way to do commutation on $M$, by elements in $\mathfrak{l}$, and linearly combine the resulting operators, to generate the full $\mathfrak{m}$. Then for any $H_{\text{break}}$ that satisfies $RH_{\text{break}}R^{-1}\neq H_{\text{break}}$, we can show $(H_{\text{break}})_{-}\in\mathfrak{m}\neq 0$. With the above tools, we finally show the universality in the reflection-symmetric subalgebra (\Cref{eqn:universal_symm_algebra}) enables the addition of \textit{any} $H_{\mathrm{break}}$ to generate the full universal gate set as:
\eqs{
\mathrm{span}_{\mathbb{R}}\langle iH_{X},iH_{Z},iH_{ZZ},iH_{\mathrm{break}}\rangle_{\mathrm{Lie}}
=\mathfrak{su}(2^N)\;,
}
which demonstrates the sufficiency part in \Cref{qubit:UQC_chain}. 
The necessity part is straightforward, as the reflection symmetry will decompose the Hilbert space into different sectors, which prohibits universal quantum computation.

\begin{remark} \label{remark:generalization_thm_1}
Although Theorem~\ref{qubit:UQC_chain} is stated for homogeneous nearest-neighbor single-Pauli interactions of the form $\sum_{i} P_i \otimes P_{i+1}$, the result extends to arbitrary nearest-neighbor interactions. 
In particular, universality holds provided the interaction coefficients do not lie in degenerate regimes that induce additional symmetries, as formalized in \Cref{corollary:generalization_interaction,corollary:generalization_interaction_2} in Appendix~\ref{appendix:Universality}.
\end{remark}

\subsection{Universality for fermionic and bosonic systems}

For fermionic and bosonic simulations, a set of physical controls is termed universal if it can generate the entire unitary group acting on the corresponding Hilbert space. 
Experimentally, all relevant physical operations preserve particle number, so we can restrict our analysis to fixed-particle-number Hilbert spaces of fermions and bosons, denoted by $\mathcal{H}_f$ and $\mathcal{H}_b$, respectively.
By Theorems 2 and 3 of Ref.~\cite{Oszmaniec_2017}, free fermion/boson operations plus a uniform Hubbard-type interaction generate all unitaries in $\mathcal{H}_f/\mathcal{H}_b$, so it suffices to consider the realization of those controls.

For simplicity, we focus on a one-dimensional optical superlattice with spinless fermions or bosons, and the generalization to higher-dimensional superlattices and spinful particles is straightforward (see discussions in \Cref{appendix:Universal_fermion}).
We use $c_i^\dagger$ and $c_i$ to denote the fermionic/bosonic creation and annihilation operators at site $i$, respectively, and the corresponding particle number operator is denoted by $n_i = c_i^\dagger c_i$.
The uniform Hubbard interaction is given by $H_U = \sum_j n_j n_{j+1}$.
Moreover, the double periodicity in the optical trapping potential enables us to control the hopping and chemical potential terms globally in an even-odd alternating pattern.
This yields the following basic control Hamiltonians: $H_{\mathrm{even}}^{(\mathrm{hop})}=\sum_{j}\left(c_{2j}^{\dagger}c_{2j+1}+\mathrm{h.c.}\right)$, $H_{\mathrm{odd}}^{(\mathrm{hop})}=\sum_{j}\left(c_{2j+1}^{\dagger}c_{2j+2}+\mathrm{h.c.}\right)$, $H_{\mathrm{even}}^{(\mu)}=\sum_{j}n_{2j}$, and $H_{\mathrm{odd}}^{(\mu)}=\sum_{j}n_{2j+1}$.
Therefore, the system is described by:
\eqs{
H(t)=-J_{\mathrm{even}}(t)H_{\mathrm{even}}^{\mathrm{(hop)}}-J_{\mathrm{odd}}(t)H_{\mathrm{odd}}^{(\mathrm{hop})}+\mu_{\mathrm{even}}(t)H_{\mathrm{even}}^{(\mu)}+\mu_{\mathrm{odd}}(t)H_{\mathrm{odd}}^{(\mu)}+U H_{U}\;,\label{eq:superlattice}
}
where $J_{\mathrm{even/odd}}(t)$ and $\mu_{\mathrm{even/odd}}(t)$ are the corresponding time-dependent control pulses.

As discussed earlier, to prove universality, provided the uniform Hubbard interaction $H_U$, it suffices to demonstrate that the other control Hamiltonians in \Cref{eq:superlattice} can generate all free fermion/boson operations.
To achieve that, as detailed in \Cref{appendix:Universal_fermion}, we first show that they realize the independent control over nearest-neighbor hopping and on-site chemical potential terms, as given by: 
\eqs{
\mathrm{span}_{\mathbb{R}}\langle iH_{\mathrm{even}}^{(\mathrm{hop})},iH_{\mathrm{odd}}^{(\mathrm{hop})},iH_{\mathrm{even}}^{(\mu)},iH_{\mathrm{odd}}^{(\mu)}\rangle_{\mathrm{Lie}}=
\mathrm{span}_{\mathbb{R}}\langle \cup_j i(c_j^\dagger c_{j+1} + \mathrm{h.c.}), \cup_j i n_j
\rangle_{\mathrm{Lie}}\;.
\label{eq:free_fermion_equivalence}
}
Then, using \Cref{lemma:fermion_H_free_generators} in \Cref{appendix:Universal_fermion}, we equate the DLA on the right-hand side of \Cref{eq:free_fermion_equivalence} to that of free operations, which completes the proof (see \Cref{thm:universal_1d_fermion,thm:universal_1d_boson}). 

For spinful particles, the spin degree of freedom can be treated as two internal modes. 
Using a similar method (see~\Cref{thm:universal_1d_spinful_fermion}), we show that two additional control fields are required to achieve universality: (1) a uniform global spin-$X$ magnetic field and (2) a global spin-$Z$ magnetic field with a linear gradient. 
These extend the previously established universal control set for two-site systems to the full optical superlattice \cite{dwave}. We anticipate that our framework sets the stage for ultracold atom platforms to probe richer quantum phenomena in both fundamental and applied contexts.


\subsection{Experimental setup and Rydberg atom arrays}

The physical platform we use is a neutral atom array trapped in optical tweezers designed by QuEra Computing \cite{aquila}. Qubits are encoded in the electronic ground state $|g\rangle=|5S_{1/2}\rangle$ as $|0\rangle$ and in the Rydberg excited state $|r\rangle=|70S_{1/2}\rangle$ as $|1\rangle$ of the $^{87}\text{Rb}$ atom. Transitions between these states are driven by a two-photon process using laser beams at 420 nm and 1013 nm, while the atoms interact via van der Waals interactions between Rydberg states. The full control Hamiltonian is given by
\eqs{ \label{eq:Rydberg_Hamiltonian}
H(t)/\hbar = &\frac{\Omega(t)}{2} \sum_{l} \left( |g_l\rangle \langle r_l| + |r_l\rangle \langle g_l| \right) - \Delta(t) \sum_{l} |r_l\rangle \langle r_l| \\
&+ \sum_{j<l} V_{jl} |r_j\rangle \langle r_j| \otimes |r_l\rangle \langle r_l| ,
}
where $\Omega(t)$ is the global Rabi frequency, $\Delta(t)$ is the global detuning, $V_{jl} = \mathrm{C}_6 / |\vec{x}_j - \vec{x}_l|^6$ describes the van der Waals interaction between atoms $j$ and $l$, with $\mathrm{C}_6 = 862{,}690 \times 2\pi$ MHz·$\mu$m$^6$ and $\hbar$ is the reduced Planck's constant. With the qubit encoding described above, this Hamiltonian closely resembles a transverse-field Ising model with long-range, algebraically decaying interactions.

This qubit encoding is commonly referred to as the analog mode of Rydberg atom arrays, in contrast to the digital mode, which encodes qubits in two hyperfine ground states \cite{995,rydberg_chemistry,2025arXiv250118554E}. In the analog mode, atoms are initially trapped in their ground states but become untrapped once global control pulses excite them to Rydberg states. The residual thermal motion causes fluctuations in atomic positions, which in turn lead to decoherence through Rydberg state decay. As a result, most analog quantum simulations with Rydberg atom arrays have been performed within the blockade regime \cite{semeghini_spinliquid,quantum_scar}, where neighboring atoms are separated by less than the blockade radius $R_b = (C_6 / \Omega_{\mathrm{max}})^{1/6}$ \cite{blockade,Fast_Rydberg_gate_2000,quantum_ising_blockade}. In the blockade regime, the nearest-neighbor interactions are much stronger than the Rabi frequency. Therefore, simultaneous excitations of neighboring atoms are suppressed and nearest-neighbor interactions are effectively eliminated. This configuration is more robust to atom position fluctuations.

However, using the theoretical tools we developed, we show that although the three-body $ZXZ$ Hamiltonian cannot be engineered within the typical blockade regime, $H_{ZXZ}$ can be realized outside it (see \Cref{appendix:limitation_blockade}).  In our experiment, the maximum Rabi frequency is $\Omega_{\mathrm{max}} = 2.4~\mathrm{MHz}$, yielding a blockade radius of $R_b = 8.37~\mu\mathrm{m}$. We arrange atoms in a one-dimensional chain with a spacing of $d = 8.9~\mu\mathrm{m}$, just outside the blockade radius. 

Achieving this goal requires overcoming several key experimental challenges. The first is the decoherence error from the finite residual temperature of the atoms. As we will discuss in later sections, this noise source can be effectively mitigated by designing short-duration control pulses, without modifying the hardware. In addition to thermal motion, we must also respect strict constraints on the laser control parameters, as detailed in \Cref{app:additional_experimental}. These combined challenges call for the design of global control pulses that satisfy all experimental constraints, are smooth, and remain short in duration to minimize decoherence. Together, these factors highlight the experimental complexity and the need for fine-tuned quantum optimal control.

\section{Acknowledgements}
We thank Ceren B. Dag for the inspiring discussions on noise modeling of the Rydberg atom arrays, benchmarking and experimental setups. We also thank Hsin-Yuan (Robert) Huang, Nik O. Gjonbalaj, Henning Schloemer, Muqing Xu, Majd Hamdan, Francisco Machado, Shengtao Wang, Jun Yang, Zhengwei Liu, and Richard Allen for many helpful discussions. We are also grateful to Pengfei Zhang for pointing out an important caveat in the theorems after the first version of the preprint, and to Soonwon Choi for inspiring discussions that helped strengthen the main theorem. S.F.Y. and H.Y.H. thank DARPA through their IMPAQT Program (HR0011-23-3-0023) and DOE through the QUACQ program (DE-SC0025572). A.M.G. acknowledges support from the NSF through the Graduate Research Fellowships Program (DGE 2140743) and from the Theodore H. Ashford Fellowships in the Sciences. 
L.C. and A.J. were supported in part by the Army Research Office MURI Grant W911NF-20-1-0082.
A.J.G. was supported by an appointment to the Intelligence Community Postdoctoral Research Fellowship Program at University of Chicago administered by Oak Ridge Institute for Science and Education (ORISE) through an interagency agreement between the U.S. Department of Energy and the Office of the Director of National Intelligence (ODNI). F.T.C. is funded in part by the STAQ project under award NSF Phy-232580 and in part by the US Department of Energy Office of Advanced Scientific Computing Research, Accelerated Research for Quantum Computing Program. 
A.T. and Z.M. are supported, in part, by the National Science Foundation (NSF), and this material is based upon work conducted within the Center for Quantum Computing and Information Technologies (QCiT), which is an industry-university collaborative research center at Carnegie Mellon University under
NSF Award No. 2310949.
A.T. is the CEO and founder of Harmoniqs, Inc., which provided advice and support for the control design part of this work. F.T.C. is the Chief Scientist for Quantum Software at Infleqtion and an advisor to Quantum Circuits, Inc.

Author contributions: H.Y.H. and S.F.Y. conceived the idea and designed the project. H.Y.H. and L.C. developed the theory, and L.C. proved the main theorems. H.Y.H. proposed the information scrambling application and conducted the analysis. H.Y.H. and A.M.G. set up and carried out the experiments, and A.M.G. processed the experimental data. A.T. and A.J.G. developed the pulse optimization tools. Z.M., F.T.C., A.J., and S.F.Y. supervised the project. All authors contributed equally to the writing of the manuscript.

\section{Data Availability}
Source data are available for this paper. All other data supporting the plots within this paper and other study findings are available from the corresponding author upon reasonable request.

\section{Code Availability}
The code used in this study is available from the corresponding author upon request.

\bibliographystyle{apsrev4-2} 
\bibliography{refs}

\clearpage
\appendix
\begin{center}
	\noindent\textbf{Supplementary Material}
	\bigskip
		
	\noindent\textbf{\large{}}
\end{center}
\section{Notation and Preliminaries \label{appendix:Notations_Preliminaries}}

In this work, we use $X,Y,Z$ to denote the Pauli matrices. The $N$-fold tensor products of single-qubit Pauli matrices along with the identity matrix are represented by $\mathbb{P}_N$:
\begin{equation}
    \mathbb{P}_N \coloneqq \left\{ \bigotimes_{i=1}^N P_i: P_i = I,X,Y,Z
 \right\}\;.
\end{equation}
The subscript $i$ indicates that the operator is applied to the $i$-th qubit. 
We typically omit the identity matrix when writing a multi-qubit Pauli operator, such as $I_1Z_2Z_3I_4 \to Z_2Z_3$.

We use $c_i^\dagger$ and $c_i$ ($b_i^\dagger$ and $b_i$) to represent the fermionic (bosonic) creation and annihilation operators at site $i$, respectively. They satisfy the following canonical commutation relations:
\begin{equation} \label{eqn:canonical_comm_relations}
    \{c_i,c_i^\dagger\} = \delta_{ij},\quad \{c_i^\dagger,c_j^\dagger\} = \{c_i,c_j\} = 0\;,\quad 
    [b_i,b_i^\dagger] = \delta_{ij},\quad [b_i^\dagger,b_j^\dagger] = [b_i,b_j] = 0\;,
\end{equation}
where $\{A,B\} = AB+BA$ and $[A,B] = AB-BA$ are the anticommutator and commutator, respectively. 
The following commutation relations are frequently used in our proof of universality:
\eqs{ \label{eqn:useful_comm_relations}
&[AB,CD]=A[B,C]D+AC[B,D]+[A,C]DB+C[A,D]B\;,\\
&[AB,CD]=A\{B,C\}D-AC\{B,D\}+\{A,C\}DB-C\{A,D\}B\;.
}

\begin{table}[h]
\centering
\caption{Summary of mathematical notations.}
\label{tab:notations}
\begin{tabular}{cc}
\toprule
\textbf{Symbol} & \textbf{Meaning} \\
\midrule
$X_j, Z_j$      & Pauli $X$ and $Z$ operators acting on qubit $j$ \\
$P_i$           & A single-qubit Pauli operator ($X$, $Y$, or $Z$) \\
$\mathbb{P}_N$ & $N$-fold tensor products of Pauli matrices\\
$H_X, H_Z$      & Global $X$ and $Z$ fields, $\sum_j X_j$, $\sum_j Z_j$ \\
$R$             & Matrix representation of the lattice reflection operator \\
$\mathcal{R}$  & Lattice reflection operation \\
$\mathcal{G}$          & The generating set of the dynamical Lie algebra \\
$\mathfrak{g}$          & Dynamical Lie algebra \\
$\oplus_{v}$ & Direct sum of the vector space (Hilbert space)\\
$\oplus_{m}$ & Direct sum of the matrix space\\
$\mathcal{H}$ & Hilbert space: a vector space with inner product.\\
$\text{End}(\mathcal{H})$ & Endomorphism on the space $\mathcal{H}$: operators that map $\mathcal{H}$ to itself.\\
$\mathrm{Ad}_H$ & Conjugated by $H$, i.e., $\Ad_H(O) = HOH^{-1}$\\
$\mathrm{ad}_H$ & commutated by $H$, i.e., $\ad_H(O) = [H,O]$
\\
\bottomrule
\end{tabular}
\end{table}

We summarize the essential mathematical notations throughout the paper in \Cref{tab:notations}. In the quantum control literature, when studying the expressivity of a given set of control pulses, people explore its attainable effective evolution based on the Baker-Campbell-Hausdorff (BCH), which is defined as follows~\cite{hall2003liegroup,rossmann2006liegroup}:
\begin{definition}[The Baker-Campbell-Hausdoff formula] \label{def:BCH_formula}
    Given two elements $A,B$ in a Lie algebra, the Baker-Campbell-Hausdorff (BCH) formula gives the element $C$ which solves the following equation:
    \begin{equation}
        e^A e^B = e^C\;,\notag
    \end{equation}
    in case this equation has a solution.
    To the first four orders of the commutator, the explicit formula of $Z$ is:
    \begin{equation} \label{eqn:BCH_Z_XY}
        C(A,B) = A+B + \frac{1}{2}[A,B] + \frac{1}{12}\left([A,[A,B]]+ [B,[B,A]]\right) -\frac{1}{24}[B,[A,[A,B]]]+\cdots\;,
    \end{equation}
    where $\cdots$ represents the higher-order terms.
\end{definition}

When specific to quantum unitary evolution, $A,B$ and $C$ take the form of anti-Hermitian operators $iH$ for some Hermitian $H$.
This ensures both sides of \Cref{eqn:BCH_Z_XY} are consistent.
Given a quantum control system with control Hamiltonians $H_1,H_2,\cdots,H_l$, we can define a generating set $\mathcal{G} = \{iH_1,iH_2,\dots,iH_l\}$. 
Then, the physically obtainable evolution is determined by the \textit{Dynamical Lie Algebra} (DLA) generated by $\mathcal{G}$, which is defined by:
\begin{definition}[Dynamical Lie Algebra] \label{def:DLA}
   Given a control system with generators $\mathcal{G} = \{iH_1,iH_2,\dots,iH_l\}$, the
Dynamical Lie Algebra (DLA) $\mathfrak{g}$ is the subalgebra of
$\mathfrak{su}(d)$ spanned by the repeated nested commutators of
the elements in $\mathcal{G}$, i.e.
$$\mathfrak{g}=\mathrm{span}_{\mathbb{R}}\langle iH_1,iH_2,\dots,iH_l\rangle_{\mathrm{Lie}}\subseteq \mathfrak{su}(d)\;,$$
where $\mathrm{span}_{\mathbb{R}}\langle iH_1,iH_2,\dots,iH_l\rangle_{\mathrm{Lie}}$ denotes the Lie closure under nested commutators, and $d$ is the dimension of the Hilbert space.
\end{definition}
In \Cref{def:DLA}, the $k$-th order nested commutator of $\mathcal{G}$ is defined by $[G_1,[G_2,\dots,[G_{k-1},G_k]]]$ for $G_1,G_2,\dots,G_k \in \mathcal{G}$. The DLA $\mathfrak{g}$ contains all linear combinations (with real coefficients for anti-Hermicity) of those nested commutators with arbitrary orders.
For a specific task like quantum computation or quantum simulation, there is a largest subalgebra of $\mathfrak{su}(d)$ one can attain.
If $\mathfrak{g}$ equals this subalgebra, it is termed \textit{universal}, as detailed in \Cref{appendix:Universality,appendix:Universal_fermion}.  

To understand why $\mathfrak{g}$ is the attainable evolution given the control $\mathcal{G}$, we first notice that there are two basic operations generating the algebra $\mathfrak{g}$: (1) linear combination and (2) Lie bracket of arbitrary two elements in $\mathfrak{g}$.
Starting with an initial algebra $\mathfrak{g}_0\equiv \mathcal{G}$, we can construct an enlarged algebra $\mathfrak{g}_1$ by adding linearly independent operators obtained from linear combining and commuting any two elements in $\mathfrak{g}_0$.
Then we can repeat the process for $\mathfrak{g}_1$ to obtain $\mathfrak{g}_2$, until no non-trivial new elements can be found. This gives the final closed algebra $\mathfrak{g}$.
From the above analysis, to conclude that $\mathfrak{g}$ is attainable by the control in $\mathcal{G}$, it is sufficient to prove that the evolutions under (1) and (2) are realizable by the physical system.
This is done in the proofs of \Cref{lemma:Linear_Combo_Generators,lemma:Commutator_Generators} below.
The alternative proofs using the Trotter-Suzuki formula~\cite{Suzuki1976,Suzuki1977} can be found in standard textbooks on Lie algebras, such as Theorem 5.16 in \cite{sagle1986introduction}.

\begin{definition}[Repertoire of unitary dynamics \cite{deutsch}] \label{def:Repertoire}
Given a set of unitaries $\{U_i\}_i$, the repertoire of the unitary dynamics generated by this set is the collection of all unitaries that can be approximated to arbitrary accuracy using sequences of $\{U_i\}_i$.
\end{definition}

\begin{lemma}
\label{lemma:Linear_Combo_Generators}
    Consider a repertoire of unitary dynamics on a finite-dimensional Hilbert space. If the generated dynamics of a pair of Hermitian operators (Hamiltonians) $H_1$ and $H_2$ are in the repertoire, then every operation or unitary dynamics generated by $\alpha H_1+\beta H_2$ is also in the repertoire.
\end{lemma}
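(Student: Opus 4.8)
The plan is to establish Lemma~\ref{lemma:Linear_Combo_Generators} using the Lie--Trotter product formula, which is the standard device for synthesizing the evolution generated by a sum of Hamiltonians out of the evolutions generated by the individual terms. Concretely, I would start from the identity
\eqs{
e^{-it(\alpha H_1 + \beta H_2)} = \lim_{n\to\infty}\left(e^{-i\tfrac{t\alpha}{n}H_1}\,e^{-i\tfrac{t\beta}{n}H_2}\right)^{n}\;,
}
which holds in operator norm on a finite-dimensional Hilbert space for any Hermitian $H_1,H_2$ and any real scalars $\alpha,\beta$. Since the repertoire of dynamics contains $e^{-is H_1}$ and $e^{-is H_2}$ for all real $s$ (in particular for $s = t\alpha/n$ and $s = t\beta/n$), and is closed under composition (concatenating pulse sequences), each finite product on the right-hand side is an element of the repertoire. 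The desired unitary $e^{-it(\alpha H_1+\beta H_2)}$ is then obtained as a norm limit of repertoire elements, which matches the definition of approximability to arbitrary precision $\epsilon$ used throughout the paper: given $\epsilon$, choose $n$ large enough that the Trotter error is below $\epsilon$ in the relevant norm.

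The key steps, in order, are: (i) recall the Lie--Trotter formula and its finite-dimensional convergence, including an explicit first-order error bound such as $\bigl\|e^{-it(\alpha H_1+\beta H_2)} - (e^{-i\frac{t\alpha}{n}H_1}e^{-i\frac{t\beta}{n}H_2})^n\bigr\| = \mathcal{O}(t^2(\|\alpha H_1\|+\|\beta H_2\|)^2/n)$, so that the number of Trotter slices needed scales polynomially in $1/\epsilon$; (ii) observe that rescaling a Hamiltonian $H_j$ by a real constant and running it for a short time is already available, since by hypothesis every unitary generated by $H_j$ is in the repertoire (this absorbs the coefficients $\alpha,\beta$); (iii) invoke closure of the repertoire under sequential composition to conclude each Trotter product lies in the repertoire; (iv) take the limit and translate back into the $\epsilon$-approximation language, noting that the diamond-norm statement of universality follows from the operator-norm convergence by standard equivalence of norms in finite dimension.

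I would also remark that an alternative, purely algebraic route is available and is in fact the one the excerpt points to: by the Baker--Campbell--Hausdorff expansion (\Cref{def:BCH_formula}), $e^{-i\frac{t\alpha}{n}H_1}e^{-i\frac{t\beta}{n}H_2} = \exp\!\bigl(-i\tfrac{t}{n}(\alpha H_1+\beta H_2) + \mathcal{O}(1/n^2)\bigr)$, and iterating $n$ times with the $\mathcal{O}(1/n^2)$ corrections accumulating to $\mathcal{O}(1/n)$ gives the same conclusion; this is the Trotter--Suzuki argument referenced via \cite{Suzuki1976,Suzuki1977,sagle1986introduction}. Either presentation is elementary; I would pick the product-formula version for concreteness and cite the textbook result for the higher-order bookkeeping.

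The main obstacle is essentially bookkeeping rather than conceptual: one must be careful that the notion of ``repertoire'' is closed under exactly the operations used (arbitrary real rescaling of the generating Hamiltonians and finite concatenation), and that the error analysis is carried out in a norm compatible with the paper's $\epsilon$-closeness and diamond-norm conventions. In finite dimension all operator norms are equivalent and channel distances are controlled by unitary distances, so no subtlety arises there; the only place requiring mild care is making the Trotter error bound uniform enough that a finite $n$ suffices for a prescribed $\epsilon$, which the standard second-order estimate handles. I do not anticipate any genuine difficulty, and the proof should be short.
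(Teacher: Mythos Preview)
Your proposal is correct and follows essentially the same approach as the paper: both use the Lie--Trotter product formula $e^{i(\alpha H_1+\beta H_2)}=\lim_{n\to\infty}(e^{i\alpha H_1/n}e^{i\beta H_2/n})^n$ with an $\mathcal{O}(1/n)$ error bound to show the linear combination is in the repertoire. The only difference is that the paper derives the explicit bound $\norm{e^{A+B}-(e^{A/n}e^{B/n})^n}\le \tfrac{5}{n}e^{\norm{A}+\norm{B}}$ from scratch via integral remainder estimates, whereas you propose to cite the standard Trotter--Suzuki result.
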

\begin{proof}
    For any bounded $A$ and $B$, the Lie-product (Trotter) formula converges in norm. Let $\Sigma^A_2 = \sum_{n=2}^{\infty} A^n/n!   $. By the fundamental theorem of calculus,
\[
    \Sigma^A_2
    = A^2 \int_0^1 ds \int_0^s dt\, e^{tA}\;,
    \quad \text{so}\quad
    \norm{\Sigma^A_2}
    \leq \frac{\norm{A}^2}{2} e^{\norm{A}}\;.
    \quad\text{Thus}\quad
    \norm{\Sigma^{A/n}_2}
    \leq \frac{\norm{A}}{2n^2}
    e^{\norm{A}/n}\;.
\]
Then  
    \[
        \norm{e^{(A+B)/n} - e^{A/n}e^{B/n}} 
        \leq \norm{\Sigma^{(A+B)/n}_2   - \Sigma^{A/n}_2(I+\frac{B}{n}) - (I+ \frac{A}{n})      \Sigma^{B/n}_2  -\Sigma^{A/n}_2 \Sigma^{B/n}_2
        -\frac{AB}{n^2}}
        \leq \frac{5}{n^2} e^{(\norm{A}+\norm{B})/n}\;.
    \]
Thus we can expand the approximation to the formula into $n$ terms, and have the bound
\begin{equation}\label{eqn:inter_Trotter_formula}
        \norm{e^{A+B}- \left(e^{A/n}     e^{B/n}\right)^{n}}=  \norm{\sum_{j=0}^{{n-1}}   e^{j(A+B)/n}  \left(e^{(A+B)/n}   -e^{A/n}e^{B/n}\right)  \left(e^{A/n}e^{B/n} \right)^{n-j-1}}  
        \leq \frac{5}{n}\, e^{\norm{A}+\norm{B}}\;.
    \end{equation}
    By taking $A = i\alpha H_1$ and $B = i\beta H_2$ for some real numbers $\alpha$ and $\beta$, in the asymptotic limit $n\to \infty$:
    \eqs{e^{i(\alpha H_1+\beta H_2) \notag}
    =\lim_{n\rightarrow\infty}\left(e^{i\alpha H_1/n}e^{i\beta H_2/n}\right)^n\;,
    }
    so the evolution under $\alpha H_1 + \beta H_2$ is simulable given the control of $H_1$ and $H_2$.
\end{proof}

\begin{lemma}  \label{lemma:Commutator_Generators}
    Consider a repertoire of unitary dynamics on a finite-dimensional Hilbert space.
    If the generated unitary dynamics of a pair of Hermitian operators (Hamiltonians) $H_1$ and $H_2$ are in the repertoire, then every operation or unitary dynamics generated by the commutator $i[H_1,H_2]$ is also in the repertoire. We prove a modified Trotter formula for bounded, self-adjoint generators $H_j$. 
        \eqs{
    e^{-[H_1,H_2]}=\lim_{n\rightarrow\infty}\left(e^{iH_1/\sqrt{n}}e^{iH_2/\sqrt{n}}e^{-iH_1/\sqrt{n}}e^{-iH_2/\sqrt{n}}\right)^n\;.
    }
    This may be elsewhere in the literature, we have also found it in~\cite{sagle1986introduction}, remark 1g following Theorem 5.16.
\end{lemma}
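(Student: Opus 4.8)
The plan is to establish the modified Trotter formula
\eqs{
e^{-[H_1,H_2]}=\lim_{n\to\infty}\left(e^{iH_1/\sqrt{n}}\,e^{iH_2/\sqrt{n}}\,e^{-iH_1/\sqrt{n}}\,e^{-iH_2/\sqrt{n}}\right)^n
}
by the same strategy as in \Cref{lemma:Linear_Combo_Generators}: first bound the error of the single group commutator step $e^{iH_1/\sqrt{n}}e^{iH_2/\sqrt{n}}e^{-iH_1/\sqrt{n}}e^{-iH_2/\sqrt{n}}$ against $e^{-[H_1,H_2]/n}$, and then telescope over the $n$ repetitions using a bounded-perturbation argument. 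Writing $A=iH_1$ and $B=iH_2$, I would set $g(\epsilon)=e^{\epsilon A}e^{\epsilon B}e^{-\epsilon A}e^{-\epsilon B}$ with $\epsilon = 1/\sqrt{n}$, Taylor-expand each of the four exponentials in norm (using the tail bound $\Sigma^{C}_2$ from the previous lemma to control remainders), and collect terms. The zeroth and first order in $\epsilon$ cancel identically; the $\epsilon^2$ term is exactly $[A,B]=-[H_1,H_2]$ (all other quadratic terms $A^2, B^2, AB, BA, \dots$ cancel in pairs), so $g(\epsilon) = I + \epsilon^2[A,B] + O(\epsilon^3) = e^{\epsilon^2[A,B]} + O(\epsilon^3)$. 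With $\epsilon^2 = 1/n$ this gives
\eqs{
\norm{g(1/\sqrt{n}) - e^{-[H_1,H_2]/n}}\le \frac{K}{n^{3/2}}\,e^{2(\norm{H_1}+\norm{H_2})}
}
for an absolute constant $K$ (tracking the combinatorial factors exactly as in the $5/n^2$ bound above).

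The second step is the telescoping. Using the algebraic identity $P^n - Q^n = \sum_{j=0}^{n-1} P^{j}(P-Q)Q^{n-j-1}$ with $P = g(1/\sqrt{n})$ and $Q = e^{-[H_1,H_2]/n}$, together with $\norm{P}\le e^{2(\norm{H_1}+\norm{H_2})/\sqrt{n}}$ and $\norm{Q}\le e^{\norm{[H_1,H_2]}/n}\le e^{2\norm{H_1}\norm{H_2}/n}$ (both bounded uniformly in $n$), one obtains
\eqs{
\norm{\left(e^{iH_1/\sqrt{n}}e^{iH_2/\sqrt{n}}e^{-iH_1/\sqrt{n}}e^{-iH_2/\sqrt{n}}\right)^{n} - e^{-[H_1,H_2]}}\le n\cdot \frac{K}{n^{3/2}}\cdot C = \frac{KC}{\sqrt{n}}\xrightarrow[n\to\infty]{}0\;,
}
where $C$ absorbs the uniform exponential bounds on powers of $P$ and $Q$. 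This proves the limit. Finally, since $-[H_1,H_2]$ is anti-Hermitian whenever $H_1,H_2$ are Hermitian, $e^{-[H_1,H_2]} = e^{-is(i[H_1,H_2])}$ for $s=1$ is a unitary generated by the Hamiltonian $i[H_1,H_2]$; applying the identity with $H_1\to tH_1$ (or rescaling $\sqrt{n}\to\sqrt{n/t}$) realizes $e^{-it[H_1,H_2]}$ for any $t\ge 0$, and combining with time reversal $H_j\to -H_j$ gives all real $t$, so the entire one-parameter group generated by $i[H_1,H_2]$ lies in the repertoire.

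The main obstacle is the bookkeeping in the first step: one must verify carefully that \emph{all} terms through order $\epsilon^2$ other than $[A,B]$ cancel, and then produce an honest norm bound on the $O(\epsilon^3)$ remainder with an explicit constant, handling the infinite tails of the four exponential series simultaneously rather than just to finite order. This is the analogue of the delicate estimate leading to the $5/n^2$ bound in \Cref{lemma:Linear_Combo_Generators}, but with four factors instead of two and a cancellation that only kicks in at second rather than first order, so the counting is heavier; once that estimate is in hand, the telescoping and the passage to the full one-parameter group are routine. An alternative, if one prefers to avoid the explicit constant, is to invoke the convergence of the Lie–Trotter product formula for the commutator as stated in standard references (e.g. Theorem 5.16 and Remark 1g of \cite{sagle1986introduction}), but giving the self-contained estimate keeps the argument parallel to the preceding lemma and makes the rate of convergence explicit.
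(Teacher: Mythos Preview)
Your overall strategy matches the paper's: a single-step estimate $\norm{g(1/\sqrt{n})-e^{[A,B]/n}}=O(n^{-3/2})$ followed by the same telescoping identity $P^n-Q^n=\sum_j P^j(P-Q)Q^{n-j-1}$. The paper obtains the single-step bound via an integral representation (writing $e^{A/\sqrt{n}}e^{B/\sqrt{n}}e^{-A/\sqrt{n}}e^{-B/\sqrt{n}}-I$ as an $s$-integral involving $[A,e^{B/\sqrt{n}}]$ and then peeling off exponentials one at a time), whereas you Taylor-expand the four factors directly; both routes are standard and yield the same $n^{-3/2}$ rate.

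One point to tighten: in your telescoping step the bound $\norm{P}\le e^{2(\norm{H_1}+\norm{H_2})/\sqrt{n}}$ is too weak, because $\norm{P^j}\le\norm{P}^j$ then grows like $e^{c\sqrt{n}}$ for $j$ near $n$, and your constant $C$ would not be uniform. The fix is immediate and is exactly what the paper invokes: since $H_1,H_2$ are Hermitian, each factor $e^{\pm iH_j/\sqrt{n}}$ is unitary, so $P$ is unitary and $\norm{P^j}=1$; likewise $-[H_1,H_2]$ is skew-Hermitian, so $Q$ is unitary and $\norm{Q^{n-j-1}}=1$. With that in place the telescoping gives $n\cdot O(n^{-3/2})=O(n^{-1/2})$ as you claim.
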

\begin{proof}
    We consider bounded and skew-Hermitian operators $A$ and $B$. We have the following integral formula:
    \begin{align} \label{eqn:commutator_Trotter_1}
    {e^{A/\sqrt{n}} }{  e^{B/\sqrt{n}}}{e^{-A/\sqrt{n}} }{  e^{-B/\sqrt{n}}} -I &= 
    {e^{sA/\sqrt{n}} }{  e^{B/\sqrt{n}}}{e^{-sA/\sqrt{n}} }{  e^{-B/\sqrt{n}}} |^{s=1}_{s=0}\notag\\
    &= \frac{1}{\sqrt{n}} \int^1_0 ds\, e^{sA/\sqrt{n}}\left( A e^{B/\sqrt{n}} e^{-sA/\sqrt{n}} - e^{B/\sqrt{n}} A e^{-sA/\sqrt{n}}\right) e^{-B/\sqrt{n}} \notag\\
    &= \frac{1}{\sqrt{n}} \int^1_0 ds\, e^{sA/\sqrt{n}} \left[A,e^{B/\sqrt{n}}\right] e^{-sA/\sqrt{n}}e^{-B/\sqrt{n}}\;,
    \end{align}
    where one can verify the second equality by taking a derivative of $s$ on the left-hand side of the equation.
    Using the following identity:
    \begin{equation} \label{eqn:integral_indetity}
    e^{sA/\sqrt{n}} = I + \left( e^{sA/\sqrt{n}} - I\right) = I + \frac{A}{\sqrt{n}} \int^s_0 dt\, e^{tA/\sqrt{n}}\;,
    \end{equation}
    the \Cref{eqn:commutator_Trotter_1} becomes:
    \begin{align}
        &\frac{1}{n}\int^1_0 ds\, \int^1_0 dt\, e^{sA/\sqrt{n}} \left[A,Be^{tB/\sqrt{n}}\right] e^{-sA/\sqrt{n}} e^{-B/\sqrt{n}} \notag\\
        &=\frac{1}{n}\int^1_0 ds\, \int^1_0 dt\, e^{sA/\sqrt{n}} \left([A,B] e^{tB/\sqrt{n}} + B\left[A,e^{tB/\sqrt{n}}\right]\right) e^{-sA/\sqrt{n}} e^{-B/\sqrt{n}} \notag\\
        &= \frac{[A,B]}{n} + R\;,
    \end{align}
    where this defines $R$ as  the remainder.
    We now use the assumption that $A$ and $B$ are skew-Hermitian, so for real $\alpha,\beta,\gamma$ both $e^{\alpha A}$ and $e^{\beta B}$ are unitary, as is $e^{\gamma[A,B]}$.
    Note that in the last equality, we have used the integral identity in \Cref{eqn:integral_indetity} to expand each exponential. 
    Using
    \[
        \norm{\left( e^{sA/\sqrt{n}} - I\right)} \leq \frac{\norm{A}}{\sqrt{n}}\;,
    \]
    and noticing that the remainder $R$ contains at least one and at most four such terms of order $n^{-1/2}$, we have:
    \[
        \norm{R} \leq 2^6\left(\norm{A}+\norm{B}+1\right)^4 \frac{1}{n^{3/2}} = \frac{M}{n^{3/2}}\;,
    \]
    where $M$ is a constant independent of $n$.
    Thus, the constant term and the linear term in the expansion of the exponential $e^{[A,B]/n}$ cancel with the corresponding terms in $e^{A/\sqrt{n}}e^{B/\sqrt{n}}e^{-A/\sqrt{n}}e^{-B/\sqrt{n}}$, which gives:
    \[
        \norm{e^{[A,B]/n} - e^{A/\sqrt{n}} e^{B/\sqrt{n}} e^{-A/\sqrt{n}} e^{-B/\sqrt{n}}} = \norm{R} \leq \frac{M}{n^{3/2}}\;.
    \]
    By a similar expansion  as in \Cref{eqn:inter_Trotter_formula}, we obtain:
    \begin{align}
        &\norm{e^{[A,B]} - \left( e^{A/\sqrt{n}} e^{B/\sqrt{n}} e^{-A/\sqrt{n}}e^{-B/\sqrt{n}}\right)^n} \notag\\
        &= \norm{\sum_{j=0}^{n-1} e^{j[A,B]/n} \left(e^{[A,B]/n} - e^{A/\sqrt{n}} e^{B/\sqrt{n}} e^{-A/\sqrt{n}}e^{-B/\sqrt{n}}\right) \left(e^{A/\sqrt{n}} e^{B/\sqrt{n}} e^{-A/\sqrt{n}}e^{-B/\sqrt{n}}\right)^{n-j-1}}\leq \frac{M}{\sqrt{n}}\;. \notag
    \end{align}
    Thus, by taking $A = iH_1$, $B = iH_2$, in the asymptotic limit $n\to \infty$:
    \eqs{
    e^{-[H_1,H_2]}=\lim_{n\rightarrow\infty}\left(e^{iH_1/\sqrt{n}}e^{iH_2/\sqrt{n}}e^{-iH_1/\sqrt{n}}e^{-iH_2/\sqrt{n}}\right)^n\;,\notag
    }
    which completes the proof.
\end{proof}

By \Cref{lemma:Linear_Combo_Generators,lemma:Commutator_Generators}, we can understand the expressivity of a generating set $\mathcal{G}$ by investigating its DLA $\mathfrak{g}$. 

\section{Universality of analog quantum computation} \label{appendix:Universality}
Here we give  details of the proof of \Cref{qubit:UQC_chain}, which establishes necessary and sufficient conditions for one-dimensional universal quantum computation under global control. 
In a qubit system, it is notable that  realizing universal quantum computation requires that the DLA $\mathfrak{g}$ equals the entire algebra $\mathfrak{su}(2^N)$.  Here, we rewrite this formally.
\begin{reptheorem}{qubit:UQC_chain}
[Minimal requirement for universal quantum computation on a qubit chain] 
Consider a chain of $N\geq 2$ qubits with homogeneous nearest neighbor Ising interactions $H_{ZZ} = \sum_{j}Z_{j} Z_{j+1}$. 
Suppose the system is equipped with global $X$ and $Z$ control fields, given by $H_{X} = \sum_j X_j$, $H_{Z} = \sum_j Z_j$.
Thus, the control Hamiltonian is described by \Cref{eq:uniform_global} as:
\[H_q(t)
= u_X(t) H_X + u_Z(t) H_Z + u_{ZZ}(t) H_{ZZ}
= u_X(t)\sum_{i} X_i + u_Z(t)\sum_{i} Z_i + u_{ZZ}(t)\sum_{\langle i,j\rangle} Z_i Z_j\;,\]
with tunable time-dependent control pulses $u_X(t),u_Z(t),u_{ZZ}(t)$.
Let $R$ denote the reflection on $N$ qubits sending site $i$ to $N+1-i$.
The control system realizes universal quantum computation, if and only if there exists an additional Hamiltonian $H_{\mathrm{break}}$ that breaks the reflection symmetry, i.e., $RH_{\mathrm{break}}R^{-1}\neq H_{\mathrm{break}}$. 
In other words, the dynamical Lie algebra satisfies
\[
    \mathrm{span}_{\mathbb{R}}\langle iH_{X}, iH_{Z}, iH_{ZZ}, iH_{\mathrm{break}} \rangle_{\mathrm{Lie}} = \mathfrak{su}(2^N)\;,
\]
if and only if $R H_{\mathrm{break}}R^{-1}\neq H_{\mathrm{break}}$.
\end{reptheorem}

\begin{remark}
    Compared to the \Cref{qubit:UQC_chain} in the main text, here, we focus on the Ising-type nearest-neighbor interaction $H_{ZZ}$ because the specific single-Pauli interaction is irrelevant.
    This is apparent later in the proof of \Cref{lemma:equiv_DLA_2}.
    More specifically, given any one of the homogeneous single-Pauli interactions (e.g. $H_{ZZ}$), we can generate the other two ($H_{XX}$ and $H_{YY}$) by commuting $H_X,H_Z$ with it.
    Thus, all such interactions are included in the generating set, and it suffices to choose any one of them.
    Moreover, for arbitrary nearest-neighbor interactions,
    as long as the coefficients do not fall into specific regimes that introduce some additional symmetries, the system is still universal,  as detailed in \Cref{corollary:generalization_interaction,corollary:generalization_interaction_2}.

    Here, the symmetry-breaking field $H_{\mathrm{break}}$ is general, not restricted to any specific form. 
    Proving this general condition utilizes techniques in representation theory, as detailed in the proof.
\end{remark}

Before delving into the details of the proof, we provide some examples to enhance understanding of the theorem and explain the definition of lattice reflection symmetry breaking.

\begin{remark} \label{remark:one_diml_examples}
    We can represent the one-dimensional chain as a graph $G(V,E)$ consisting of vertices $V$ and edges $E$. 
    In the graph, we place the $j$-th qubit on the vertex $j \in V$, and we use the edge $( j,j+1 ) \in E$ to represent the nearest-neighbor qubit pair having Ising-type interaction $Z_j Z_{j+1}$.
    Initially, all vertices and edges are uncolored. 
    This implies that the fields applied on the vertices, i.e., $H_X$ and $H_Z$, and edges, i.e., $H_{ZZ}$, are uniform.
    Then, we can use different colorings of the vertices and edges to represent the support of certain additional symmetry-breaking control fields.
    As an example, we consider the Hamiltonians $H_{\mathrm{break},\alpha} = H_{X,\alpha} = \sum_{i\in \alpha} X_i$, for $\alpha = A,B,C,\cdots$, i.e., global $X$ controls on various types of lattices labeled $\alpha$.
    The corresponding lattices are colored differently.
    
    Then, $RH_{X,\alpha}R^{-1} \neq H_{X,\alpha}$ requires the combined pattern formed by $\alpha$ to break the reflection symmetry.
    This is equivalent to demanding the graph representation to have a trivial automorphism group $\mathrm{Aut}(G) = \{\mathbb{I}\}$, as the only nontrivial automorphism of the one-dimensional lattice graph is the reflection. 
    As an example, if we consider $N=8$ and $|\alpha|=0$, i.e. no symmetry-breaking field, the graph representation is:
    \begin{equation} 
    \begin{tikzpicture}[baseline={(current bounding box.center)}][scale=1]
      \def\N{8}
      \def\sep{1.0}
      \foreach \i in {1,...,\N}{
          \pgfmathtruncatemacro{\k}{mod(\i-1,2)}
          \ifnum\k=0
              \node[circle,draw,minimum size=5pt,inner sep=0.5pt,
                    label={[yshift=-2pt]below:$\i$}](site\i) at (\sep*\i,0){};
          \else
              \node[circle,draw,minimum size=5pt,inner sep=0.5pt,
                    label={[yshift=-2pt]below:$\i$}](site\i) at (\sep*\i,0){};
          \fi
      }
      \foreach \i in {1,...,7}{
          \pgfmathtruncatemacro{\j}{\i+1}
          \draw[dashed](site\i)--(site\j);
      }
    \draw[dashdotted] (4.5,0.5) -- (4.5,-0.5);
    \end{tikzpicture}\;, \notag
    \end{equation}
    where we use circles and dashed lines to represent vertices and edges, respectively. The graph is invariant under reflection with respect to the axis labeled by the vertical dashed-dotted line. The reflection operator $R$ can be represented as a product of SWAP gates:
    \begin{equation}
        R = \mathrm{SWAP}_{1,8}\mathrm{SWAP}_{2,7} \mathrm{SWAP}_{3,6} \mathrm{SWAP}_{4,5}\;,\quad \mathrm{where}\quad \mathrm{SWAP}_{i,j} = \frac{1}{2}\left(I + X_iX_j + Y_i Y_j + Z_i Z_j\right)\;,
    \end{equation}
    and it is easy to verify that:
    \begin{equation} \label{eqn:reflection_operation}
        RZ_iR^{-1} = Z_{N-i+1},\quad RX_iR^{-1} = X_{N-i+1}, \quad [R,H] = 0,\quad \mathrm{for}\quad H\in\left\{H_{X},H_Z,H_{ZZ}\right\}\;.
    \end{equation}
    Since the system has global reflection symmetry, it cannot realize universal quantum computation.

    If we apply a symmetry-breaking field $H_{X,A} = \sum_{i=5}^8 X_i$ on the right half of the chain, we can color the last four sites black, yielding the following graph representation:
    \begin{equation}
            \begin{tikzpicture}[baseline={(current bounding box.center)}][scale=1]
      \def\N{8}
      \def\sep{1.0}
      \foreach \i in {1,...,\N}{
          \ifnum\i<5
              \node[circle,draw,minimum size=5pt,inner sep=0.5pt,
                    label={[yshift=-2pt]below:$\i$}](site\i) at (\sep*\i,0){};
          \else
              \node[circle,draw,fill=black,minimum size=5pt,inner sep=0.5pt,
                    label={[yshift=-2pt]below:$\i$}](site\i) at (\sep*\i,0){};
          \fi
      }
      \foreach \i in {1,...,7}{
          \pgfmathtruncatemacro{\j}{\i+1}
          \draw[dashed](site\i)--(site\j);
      }
    \end{tikzpicture}\;.\notag
    \end{equation}
    The above graph has no nontrivial automorphism. According to \Cref{qubit:UQC_chain}, the generating set $\mathcal{G} = \{H_{X},H_Z,H_{ZZ},H_{X,A}\}$ is universal.
\end{remark}

\noindent\textbf{Proof sketch of \Cref{qubit:UQC_chain}.}
We first demonstrate that uniform controls enable universality in the reflection-symmetric subalgebra (\Cref{prop:fix-span}) and analyze its matrix representation (\Cref{lem:block-diag}).
Then, any symmetry breaking control $H_{\mathrm{break}}$ introduces non-zero elements in the reflection-anti-symmetric sector of the matrix (\Cref{prop:symmetric-pair,lem:break-split-correct}).
This element, combined with the established universality, enables us to single out all generators of $\mathrm{su}(2^N)$ by a constructive method (\Cref{prop:irreducible-m}), which completes the proof (\Cref{prop:closure}). 

We outline six steps, proved in detail below.

\medskip
\begin{itemize}
    \item \textbf{Step 1 (Symmetric controls).}
We first study the algebra generated by the uniform controls $H_X,H_Z,H_{ZZ}$. 
Let \(\theta\coloneqq\Ad_R\), i.e., $\theta(O) = ROR^{-1}$, and define the \(R\)-fixed Lie subalgebra
\[
\mathfrak l\ \coloneqq\ \{X\in\mathfrak{su}(2^N):[X,R]=0\}\ =\ \Fix(\theta)\cap\mathfrak{su}(2^N)\;,
\]
where $\mathrm{Fix}(\theta)$ is the fixed point of $\theta$: the set of $2^N$-dimensional complex matrices that commute with $R$.

Using the controls \(H_X, H_Z,\, H_{ZZ}\), we can generate the mirrored local terms (\Cref{lemma:equiv_DLA_2})
$$\widetilde X_j=X_j+\theta(X_j)\;,\  \widetilde Z_j=Z_j+\theta(Z_j)\;,\ \mathrm{and}\ \widetilde{ZZ}_j=Z_jZ_{j+1}+\theta(Z_jZ_{j+1})\;,$$
which builds all reflection-symmetric Pauli strings (\Cref{lem:mirror-generation}).
Furthermore, we demonstrate that these strings generate $\mathfrak{l}$ as (\Cref{prop:fix-span}):
\[
\mathrm{span}_{\mathbb R}\!\left\langle\, i\widetilde X_j,\,i\widetilde Z_j,\,i\widetilde{ZZ}_j\,\right\rangle_{\mathrm{Lie}}
=\mathfrak l
\quad\;.
\]
Thus, it suffices to investigate the structure of $\mathfrak{l}$.

\item \textbf{Step 2 (Block structure of \(\mathfrak l\)).}
As $R^2 = I$, we can decompose the Hilbert space into the $\pm1$ \(R\)–eigenspaces as \(\mathcal H=\mathcal H_+\oplus_v\mathcal H_-\), with dimensions \(d_\pm\).
Any operator in $\mathfrak{l}$ commutes with \(R\), so it is block–diagonal in this basis (\Cref{lem:block-diag}).
Further imposing skew–Hermiticity and tracelessness yields the decomposition of the space $\mathfrak{l}$ as:
\[
\mathfrak l\ \cong\ \mathfrak{su}(d_+)\ \oplus_m\ \mathfrak{su}(d_-)\ \oplus_m\ \mathfrak u(1)_{\mathrm{rel}},
\]
with the relative phase $\mathfrak{u}(1)_{\mathrm{rel}}$ central in \(\mathfrak l\) (\Cref{prop:l-structure}).

\item \textbf{Step 3 (Involution, projections, and decomposition of the algebra).}
Then, we analyze the remaining part of $\mathfrak{su}(2^N)$.
For the involution automorphism \(\theta\), i.e. $\theta^2 = \mathrm{id}$, we can decompose the matrix space into the $\pm 1$ $\theta$-eigenspaces using the complementary projectors
\(E_\pm=\tfrac12(\mathrm{id}\pm\theta)\).
They act on $\mathfrak{su}(2^N)$ with:
\[
\mathrm{Im}\,E_+=\mathfrak l,\qquad \mathrm{Im}\,E_-=\mathfrak m,\qquad
\mathfrak{su}(2^N)=\mathfrak l\oplus_m \mathfrak m,
\]
and the brackets respect parity:
\([\mathfrak l,\mathfrak l]\subseteq\mathfrak l\),
\([\mathfrak l,\mathfrak m]\subseteq\mathfrak m\),
\([\mathfrak m,\mathfrak m]\subseteq\mathfrak l\), and we can focus on their corresponding matrix representations
(\Cref{lem:theta-proj,prop:symmetric-pair}).
\item \textbf{Step 4 (Decomposing the breaking term).}
Any Hermitian \(H_{\mathrm{break}}\) splits as \(H_\pm=E_\pm(H_{\mathrm{break}})\) into the two $\theta$-eigenspaces with
\(iH_+\in\mathfrak l\), \(iH_-\in\mathfrak m\) (\Cref{lem:break-split-correct}).
The assumption \(RH_{\mathrm{break}}R^{-1}\neq H_{\mathrm{break}}\) is equivalent to \(H_-\neq 0\), so we have a nonzero odd element in $\mathfrak{m}$.
\item \textbf{Step 5 (Irreducible $\mathfrak l$–module $\mathfrak m$).}
From the matrix representation, we can identify $\mathfrak m\cong\Hom(\mathcal H_-,\mathcal H_+)\cong\mathcal H_+\otimes\mathcal H_-^{*}$, i.e., elements in $\mathfrak{m}$ are mapping between different $R$-eigenspaces. For a given element \(K=\mathrm{diag}(A,B)\in\mathfrak l\), and an element $M_T \in \mathfrak{m}$, the adjoint action is
\([K,M_T]=M_{AT-TB}\), (outer tensor product action on $\mathcal H_+\otimes\mathcal H_-^{*}$; \Cref{lem:m-model-action}).
Since we have the universality $\mathfrak l \cong\ \mathfrak{su}(d_+)\oplus_m\mathfrak{su}(d_-)\oplus_m\mathfrak u(1)_{\mathrm{rel}}$ (\Cref{prop:l-structure}), we can construct a matrix-basis proof to show the $\mathfrak l$–module $\mathfrak m$ is irreducible (\Cref{prop:irreducible-m}), hence the $\mathfrak l$–orbit of any $0\neq M\in\mathfrak m$ linearly generates all of $\mathfrak m$.
\item \textbf{Step 6 (Closure).}
From Step~1, we obtain \(\mathfrak l\subseteq \mathfrak g\coloneqq\mathrm{span}_{\mathbb R}\langle iH_X,iH_Z,iH_{ZZ},iH_{\mathrm{break}}\rangle_{\mathrm{Lie}}\).
From Step~4, there is one nonzero \(iH_-\in\mathfrak m\cap\mathfrak g\).
By Step~5, the \(\mathfrak l\)–orbit of \(iH_-\) spans \(\mathfrak m\), hence we get \(\mathfrak m\subseteq\mathfrak g\).
Therefore \(\mathfrak g\) contains both summands in \(\mathfrak{su}(2^N)=\mathfrak l\oplus_m\mathfrak m\), so
\(\mathfrak g=\mathfrak{su}(2^N)\) (\Cref{prop:closure}).
The necessity direction is immediate: if \(H_-=0\), the DLA remains inside \(\mathfrak l\) (\Cref{cor:iff}).
\hfill$\square$
\end{itemize}

\begin{step}{1: Symmetric controls generate the fixed-point Lie algebra}
\medskip

We start by showing that the DLA generated by $iH_X,iH_Z$ and $iH_{ZZ}$ spans the whole $R$-symmetric Lie subalgebra.

\begin{lemma} \label{lemma:equiv_DLA_2}
    Let \(\theta\coloneqq\Ad_R\), and define the mirror local terms as \(\widetilde X_j=X_j+\theta(X_j)\), \(\widetilde Z_j=Z_j+\theta(Z_j)\), and \(\widetilde{ZZ}_j=Z_jZ_{j+1}+\theta(Z_jZ_{j+1})\).
    We have the following equivalence between DLAs:
    \begin{equation} \label{eqn:equiv_DLAs_uniform}
        \begin{aligned}
            \mathrm{span}_{\mathbb{R}}\langle iH_{X}, iH_{Z}, iH_{ZZ}\rangle_{\mathrm{Lie}} &\equiv \begin{cases}
            \mathrm{span}_{\mathbb{R}}\langle \cup_j  i\widetilde X_j,\cup_j i\widetilde Z_j,\cup_j i\widetilde{ZZ}_j \rangle_{\mathrm{Lie}}\;, &\mathrm{for\ even\ } N \\
            \mathrm{span}_{\mathbb{R}}\langle \cup_j  i\widetilde X_j,\cup_j i\widetilde Z_j,\cup_j i\widetilde{ZZ}_j , iX_{\lfloor N/2\rfloor +1} , iZ_{\lfloor N/2\rfloor +1}\rangle_{\mathrm{Lie}}\;, &\mathrm{for\ odd\ }N
        \end{cases}\; \\
        &\coloneqq \mathrm{span}_{\mathbb R}\!\left\langle\, i\widetilde X_j,\,i\widetilde Z_j,\,i\widetilde{ZZ}_j\,\right\rangle_{\mathrm{Lie}}\;.
        \end{aligned}
    \end{equation}
\end{lemma}

\begin{proof}
    Given $H_{ZZ},H_Z$ and $H_X$, we first obtain an $H_{YY}$ (the uniform $YY$ interaction) as follows:
    \[ \label{eqn:nearest_neighbor_inter_algebra}
        \begin{aligned}
        &H_{YZ} = \sum_{j=1}^{N-1} Y_jZ_{j+1} + Z_jY_{j+1} \propto [H_{X},H_{ZZ}]\;, \\
        &H_{YYZZ} = \sum_{j=1}^{N-1} Y_jY_{j+1} - Z_j Z_{j+1} \propto [H_X, H_{YZ}]\;,\\
        &H_{YY} = \sum_{j=1}^{N-1} Y_jY_{j+1} = H_{YYZZ} + H_{ZZ}\;.
    \end{aligned}
    \]
    Then, we can compute the commutator between $H_{YY}$ and $H_{ZZ}$, to obtain the following intermediate Hamiltonian:
    \[
        H_{1} = \sum_{j=1}^{N-2} Z_j X_{j+1} Y_{j+2} + Y_j X_{j+1} Z_{j+2} \propto [H_{YY},H_{ZZ}]\;,
    \]
    where we have used:
    \[  [Y_{j}Y_{j+1},Z_jZ_{j+1}] = 0\;,\quad [Y_{j}Y_{j+1},Z_{j+1}Z_{j+2}] = 2i Y_jX_{j+1}Z_{j+2}\;, \quad [Y_{j+1}Y_{j+2},Z_{j}Z_{j+1}] = 2i Z_jX_{j+1}Y_{j+2}\;. 
    \]
    Then, we can compute:
    \[
        \begin{aligned}
        &H_2 = H_{YY} + \sum_{j=2}^{N-2} Y_jY_{j+1} - 2\sum_{j=1}^{N-3}Z_j X_{j+1}X_{j+2}Z_{j+3} \propto [H_{ZZ},H_{1}]\;,\\
        &H_3 = H_2 - H_{YY} = \sum_{j=2}^{N-2} Y_jY_{j+1} - 2\sum_{j=1}^{N-3}Z_j X_{j+1}X_{j+2}Z_{j+3}\;, 
    \end{aligned}
    \]
    where we have used:
    \[
        \begin{aligned}
        &[Z_jZ_{j+1},Z_jX_{j+1}Y_{j+2}] = 2i Y_{j+1}Y_{j+2}\;,\quad [Z_{j+1}Z_{j+2},Y_jX_{j+1}Z_{j+2}] = 2i Y_jY_{j+1}\;,\\
        &[Z_jZ_{j+1},Y_{j+1}X_{j+2}Z_{j+3}] = -2iZ_{j}X_{j+1}X_{j+2}Z_{j+3},\quad [Y_{j+2}Y_{j+3},Z_{j}X_{j+1}Y_{j+2}] = -2iZ_{j}X_{j+1}X_{j+2}Z_{j+3}\;. 
        \end{aligned}
    \]
    Notice that in $H_3$, there are no $Y_{1}Y_2$ and $Y_{N-1}Y_{N}$ terms, because the open boundary condition makes the boundary $YY$ terms distinguishable from others. By doing the commutation again, we obtain:
    \[
        \begin{aligned}
        &H_4 = \sum_{j=1}^{N-3} Z_j X_{j+1} Y_{j+2} + \sum_{j=2}^{N-2} Y_j X_{j+1} Z_{j+2} \propto [H_{ZZ},H_3]\;,\\
        &H_5 = \sum_{j=2}^{N-2} Y_j Y_{j+1} - \sum_{j=1}^{N-3} Z_jX_{j+1}X_{j+2}Z_{j+3} \propto [H_{ZZ},H_4]\;. 
    \end{aligned}
    \]
    Comparing $H_5$ to $H_3$, we notice that the coefficients of the two summations are different, which implies that we can linearly combine $H_3$ and $H_5$ to single out each of the summations. Thus, we obtain:
    \[
         \begin{aligned}
        &\widetilde{H}_{YY} = \sum_{j=2}^{N-2} Y_j Y_{j+1} = 2H_5 - H_3\;, \\
        & H_{YY,1} = Y_1Y_2 + Y_{N-1}Y_N = H_{YY} - \widetilde{H}_{YY}\;.
    \end{aligned}
    \]
    We have singled out the boundary term $H_{YY,1}$, which is crucial in obtaining $\widetilde{X}_1,\widetilde{Z}_1$ and $\widetilde{ZZ}_1$. First, we obtain $\widetilde{ZZ}_1$ by the following algebra:
    \[
    \begin{aligned}
        &H_{YZ,1} = (Y_1Z_2+Z_1Y_2) + (Y_{N-1}Z_N + Z_{N-1}Y_N) \propto [H_X, H_{YY,1}]\;,\\
        &H_{YYZZ,1} = (Z_1Z_2 + Z_{N-1}Z_N) - (Y_1Y_2+Y_{N-1}Y_N) \propto [H_X,H_{YZ,1}]\;,\\
        &\widetilde{ZZ}_1 = Z_1Z_2 + Z_{N-1}Z_N = H_{YYZZ,1} + H_{YY,1}\;.
    \end{aligned}
    \]
    Then, we obtain the $\widetilde{X}_1$ by the following algebra:
    \[
    \begin{aligned}
        &H_{X,12} = \widetilde{X}_1 + \widetilde{X}_2 = (X_1+X_{N})+(X_2 + X_{N-1}) \propto [H_{YZ,1},\widetilde{ZZ}_1]\\
        &\widetilde{X}_2 = X_2 + X_{N-1} \propto [\widetilde{H}_{YY},[\widetilde{H}_{YY},H_{X,12}]] \\
        &\widetilde{X}_1 = X_1 + X_{N} =  H_{X,12} - \widetilde{X}_2\;.
    \end{aligned}
    \]
    Finally, we can single out $\widetilde{Z}_1$ by:
    \[
        \widetilde{Z}_1 = Z_1 + Z_{N} \propto [\widetilde{X}_1,[\widetilde{X}_1,H_Z]]\;.
    \]
    Given the terms $\widetilde{X}_k,\widetilde{Z}_k$ and $\widetilde{ZZ}_k$ on the first $k$ boundaries, we can easily obtain the terms on the $(k+1)$-th boundary by the following algebra:
    \[
        \begin{aligned}
        &\widetilde{H}_{X,k+1} = \sum_{j=k+1}^{N-(k+1)+1} X_j = H_X - \sum_{j=1}^k \widetilde{X}_j\;,\quad \widetilde{H}_{ZZ,k+1} = \sum_{j=k+1}^{N-(k+1)}  Z_jZ_{j+1} = H_{ZZ} - \sum_{j=1}^k \widetilde{ZZ}_j\;,\\
        &\widetilde{X}_{k+1} = X_{k+1} + X_{N-(k+1)+1} \propto [\widetilde{ZZ}_k,[\widetilde{ZZ}_k,\widetilde{H}_{X,k+1}]] \;,\\
        &\widetilde{Z}_{k+1} = Z_{k+1} + Z_{N-(k+1)+1} \propto [\widetilde{X}_{k+1},[\widetilde{X}_{k+1},H_{Z}]]\;,\\
        &\widetilde{ZZ}_{k+1} = Z_{k+1}Z_{k+2} + Z_{N-(k+2)+1} Z_{N-(k+1)+1} \propto [\widetilde{X}_{k+1},[\widetilde{X}_{k+1},\widetilde{H}_{ZZ,k+1}]]\;.
    \end{aligned}
    \]
    Therefore, by induction, we can obtain all $\widetilde{X}_j,\widetilde{Z}_j,\widetilde{ZZ}_j$ for $j = 1,\cdots,\lfloor N/2\rfloor$ and also $X_{\lfloor N/2\rfloor +1}, Z_{\lfloor N/2\rfloor +1}$ if $N$ is odd. Therefore, we have completed the proof.
\end{proof}

\begin{lemma}[Mirrored generators produce all symmetric strings]\label{lem:mirror-generation}
The real Lie algebra generated by $\{\,i\widetilde X_j,\, i\widetilde Z_j,\, i\widetilde{ZZ}_j\,\}$ contains $i$ times every reflection-symmetric Pauli string.
\end{lemma}

\begin{proof}
On each mirror pair $(j,\bar j)$, $[\,i\widetilde Z_j,\,i\widetilde X_j\,]=2i\,\widetilde Y_j$, yielding local $\mathfrak{su}(2)$ on the pair.
Commutators with $i\widetilde{ZZ}_j$ extend support by one site while preserving mirrored form; induction on support length yields all symmetric strings.
\end{proof}

Next, we prove that the reflection-symmetric Pauli strings span the algebra $\mathfrak{l}$.
Remember that we use $\mathbb{P}_N$ to represent the $N$-qubit Paulis.
$\mathbb{P}_N$ is an orthogonal basis of $M_{2^N}(\mathbb{C})$ for the Hilbert-Schmidt inner product $\langle A,B\rangle_{\HS}\coloneqq\Tr(A^\dagger B)$~\cite{Nielsen_Chuang_2010}.
Moreover, the real span of i-multiples of the non-identity Paulis equals the traceless skew-Hermitian matrices $\mathfrak{su}(2^N)$ as~\cite{Nielsen_Chuang_2010}:
\[\mathrm{span}_{\mathbb{R}}\{\, iP : P\in \mathbb{P}_N\setminus\{I^{\otimes N}\}\,\} = \mathfrak{su}(2^N)\;.\]
Thus, one can prove the claim by constructing a projection into the reflection-symmetric subalgebra of $\mathfrak{su}(2^N)$ and show the image is $\mathfrak{l}$, as demonstrated in \Cref{prop:fix-span} below.

\begin{proposition}[Step 1: mirrored $X$, $Z$, and $ZZ$ generate $\mathfrak l$]\label{prop:fix-span}
Let $E\coloneqq\tfrac12(\mathrm{id}+\theta)$, where $\mathrm{id}$ is the identity map. Then $E$ defines a projector into the fixed point $\mathrm{Fix}(\theta)$, i.e. $\mathrm{Im}\,E=\Fix(\theta)\coloneqq\{X\in M_{2^N}(\mathbb C):\theta(X)=X\}$, which implies \(\Fix(\theta)=\mathrm{span}_{\mathbb C}\{\,E(P):P\in\mathbb{P}_N\,\}\). 
By taking the intersection between $\mathrm{Fix}(\theta)$ and $\mathfrak{su}(2^N)$, the $R$-fixed subalgebra $\mathfrak{l}$ is given by 
\[
\mathfrak l\coloneqq\Fix(\theta)\cap \mathfrak{su}(2^N)=\mathrm{span}_{\mathbb R}\{\, iE(P)\ :\ P\in\mathbb{P}_N\setminus\{I^{\otimes N}\}\,\}\;,
\]
which implies that
\[
    \mathrm{span}_{\mathbb R}\!\left\langle\, i\widetilde X_j,\,i\widetilde Z_j,\,i\widetilde{ZZ}_j\,\right\rangle_{\mathrm{Lie}}
=\mathfrak l
\]
\end{proposition}

\begin{proof}
First, to be a projector, $E$ must be \emph{linear and idempotent}. Linearity is immediate. Since $\theta^2=\mathrm{id}$, \(E^2=\tfrac14(\mathrm{id}+\theta)(\mathrm{id}+\theta)=\tfrac14(\mathrm{id}+2\theta+\theta^2)=E\), so it is idempotent.

Then, we should that $\mathrm{Im}\ E = \mathrm{Fix}(\theta)$. For any $X \in M_{2^N}(\mathbb{C})$,
\(
\theta(E(X))=\tfrac12(\theta X+\theta^2 X)=E(X),
\)
so $E(X)\in\Fix(\theta)$ and $\mathrm{Im}\,E\subseteq\Fix(\theta)$.
Conversely, if $Y\in\Fix(\theta)$ then $E(Y)=\tfrac12(Y+\theta Y)=Y$, hence $Y\in\mathrm{Im}\,E$.
Therefore $\mathrm{Im}\,E=\Fix(\theta)$, i.e.\,\ $E$ is the (linear, idempotent) projection onto the fixed-point space.

Since the Paulis $\mathbb{P}_N$ span $M_d(\mathbb C)$, by linearity, applying $E$ gives
\[
\Fix(\theta)=\mathrm{Im}\,E=\mathrm{span}_{\mathbb C}\{E(P):P\in\mathbb{P}_N\}.
\]
Now we take the intersection with $\mathfrak{su}(d)$. For $P\neq I^{\otimes N}$, $\theta$ is a $*$-automorphism that preserves trace, so $E(P)=\tfrac12(P+\theta(P))$ is Hermitian and traceless. Hence
\[
\Fix(\theta)\cap\{ \text{traceless Hermitian} \}
= \mathrm{span}_{\mathbb R}\{\,E(P):P\in\mathbb{P}_N\setminus\{I^{\otimes N}\}\,\}.
\]
Finally, the map $H\mapsto iH$ is a real-linear isomorphism from traceless Hermitians onto $\mathfrak{su}(d)$.
Applying it to the previous line yields
\[
\mathfrak l=\Fix(\theta)\cap \mathfrak{su}(d)
=\mathrm{span}_{\mathbb R}\{\, iE(P)\ :\ P\in\mathbb{P}_N\setminus\{I^{\otimes N}\}\,\}.\]
By \Cref{lemma:equiv_DLA_2,lem:mirror-generation}, the right-hand side is generated by $H_X,H_Z,H_{ZZ}$, yielding
\[
\mathrm{span}_{\mathbb{R}}\!\left\langle\, i\widetilde X_j,\ i\widetilde Z_j,\ i\widetilde{ZZ}_j \, \right\rangle_{\mathrm{Lie}}
\;=\; \mathfrak l\;,
\]
which completes the proof.
\end{proof}
\end{step}

\begin{step}{2: Block structure of the $R$-symmetric algebra}
\medskip

We focus on analyzing the structure of $\mathfrak{l}$, which is block diagonalized by the symmetry $R$.
Let $P_\pm\coloneqq\tfrac12(I\pm R)$ be the projector onto the $\pm1$ eigenspaces of $R$, and define $\mathcal H_\pm\coloneqq P_\pm\mathbb C^{2^N}$ with dimensions $d_\pm$.

\begin{lemma}[Block-diagonalization by the reflection]\label{lem:block-diag}
$[X,R]=0$ iff $X$ is block-diagonal in the $R$-eigenbasis:
\[
X=\begin{bmatrix}A&0\\ 0&B\end{bmatrix}
\quad\text{with}\quad
A=P_+XP_+\in \End(\mathcal H_+),\ \ B=P_-XP_-\in\End(\mathcal H_-).
\]
Conversely, every such block-diagonal $X$ commutes with $R$.
\end{lemma}

\begin{proof}
If $[X,R]=0$, then $XP_\pm=P_\pm X$, hence $P_\mp X P_\pm=P_\mp P_\pm X=0$, so $X$ preserves $\mathcal H_\pm$ and is block-diagonal as stated.
Conversely, a block-diagonal $X$ clearly commutes with $R=\mathrm{diag}(I_{d_+},-I_{d_-})$.
\end{proof}

\begin{proposition}[Step 2: Structure of the fixed-point Lie algebra]\label{prop:l-structure}
Let
\[
\mathfrak l\;\coloneqq\;\{X\in \mathfrak{su}(2^N):[X,R]=0\}.
\]
Then
\[
\mathfrak l \;=\; \big(\,\mathfrak u(d_+)\oplus_m \mathfrak u(d_-)\,\big)\ \cap\ \mathfrak{su}(2^N)
\ \cong\ \mathfrak{su}(d_+)\ \oplus_m\ \mathfrak{su}(d_-)\ \oplus_m\ \mathfrak u(1)_{\mathrm{rel}}.
\]
Explicitly, every $X\in\mathfrak l$ has a unique decomposition
\[
X=\begin{bmatrix}A_0&0\\[2pt]0&B_0\end{bmatrix}
\;+\;
i\varphi\!\left(\frac{1}{d_+}\,I_{d_+}\right)\oplus_m\!\left(-\frac{1}{d_-}\,I_{d_-}\right),
\qquad
A_0\in\mathfrak{su}(d_+),\ B_0\in\mathfrak{su}(d_-),\ \varphi\in\mathbb R,
\]
and the map
\[
\Phi:\ \mathfrak{su}(d_+)\oplus_m\mathfrak{su}(d_-)\oplus_m i\mathbb R
\longrightarrow \mathfrak l,\qquad
\Phi(A_0,B_0,i\varphi)
=
\begin{bmatrix}A_0&0\\[2pt]0&B_0\end{bmatrix}
+i\varphi\!\left(\frac{1}{d_+}\,I_{d_+}\right)\oplus_m\!\left(-\frac{1}{d_-}\,I_{d_-}\right)
\]
is a Lie algebra isomorphism. The $i\mathbb R$ factor is the \emph{relative phase}
$\mathfrak u(1)_{\mathrm{rel}}$, which is central in $\mathfrak l$.
\end{proposition}

\begin{proof}
By Lemma~\ref{lem:block-diag}, $X\in\mathfrak l$ iff $X=\mathrm{diag}(A,B)$ with $A\in\End(\mathcal H_+)$, $B\in\End(\mathcal H_-)$.
Impose $X\in\mathfrak{su}(2^N)$: $X^\dagger=-X$ and $\mathrm{tr}\,X=0$.
From $X^\dagger=-X$ we get $A^\dagger=-A$, $B^\dagger=-B$, i.e.\,\ $A\in\mathfrak u(d_+)$, $B\in\mathfrak u(d_-)$.
From $\mathrm{tr}\,X=0$ we obtain $\mathrm{tr}A+\mathrm{tr}B=0$.
Conversely, any such block–diagonal $X$ lies in $\mathfrak l$.
Write $A=A_0+i\alpha I_{d_+}$ and $B=B_0+i\beta I_{d_-}$ with $A_0\in\mathfrak{su}(d_+)$, $B_0\in\mathfrak{su}(d_-)$ and $d_+\alpha+d_-\beta=0$; set $\varphi\coloneqq d_+\alpha=-\,d_-\beta$ to get the stated form.
The relative phase term is scalar on each block, hence central in $\mathfrak l$.

We carry out the explicit calculation to demonstrate the Lie isomorphism.
Let
\[
D\ \coloneqq\ \Big(\tfrac{1}{d_+}\,I_{d_+}\Big)\ \oplus_m\ \Big(-\tfrac{1}{d_-}\,I_{d_-}\Big),\qquad
\]
\[
X\coloneqq\Phi(A_0,B_0,i\varphi)=\begin{bmatrix}A_0&0\\[2pt]0&B_0\end{bmatrix}+i\varphi\,D,\quad Y\coloneqq\Phi(A_0',B_0',i\varphi')=\begin{bmatrix}A_0'&0\\[2pt]0&B_0'\end{bmatrix}+i\varphi'\,D.
\]
Then
\[
[X,Y]
=\begin{bmatrix}[A_0,A_0']&0\\[2pt]0&[B_0,B_0']\end{bmatrix}
=\Phi\big([A_0,A_0'],\, [B_0,B_0'],\, 0\big).
\]
On the domain \(\mathfrak{su}(d_+)\oplus_m\mathfrak{su}(d_-)\oplus_m i\mathbb R\), the Lie bracket is
\(
[(A_0,B_0,i\varphi),(A_0',B_0',i\varphi')]=([A_0,A_0'],[B_0,B_0'],0)
\)
since the \(i\mathbb R\) component is central. Therefore \(\Phi\) preserves brackets, i.e.\,\ it is a Lie algebra homomorphism. Combined with the uniqueness of the block decomposition (already shown), \(\Phi\) is a bijection; hence \(\Phi\) is a Lie algebra isomorphism.

We can carry out the dimension check to verify the isomorphism.
As real Lie algebras,
\[
\dim \mathfrak l \;=\; (d_+^2-1)+(d_-^2-1)+1 \;=\; d_+^2+d_-^2-1,
\]
which matches the dimension of traceless block–diagonal skew–Hermitian matrices of sizes $d_+$ and $d_-$.
\end{proof}
\end{step}

\begin{step}{3: Involution, projections, and decomposition of the algebra}
\medskip
Here, we investigate the remaining part of $\mathfrak{su}(2^N)$. 
The $\mathfrak{su}(2^N)$ is decomposed into even and odd eigenspaces of $\theta$, and their matrix representations can be obtained explicitly.

\begin{lemma}[Even/odd projections and direct sum]\label{lem:theta-proj}
Let $E_\pm\coloneqq\tfrac12(\mathrm{id}\pm\theta)$, where $\theta=\Ad_R$ satisfies $\theta^2=\mathrm{id}$.
Then $E_\pm$ are complementary projections ($E_\pm^2=E_\pm$, $E_+E_-=0$, $E_++E_-=\mathrm{id}$) with images
\(
\mathrm{Im}\,E_+=\{X:\theta(X)=X\}=\mathfrak l
\)
and
\(
\mathrm{Im}\,E_-=\{X:\theta(X)=-X\}=: \mathfrak m.
\)
Consequently,
\[
\mathfrak{su}(2^N)=\mathfrak l\ \oplus_m\ \mathfrak m
\]
\end{lemma}

\begin{proof}
Compute $E_\pm^2=\tfrac14(\mathrm{id}\pm\theta)^2=\tfrac14(\mathrm{id}\pm2\theta+\theta^2)=E_\pm$ and $E_+E_-=\tfrac14(\mathrm{id}-\theta^2)=0$, $E_++E_-=\mathrm{id}$.
For any $X$, $\theta(E_+(X))=E_+(X)$ and $\theta(E_-(X))=-E_-(X)$, so $\mathrm{Im}\,E_\pm\subseteq\Fix(\pm\theta)$. Conversely, if $Y$ satisfies $\theta(Y)=\pm Y$, then $E_\pm(Y)=Y$, hence $Y\in\mathrm{Im}\,E_\pm$. Therefore $\mathrm{Im}\,E_\pm=\Fix(\pm\theta)$. Finally $E_++E_-=\mathrm{id}$ yields the direct sum and trivial intersection.
\end{proof}

\begin{proposition}[Step 3: Explicit odd space and brackets]\label{prop:symmetric-pair}
In the $R$-eigenbasis $\mathcal H=\mathcal H_+\oplus_v\mathcal H_-$,
\begin{align*}
\mathfrak l&=\left\{\begin{psmallmatrix}A&0\\ 0&B\end{psmallmatrix}: A\in\mathfrak u(d_+),\, B\in\mathfrak u(d_-),\, \Tr A+\Tr B=0\right\},\\
\mathfrak m&=\left\{\begin{psmallmatrix}0&T\\ -T^\dagger&0\end{psmallmatrix}: T\in\mathbb C^{d_+\times d_-}\right\},
\end{align*}
and the brackets obey $[\mathfrak l,\mathfrak l]\subseteq\mathfrak l$, $[\mathfrak l,\mathfrak m]\subseteq\mathfrak m$, $[\mathfrak m,\mathfrak m]\subseteq\mathfrak l$.
\end{proposition}

\begin{proof}
\emph{Block form of $\theta$ and of $E_\pm$.}
Choose an orthonormal basis adapted to $R$, so
\(
R=\mathrm{diag}(I_{d_+},-I_{d_-})
\)
and
\(
\theta=\Ad_R: X\mapsto RXR^{-1}.
\)
Write a general matrix $X$ in block form relative to $\mathcal H_+\oplus_v\mathcal H_-$:
\[
X=\begin{psmallmatrix}A&C\\ D&B\end{psmallmatrix}.
\]
A direct multiplication gives
\[
\theta(X)=RXR=\begin{psmallmatrix}A&-C\\ -D&B\end{psmallmatrix},
\quad \mathrm{thus}\quad 
E_+(X)=\begin{psmallmatrix}A&0\\ 0&B\end{psmallmatrix}\;,\qquad
E_-(X)=\begin{psmallmatrix}0&C\\ D&0\end{psmallmatrix}.
\]

\emph{Description of $\mathfrak l$.} The descriptions of $\mathfrak l$ from Lemma~\ref{lem:block-diag}.

\emph{Description of $\mathfrak m$.}
By Lemma~\ref{lem:theta-proj}, $\mathfrak m=\mathrm{Im}\,E_-\cap\mathfrak{su}(2^N)$ consists of block off–diagonal matrices
\(
X=\begin{psmallmatrix}0&C\\ D&0\end{psmallmatrix}.
\)
Imposing $X^\dagger=-X$ (skew-Hermitian) yields
\(
\begin{psmallmatrix}0&D^\dagger\\ C^\dagger&0\end{psmallmatrix}
= -\begin{psmallmatrix}0&C\\ D&0\end{psmallmatrix}
\),
so $D=-C^\dagger$. Writing $T\coloneqq C$ gives the claimed parametrization
\(
\mathfrak m=\big\{\begin{psmallmatrix}0&T\\ -T^\dagger&0\end{psmallmatrix}:T\in\mathbb C^{d_+\times d_-}\big\}.
\)
Tracelessness holds automatically because such $X$ are skew–Hermitian.

Bracket relations are checked by direct block multiplication.
\end{proof}

\begin{remark}[Orthogonality and dimensions]
With the Hilbert–Schmidt inner product $\langle X,Y\rangle=\mathrm{Tr}(X^\dagger Y)$, $\mathfrak l$ and
$\mathfrak m$ are orthogonal subspaces: $\mathrm{Tr}(\mathrm{diag}(A,B)\,M_T)=0$.
The real dimensions add correctly:
\[
\dim\mathfrak l = d_+^2+d_-^2-1,\qquad
\dim\mathfrak m = 2\,d_+d_-,\qquad
\dim\mathfrak{su}(2^N) = (d_++d_-)^2-1.
\]
\end{remark}
\end{step}

\begin{step}{4: Decomposing the reflection-breaking term}
\medskip
Any reflection-breaking term will introduce a non-zero element in $\mathfrak{m}$.

\begin{lemma}[Step 4: Reflection $\pm$ components of a Hamiltonian]\label{lem:break-split-correct}
For a Hermitian $H_{\mathrm{break}}$, define $H_\pm\coloneqq E_\pm(H_{\mathrm{break}})$, so $H_{\mathrm{break}}=H_++H_-$ and $\theta(H_\pm)=\pm H_\pm$.
Then $iH_+\in\mathfrak l$, $iH_-\in\mathfrak m$, and $RH_{\mathrm{break}}R\neq H_{\mathrm{break}}$ iff $H_-\neq 0$.
\end{lemma}

\begin{proof}
$E_\pm$ preserve Hermiticity, so $H_\pm$ are Hermitian with the stated parity.
Multiplying by $i$ yields skew-Hermitian elements in $\mathfrak l$ and $\mathfrak m$.
The breaking equivalence is $RH_{\mathrm{break}}R=H_{\mathrm{break}}\iff E_-(H_{\mathrm{break}})=0$.
\end{proof}

\begin{remark}[Hermitian odd vs.\,\ Lie odd]\label{rem:odd-blocks}
In the $R$-eigenbasis, $\theta$-odd \emph{Hermitian} matrices are $\begin{psmallmatrix}0&T\\ T^\dagger&0\end{psmallmatrix}$; elements of the \emph{Lie} odd space are $\begin{psmallmatrix}0&S\\ -S^\dagger&0\end{psmallmatrix}$. Multiplication by $i$ maps one to the other via $S=iT$.
\end{remark}

\begin{remark}[Tracelessness and the identity]
If $H_{\mathrm{break}}$ has a trace component, it contributes a multiple of $I$, which is $\theta$-even and central.
Since $[I,\cdot]=0$ and $\mathfrak{su}(2^N)$ consists of traceless operators, replacing $H_{\mathrm{break}}$ by
$H_{\mathrm{break}}-\frac{\mathrm{Tr}\,H_{\mathrm{break}}}{2^N}I$ does not change the generated DLA.
Hence we may assume $H_{\mathrm{break}}$ is traceless without loss of generality.
\end{remark}
\end{step}

\begin{step}{5: $\mathfrak m$ is an irreducible $\mathfrak l$-module}
\medskip
We prove that, $\mathfrak{m}$ is irreducible under the action of $\mathfrak{l}$, implying given the full $\mathfrak{l}$ and any non-zero element in $\mathfrak{m}$, we can generate the full $\mathfrak{m}$. Set $V\coloneqq\mathcal H_+$ and $W\coloneqq\mathcal H_-$. Identify
\[
\mathfrak m\;\cong\;\Hom(W,V)\;\cong\;V\otimes W^*,\qquad
T\longleftrightarrow M_T\coloneqq\begin{psmallmatrix}0&T\\ -T^\dagger&0\end{psmallmatrix}.
\]

\begin{lemma}[Model and action]\label{lem:m-model-action}
For $K=\mathrm{diag}(A,B)\in\mathfrak l$ one has
\[
[\,K,M_T\,]=M_{AT-TB}.
\]
Equivalently, on $V\otimes W^*$,
\[
(A,B)\cdot(v\otimes\psi)=Av\otimes\psi+v\otimes(B\cdot\psi),\qquad
(B\cdot\psi)(w)\coloneqq-\psi(Bw).
\]
Let $H_{\mathrm{rel}}\coloneqq i\big(\tfrac1{d_+}I_{d_+}\big)\oplus_m i\big(-\tfrac1{d_-}I_{d_-}\big)$. Then
$\ad_{H_{\mathrm{rel}}}(M_T)=M_{icT}$ with $c\coloneqq d_+^{-1}+d_-^{-1}$.
\end{lemma}

\begin{proof}[Proof of Lemma~\ref{lem:m-model-action}]
Write $K=\mathrm{diag}(A,B)$ and $M_T=\begin{psmallmatrix}0&T\\ -T^\dagger&0\end{psmallmatrix}$. Then $[K,M_T]=\begin{psmallmatrix}0&AT-TB\\ -(AT-TB)^\dagger&0\end{psmallmatrix}=M_{AT-TB}$ from direct matrix calculation.
Transport to $V\otimes W^*$ via $T\leftrightarrow \Phi(v\otimes\psi)$ to get
$(A,B)\cdot(v\otimes\psi)=Av\otimes\psi+v\otimes(B\cdot\psi)$ with $(B\cdot\psi)(w)=-\psi(Bw)$.
For $H_{\mathrm{rel}}=i(\tfrac1{d_+}I_{d_+})\oplus i(-\tfrac1{d_-}I_{d_-})$, we can compute$\ad_{H_{\mathrm{rel}}}(M_T)
=M_{\,ic\,T}, c=d_+^{-1}+d_-^{-1}.$
\end{proof}

\begin{lemma}[Spectral separation by one Cartan]\label{lem:spectral-sep}
Choose $H=\mathrm{diag}(H^+,H^-)\in\mathfrak l$ with
\[
H^+=i\,\mathrm{diag}(\alpha_1,\dots,\alpha_{d_+}),\qquad
H^-=i\,\mathrm{diag}(\beta_1,\dots,\beta_{d_-}),
\]
and set $\Delta_{ij}\coloneqq \alpha_i-\beta_j$. For $E_{ij}\in\Hom(W,V)$ with $E_{ij}(w_j)=v_i$ and $E_{ij}(w_{j'})=0$ for $j'\neq j$,
\[
\ad_H(M_{E_{ij}})=\Delta_{ij}\,M_{iE_{ij}},\qquad
\ad_H(M_{iE_{ij}})=-\Delta_{ij}\,M_{E_{ij}}.
\]
Hence on each real $2$-plane $\mathrm{span}_{\mathbb R}\{M_{E_{ij}},M_{iE_{ij}}\}$ we have $\ad_H^2=-\Delta_{ij}^2I$. If the values $\{\Delta_{ij}^2\}$ are pairwise distinct (e.g. $\alpha_i=i-\bar\alpha$, $\beta_j=j\sqrt2-\bar\beta$ with $\tr H^\pm=0$), then for any $0\neq M_T\in\mathfrak m$ there exists a real polynomial $p$ with
\[
p(\ad_H^2)\,M_T\in \mathrm{span}_{\mathbb R}\{M_{E_{i_0j_0}},M_{iE_{i_0j_0}}\}\setminus\{0\}
\]
for some $(i_0,j_0)$.
\end{lemma}

\begin{proof}
Let $H=\mathrm{diag}(H^+,H^-)$ with $H^+=i\,\mathrm{diag}(\alpha_1,\dots,\alpha_{d_+})$ and
$H^-=i\,\mathrm{diag}(\beta_1,\dots,\beta_{d_-})$ where all differences $\alpha_i-\beta_j$ are pairwise distinct.
For $E_{ij}\in\Hom(W,V)$,
\[
[H,M_{E_{ij}}]= \begin{psmallmatrix}0&H^+E_{ij}-E_{ij}H^-\\ -(H^+E_{ij}-E_{ij}H^-)^\dagger&0\end{psmallmatrix}  = \begin{psmallmatrix}0&(\alpha_i - \beta_j)iE_{ij}\\ (\alpha_i - \beta_j)iE_{ji}&0\end{psmallmatrix}
= (\alpha_i-\beta_j)\,M_{iE_{ij}},
\]
and similarly $[H,M_{iE_{ij}}]=-(\alpha_i-\beta_j)\,M_{E_{ij}}$.
Thus each real $2$-plane $\mathrm{span}_{\mathbb R}\{M_{E_{ij}},M_{iE_{ij}}\}$ is an eigenspace of $\ad_H^2$ with
eigenvalue $\lambda_{ij}\coloneqq -\Delta_{ij}^2 = -(\alpha_i-\beta_j)^2$, and these eigenvalues are all distinct by construction.

For any $0\neq M_T=\sum_{i,j}(t_{ij}M_{E_{ij}}+u_{ij}M_{iE_{ij}})$ choose an index $(i_0,j_0)$ with $(t_{i_0j_0},u_{i_0j_0})\neq(0,0)$ and set
\[
p(x)\coloneqq \prod_{(i,j)\neq(i_0,j_0)} \frac{x-\lambda_{ij}}{\lambda_{i_0j_0}-\lambda_{ij}} \ \in\ \mathbb R[x].
\]
Then $p(\ad_H^2)$ is the projector onto $\mathrm{span}_{\mathbb R}\{M_{E_{i_0j_0}},M_{iE_{i_0j_0}}\}$, so $p(\ad_H^2)M_T\in
\mathrm{span}_{\mathbb R}\{M_{E_{i_0j_0}},M_{iE_{i_0j_0}}\}\setminus\{0\}$.
\end{proof}

\begin{lemma}[Splitting a $2$-plane]\label{lem:plane-split}
Let $U\coloneqq M_{E_{ij}}$ and $V\coloneqq M_{iE_{ij}}$. Then
$\ad_{H_{\mathrm{rel}}}U=cV$ and $\ad_{H_{\mathrm{rel}}}V=-cU$ with $c=d_+^{-1}+d_-^{-1}$.
Consequently, for any $W=aU+bV\neq0$,
\[
\Big(aI-\tfrac{b}{c}\,\ad_{H_{\mathrm{rel}}}\Big)W\propto U,\qquad
\Big(bI+\tfrac{a}{c}\,\ad_{H_{\mathrm{rel}}}\Big)W\propto V.
\]
\end{lemma}

\begin{proof}
With $U=M_{E_{ij}}$ and $V=M_{iE_{ij}}$, Lemma~\ref{lem:m-model-action} gives
$\ad_{H_{\mathrm{rel}}}U=M_{ic\,E_{ij}}=c\,V$ and $\ad_{H_{\mathrm{rel}}}V=M_{ic\,iE_{ij}}=-c\,U$.
For $W=aU+bV$, compute $\ad_{H_{\mathrm{rel}}}W=ac\,V-bc\,U$, and then
\[
\Big(aI-\tfrac{b}{c}\ad_{H_{\mathrm{rel}}}\Big)W
= a(aU+bV)-\tfrac{b}{c}(ac\,V-bc\,U)
= (a^2+b^2)\,U,
\]
and similarly $\big(bI+\tfrac{a}{c}\ad_{H_{\mathrm{rel}}}\big)W=(a^2+b^2)\,V$.
\end{proof}

\begin{lemma}[Index moves]\label{lem:index-moves-concise}
Let $E^{(+)}_{pq}$ and $E^{(-)}_{rs}$ be the matrix units on $V$ and $W$, for $p,q = 1,\dots, d_+$ and $r,s = 1,\dots, d_-$.
Let $X^+_{pi}\coloneqq E^{(+)}_{pi}-E^{(+)}_{ip}\in\lie{su}(d_+)$ and
$X^-_{jq}\coloneqq E^{(-)}_{jq}-E^{(-)}_{qj}\in\lie{su}(d_-)$. Then
\[
[\,\mathrm{diag}(X^+_{pi},0),\,M_{E_{ij}}\,]=M_{E_{pj}},\qquad
[\,\mathrm{diag}(0,X^-_{jq}),\,M_{E_{ij}}\,]=-\,M_{E_{iq}},
\]
and the same holds with $M_{E_{ij}}$ replaced by $M_{iE_{ij}}$.
Thus commutators with such off-diagonal elements move $(i,j)$ to any $(p,q)$.
\end{lemma}

\begin{proof}
For $K=\mathrm{diag}(A,B)$ and $T\in\Hom(W,V)$, Lemma~\ref{lem:m-model-action} yields $[K,M_T]=M_{AT-TB}$.
Set $A=X^+_{pi}=E^{(+)}_{pi}-E^{(+)}_{ip}$, $B=0$. Then
$AT=(E^{(+)}_{pi}-E^{(+)}_{ip})E_{ij}=E_{pj}$ and $TB=0$, hence
$[\,\mathrm{diag}(X^+_{pi},0),M_{E_{ij}}\,]=M_{E_{pj}}$.
With $A=0$ and $B=X^-_{jq}=E^{(-)}_{jq}-E^{(-)}_{qj}$ we have
$AT=0$ and $TB=E_{ij}(E^{(-)}_{jq}-E^{(-)}_{qj})=E_{iq}$, giving
$[\,\mathrm{diag}(0,X^-_{jq}),M_{E_{ij}}\,]=-M_{E_{iq}}$.
Linearity gives the identical relations for $M_{iE_{ij}}$.
\end{proof}

\begin{proposition}[Step 5: Irreducibility of $\mathfrak m$]\label{prop:irreducible-m}
For any nonzero $M\in\mathfrak m$, the real Lie algebra generated by its $\mathfrak l$-orbit equals $\mathfrak m$:
\[
\mathrm{span}_{\mathbb R}\Big\langle\,\ad_{K_t}\cdots\ad_{K_1}(M):\ t\ge0,\ K_s\in\mathfrak l\,\Big\rangle=\mathfrak m.
\]
Thus, $M$ is cyclic on $\mathfrak{m}$.
\end{proposition}

\begin{proof}
Given $0\neq M\in\mathfrak m$, Lemma~\ref{lem:spectral-sep} provides $p\in\mathbb R[x]$ with
$W\coloneqq p(\ad_H^2)M\in\mathrm{span}_{\mathbb R}\{M_{E_{i_0j_0}},M_{iE_{i_0j_0}}\}\setminus\{0\}$.
By Lemma~\ref{lem:plane-split}, a polynomial in $\ad_{H_{\mathrm{rel}}}$ extracts $M_{E_{i_0j_0}}$ (up to scale).
Lemma~\ref{lem:index-moves-concise} then reaches any $M_{E_{pq}}$, and applying $\ad_{H_{\mathrm{rel}}}$ yields the partner $M_{iE_{pq}}$.
Since $\{M_{E_{pq}},M_{iE_{pq}}\}$ is a real basis of $\mathfrak m$, the real Lie algebra generated by the $\mathfrak l$-orbit of $M$ equals $\mathfrak m$.
\end{proof}
\end{step}

\begin{step}{6: Closure and conclusion}

\begin{proposition}[Step 6: Closure once a single odd element is present, i.e., \Cref{qubit:UQC_chain}.]\label{prop:closure}
Let
\[
\mathfrak g\ \coloneqq\ \mathrm{span}_{\mathbb R}\!\left\langle\, iH_X,\ iH_Z,\ iH_{ZZ},\ iH_{\mathrm{break}}\,\right\rangle_{\mathrm{Lie}}.
\]
If $(H_{\mathrm{break}})_-\neq 0$, then $\mathfrak g=\mathfrak{su}(2^N)$.
\end{proposition}

\begin{proof}
By Step~1, we have $\mathfrak l\subseteq\mathfrak g$. By Step~4, $iH_-\in\mathfrak m$ is nonzero.
By Step~5, the $\mathfrak l$-orbit spans $\mathfrak m$, hence we obtain $\mathfrak m\subseteq\mathfrak g$.
Since $\mathfrak{su}(2^N)=\mathfrak l\oplus_m\mathfrak m$ (Lemma~\ref{lem:theta-proj}), we conclude $\mathfrak g=\mathfrak{su}(2^N)$.
\end{proof}

\begin{corollary}[Necessary and sufficient criterion]\label{cor:iff}
With uniform controls $H_X,H_Z,H_{ZZ}$,
\[
\mathrm{span}_{\mathbb R}\!\left\langle\, iH_X,\ iH_Z,\ iH_{ZZ},\, iH_{\mathrm{break}}\,\right\rangle_{\mathrm{Lie}}
=\begin{cases}
\mathfrak{su}(2^N), & (H_{\mathrm{break}})_-\neq 0,\\[3pt]
\mathfrak l, & (H_{\mathrm{break}})_-=0.
\end{cases}
\]
\end{corollary}

\begin{remark}[Edge case $N=1$]
If $N=1$, then $R=I$, $\mathfrak m=\{0\}$ and $\mathfrak l=\mathfrak{su}(2)$; the criterion is vacuous. For $N\ge 2$ both parity sectors are nontrivial.
\end{remark}
By combining the above six steps, we complete the proof of \Cref{qubit:UQC_chain}. \hfill $\square$
\medskip
\end{step}

As noted in \Cref{remark:generalization_thm_1}, the assumption of homogeneous single-Pauli interactions is not essential. 
The framework naturally extends to more general cases, including multiple-Pauli interactions (linear combinations of single-Pauli terms) and mixed-Pauli interactions (reflection-symmetric bilinear couplings such as $XY+YX$). 
In these broader settings, universality is preserved except in certain degenerate cases where the interaction coefficients enforce additional symmetries. 
We first treat the multiple-Pauli case.

\begin{corollary}\label{corollary:generalization_interaction}
Consider a homogeneous nearest-neighbor interaction of the form
\begin{equation}
    H_{\mathrm{int}} = c_X H_{XX} + c_Y H_{YY} + c_Z H_{ZZ}\;,
\end{equation}
where $H_{PP} = \sum_{\langle i,j\rangle} P_i P_j$ for $P \in \{X,Y,Z\}$. 
If the coefficients satisfy either
\begin{equation} \label{eqn:condition_coeff_multiple_Pauli_1}
    c_X \neq c_Y,\quad c_X \neq c_Z,\quad c_X + c_Y + c_Z \neq 0\;,
\end{equation}
or
\begin{equation}\label{eqn:condition_coeff_multiple_Pauli_2}
    c_X \neq c_Y,\quad c_X + c_Y - c_Z \neq 0,\quad c_X + c_Y + c_Z \neq 0\;,
\end{equation}
together with cyclic permutations $X \to Y \to Z$, then the universality result of \Cref{qubit:UQC_chain} still holds. In particular, adding any reflection symmetry-breaking control Hamiltonian suffices to achieve universal quantum computation.
\end{corollary}

\begin{proof}
Assume the condition in \Cref{eqn:condition_coeff_multiple_Pauli_1}. A straightforward calculation gives
\begin{equation}
    [H_Z,[H_Z,H_{\mathrm{int}}]] \propto (c_X - c_Y)(H_{XX}-H_{YY})\;.
\end{equation}
When $c_X \neq c_Y$, this yields $H_1 = H_{XX}-H_{YY}$. Substituting back,
\begin{equation}
    H_2 = H_{\mathrm{int}} + c_Y H_1 = (c_X+c_Y)H_{XX} + c_Z H_{ZZ}\;.
\end{equation}
Similarly,
\begin{equation}
    [H_Y,[H_Y,H_{\mathrm{int}}]] \propto (c_X - c_Z)(H_{ZZ}-H_{XX})\;,
\end{equation}
which isolates $H_3 = H_{ZZ}-H_{XX}$ when $c_X \neq c_Z$. Combining with $H_2$,
\begin{equation}
    H_4 = H_2 + (c_X+c_Y)H_3 = (c_X+c_Y+c_Z) H_{ZZ}\;.
\end{equation}
Thus, if $c_X+c_Y+c_Z \neq 0$, we recover a pure $H_{ZZ}$, reducing to the setting of \Cref{qubit:UQC_chain}.  

If instead the condition in \Cref{eqn:condition_coeff_multiple_Pauli_2} holds, a similar calculation with
\begin{equation}
    [H_Y,[H_Y,H_{2}]] \propto (c_X+c_Y-c_Z)(H_{ZZ}-H_{XX})
\end{equation}
yields $H_{ZZ}$ directly.  

Since the argument is symmetric under cyclic permutations of $(X,Y,Z)$, the result applies generally. 
\end{proof}

We now turn to the mixed-Pauli case.

\begin{corollary}\label{corollary:generalization_interaction_2}
Consider a homogeneous nearest-neighbor interaction of the form
\begin{equation}
    H_{\mathrm{int}} = c_XH_{XX} + c_YH_{YY}+c_ZH_{ZZ} + c_{XY}H_{XY} + c_{YZ}H_{YZ} + c_{ZX}H_{ZX}\;,
\end{equation}
where $H_{P_1P_2} = \sum_{\langle i,j \rangle} (P_{1,i}P_{2,j}+P_{2,i}P_{1,j})$ for distinct $P_1,P_2 \in \{X,Y,Z\}$ are reflection-symmetric mixed-Pauli interactions. 
If the coefficients satisfy the conditions of \Cref{corollary:generalization_interaction}, then the universality result of \Cref{qubit:UQC_chain} continues to hold.
\end{corollary}

\begin{proof}
It suffices to show that the mixed terms $H_{XY},H_{YZ},H_{ZX}$ can be isolated whenever their coefficients are nonzero. 
Without loss of generality, assume $c_{XY} \neq 0$. Then
\begin{equation}
    [H_Z,[H_Z,H_{\mathrm{int}}]] \propto 2(c_X-c_Y)(H_{YY}-H_{XX}) - 4 c_{XY}H_{XY} - c_{YZ}H_{YZ} - c_{ZX}H_{ZX}\;.
\end{equation}
By taking suitable linear combinations with $H_{\mathrm{int}}$, the contributions from $H_{YZ}$ and $H_{ZX}$ can be canceled, leaving
\begin{equation}
    H_2 = H_{\mathrm{int}} + H_1 = (2c_Y - c_X) H_{XX} + (2c_X - c_Y)H_{YY} + c_ZH_{ZZ} - 3c_{XY}H_{XY}\;.
\end{equation}
Next, commutators with $H_Z$ generate
\begin{equation}
    H_3  = (c_Y - c_X) H_{XY} - 2c_{XY} (H_{YY}-H_{XX}) \propto [H_Z,H_2]\;,
\end{equation}
and
\begin{equation}
    H_4  = 2c_{XY} H_{XY} + (c_Y - c_X) (H_{YY}-H_{XX}) \propto [H_Z,H_3]\;.
\end{equation}
Finally, forming a linear combination of $H_3$ and $H_4$ yields
\begin{equation}
    (4c_{XY}^2+(c_Y-c_X)^2) H_{XY} \propto (c_Y-c_X)H_3 + 2c_{XY}H_4\;.
\end{equation}
Since $c_{XY} \neq 0$ by assumption, this isolates $H_{XY}$. 
Analogous arguments apply to $H_{YZ}$ and $H_{ZX}$. 
Thus, $H_{\mathrm{int}}$ can always be reduced to the multiple-Pauli form of \Cref{corollary:generalization_interaction}, and the same universality condition applies.
\end{proof}

\begin{example}
    We consider three examples with different $H_{X,\alpha}$ patterns that are universal: The first one is the one-dimensional dual-species Ising chain with alternating $A$ and $B$ sites for $H_{X,\alpha}$. This control structure is feasible to the current dual-species neutral atom platforms, with the following graph representation: \begin{equation} 
    \begin{tikzpicture}[baseline={(current bounding box.center)}][scale=1]
      \def\N{12}
      \def\sep{1.0}
      \foreach \i in {1,...,\N}{
          \pgfmathtruncatemacro{\k}{mod(\i-1,2)}
          \ifnum\k=0
              \node[circle,draw,minimum size=5pt,inner sep=0.5pt,
                    label={[yshift=-2pt]below:$A$}](site\i) at (\sep*\i,0){};
          \else
              \node[circle,draw,fill=black,minimum size=5pt,inner sep=0.5pt,
                    label={[yshift=-2pt]below:$B$}](site\i) at (\sep*\i,0){};
          \fi
      }
      \foreach \i in {1,...,11}{
          \pgfmathtruncatemacro{\j}{\i+1}
          \draw[dashed](site\i)--(site\j);
      }
      \foreach \i in {7}{\node at (\i-0.5,0.3){$ZZ$};}
      \draw(site1)node[above,yshift=4pt]{$X_A$};
      \draw(site2)node[above,yshift=4pt]{$X_B$};
      \draw(site3)node[above,yshift=4pt]{$X_A$};
      \draw(site4)node[above,yshift=4pt]{$X_B$};
    \end{tikzpicture}
    \end{equation}
    and the second one is a naive breaking of the lattice inversion symmetry in one dimension by two halves of different types of sites
    \begin{equation} 
    \begin{tikzpicture}[baseline={(current bounding box.center)}][scale=1]
      \def\N{12}
      \def\sep{1.0}
      \foreach \i in {1,...,\N}{
        \ifnum\i<7
        \node[circle,draw,minimum size=5pt,inner sep=0.5pt,label={[yshift=-2pt]below:$A$}](site\i) at (\sep*\i,0){};
          \else
              \node[circle,draw,fill=black,minimum size=5pt,inner sep=0.5pt,
                    label={[yshift=-2pt]below:$B$}](site\i) at (\sep*\i,0){};
          \fi
      }
      \foreach \i in {1,...,11}{
          \pgfmathtruncatemacro{\j}{\i+1}
          \draw[dashed](site\i)--(site\j);
      }
      \foreach \i in {5}{\node at (\i-0.5,0.3){$ZZ$};}
      \draw(site1)node[above,yshift=4pt]{$X_A$};
      \draw(site2)node[above,yshift=4pt]{$X_A$};
      \draw(site7)node[above,yshift=4pt]{$X_B$};
      \draw(site8)node[above,yshift=4pt]{$X_B$};
    \end{tikzpicture}\;.
    \end{equation}

As a two-dimensional example, we consider a Lieb lattice composed of three species of sites $A,B$ and $C$, with controllable global $X$-fields applied separately to each species. The pattern formed by the decorated sites and nearest-neighbor Ising bonds breaks all spatial symmetries. One can also prove its universality using a 2-dimensional version of the algorithm described in proving \Cref{qubit:UQC_chain}.
\begin{equation} \label{eqn:Lieb_lattice}
\begin{tikzpicture}[scale=1.5, baseline={(current bounding box.center)}]
    \def\Nx{2} 
    \def\Ny{2} 

    \def\sep{1.2}

    \foreach \x in {0,...,\Nx}{
        \foreach \y in {0,...,\Ny}{
            \node[circle,draw,minimum size=5pt,inner sep=0.5pt,label={[xshift = 6pt, yshift=0pt]below:$A$}]
                (A\x\y) at (\x*\sep,\y*\sep) {};
        }
    }

    \foreach \x in {0,...,\Nx}{
        \foreach \y in {0,...,\Ny}{
                \node[circle,draw,fill=black,minimum size=5pt,inner sep=0.5pt,label={[xshift=6pt, yshift=0pt]below:$B$}]
                    (B\x\y) at (\x*\sep+0.5*\sep,\y*\sep) {};
        }
    }

    \foreach \x in {0,...,\Nx}{
        \foreach \y in {0,...,\Ny}{
                \node[circle,draw,fill=gray!50,minimum size=5pt,inner sep=0.5pt,label={[xshift = 6pt,yshift=0pt]below:$C$}]
                    (C\x\y) at (\x*\sep,\y*\sep+0.5*\sep) {};
        }
    }

    \foreach \x in {0,...,\Nx}{
        \foreach \y in {0,...,\Ny}{
            \ifnum\x<\Nx
                \draw[dashed] (A\x\y) -- (B\x\y);
                \draw[dashed] (B\x\y) -- (A\the\numexpr\x+1\relax\y);
            \else
            \draw[dashed] (A\x\y) -- (B\x\y);
            \fi
        }
    }

    \foreach \x in {0,...,\Nx}{
        \foreach \y in {0,...,\Ny}{
            \ifnum\y<\Ny
                \draw[dashed] (A\x\y) -- (C\x\y);
                \draw[dashed] (C\x\y) -- (A\x\the\numexpr\y+1\relax);
            \else
            \draw[dashed] (A\x\y) -- (C\x\y);
            \fi
        }
    }

    \draw (A00) node[above,xshift=-12pt, yshift=4pt] {$X_A$};
    \draw (B00) node[above,yshift=4pt] {$X_B$};
    \draw (C00) node[above,xshift=-12pt, yshift=4pt] {$X_C$};

\end{tikzpicture}
\end{equation}
\end{example}

Although \Cref{qubit:UQC_chain} establishes universality for a one-dimensional qubit chain, its extension to arbitrary graphs or higher-dimensional lattices is straightforward. 
For instance, one may directly apply the same framework to the Lieb lattice (see \Cref{eqn:Lieb_lattice}). More generally, by placing qubits on the vertices of any graph and coloring vertices and edges to represent the global controls (cf. \Cref{remark:one_diml_examples}), one can formulate an analogous universality criterion.

We conjecture that breaking all symmetries of the underlying graph via an additional Hamiltonian term $H_{\mathrm{break}}$ suffices to promote the control set $\{H_X,H_Z,H_{ZZ},H_{\mathrm{break}}\}$ to universal quantum computation on any connected graph, as formalized in \Cref{qubit:UQC_graph}.
The proof proceeds in direct analogy with \Cref{qubit:UQC_chain}: one first derives the minimal Lie generators invariant under the graph automorphisms using only the uniform controls $\{H_X, H_Z, H_{ZZ}\}$, as in \Cref{lemma:equiv_DLA_2}, and shows that they are universal in the automorphism-symmetric subalgebra.
Then one shows that $H_{\mathrm{break}}$ is an irreducible module in the algebra, as in \Cref{prop:irreducible-m}, allowing one to achieve the full $\lie{su}(2^N)$. We leave the full technical proof to future work.

\begin{conjecture}[Minimal requirement for universal quantum computation on a qubit graph\label{qubit:UQC_graph}]
    Consider a connected graph $G=(V,E)$ hosting qubits on its vertices $V$, with homogeneous Ising-type nearest-neighbor interactions along edges $E$. Suppose the system admits tunable global control fields $H_{X} = \sum_{j\in V} X_j$ and $H_{Z} = \sum_{j\in V} Z_j$.
    Then, this system realizes universal quantum computation if and only if there exists at least one additional control Hamiltonian $H_{\mathrm{break}}$ breaks all nontrivial automorphisms of the control graph, rendering its automorphism group trivial.
\end{conjecture}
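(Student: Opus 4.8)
The plan is to follow the three-part architecture of \Cref{qubit:UQC_chain}: a symmetry obstruction handles necessity, and for sufficiency a pair of Lie-algebraic lemmas first compress the uniform controls down to the ``symmetry-allowed'' generators and then use $H_{\mathrm{break}}$ to de-symmetrize, after which one invokes the standard universality of single-qubit rotations together with entangling couplings along a connected interaction graph~\cite{Barenco_1995_UQC,Nielsen_Chuang_2010}. Throughout, set $N := |V|$. For necessity, let $\Gamma_0 = \mathrm{Aut}(G)$ be the automorphism group of the (uncolored) interaction graph and, for $g\in\Gamma_0$, let $R_g$ be the qubit-permutation unitary, so that $R_g X_j R_g^\dagger = X_{g(j)}$, $R_g Z_j R_g^\dagger = Z_{g(j)}$, and hence $R_g$ commutes with $H_X$, $H_Z$, and $H_{ZZ}$. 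If the \emph{colored} control graph keeps a nontrivial automorphism $g$, then $R_g$ also commutes with $H_{\mathrm{break}}$, hence with every element of $\mathfrak{g}=\mathrm{span}_{\mathbb{R}}\langle iH_X,iH_Z,iH_{ZZ},iH_{\mathrm{break}}\rangle_{\mathrm{Lie}}$; since a nontrivial vertex permutation is not a scalar unitary, $\mathfrak{g}$ sits inside the proper subalgebra of $\mathfrak{su}(2^N)$ commuting with $R_g$, and universality fails.

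For the sufficiency direction, \textbf{Step 1} is the graph analogue of \Cref{lemma:equiv_DLA_2} and is the technical core. Let $\mathcal{O}_V,\mathcal{O}_E$ be the $\Gamma_0$-orbits on vertices and edges, and introduce the orbit-sum generators $H_{X,O}=\sum_{j\in O}X_j$, $H_{Z,O}=\sum_{j\in O}Z_j$ for $O\in\mathcal{O}_V$, and $H_{ZZ,O}=\sum_{(j,k)\in O}Z_jZ_k$ for $O\in\mathcal{O}_E$. Since $H_X,H_Z,H_{ZZ}$ are each sums of such orbit sums, the inclusion of the uniform-control DLA into the Lie closure of the orbit-sum generators is immediate; the substantive claim is the reverse. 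I would prove it by imitating the chain: first manufacture non-Ising nearest-neighbor couplings ($H_{YY}$, $H_{YZ}$) and higher-weight string operators from $[H_X,H_{ZZ}]$ and iterated brackets, exactly as in \Cref{eqn:nearest_neighbor_inter_algebra}; then exploit the \emph{graph inhomogeneity} --- vertices of distinct degree, edges bridging distinct local neighborhoods --- to peel off one orbit sum at a time. Concretely, I would fix a $\Gamma_0$-invariant filtration of $V$ and $E$ (for instance from the stable coloring / equitable partition, or from distance profiles to structurally distinguished vertices) and induct along it: the ``extremal'' cells of the filtration play the role of the path endpoints in \Cref{lemma:equiv_DLA_2}, being distinguishable from the bulk in the generated operators exactly as the open-boundary $YY$ terms were there, which isolates their orbit sums and lets one recurse on the remainder. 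To rule out any \emph{accidental} symmetry of the resulting closed algebra beyond $\Gamma_0$, the cleanest packaging is a double-commutant statement: show that the associative algebra generated by $\{H_{X,O},H_{Z,O},H_{ZZ,O}\}$ is all of $\mathrm{End}_{\Gamma_0}\!\big((\mathbb{C}^2)^{\otimes N}\big)$, equivalently that the commutant of the uniform-control DLA is exactly $\mathrm{span}\{R_g:g\in\Gamma_0\}$.

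\textbf{Step 2} is the graph analogue of \Cref{lemma:1D_universal_inter_set}. Triviality of the colored automorphism group says precisely that for every $g\neq\mathrm{id}$ in $\Gamma_0$ the term $H_{\mathrm{break}}$ fails to commute with $R_g$; pick such a $g$ together with a vertex $v$ whose $H_{\mathrm{break}}$-coloring differs from that of $g(v)$. As in the double-commutator trick of \Cref{lemma:1D_universal_inter_set} --- where removing the reflection partner lets one write $Z_1\propto[H_{X,\alpha'},[H_{X,\alpha'},H_{Z,1}]]$ --- the asymmetric summand of $H_{\mathrm{break}}$ near $v$ splits the orbit of $v$, isolating at least one operator ($X_v$ or $Z_v$) strictly finer than any orbit sum. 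Connectivity of $G$ then propagates this refinement: from an isolated $Z_v$ and the edge-orbit sums $H_{ZZ,O}$ one recovers the refined couplings $Z_vZ_w$ along edges at $v$, hence $X_w,Z_w$ at the neighbours, and iterating outward along a spanning tree reaches every vertex. The rigorous bookkeeping is an induction on the residual symmetry group --- the subgroup of $\Gamma_0$ fixing all operators isolated so far --- which strictly shrinks at each step and, because the colored graph is rigid, terminates at $\{\mathrm{id}\}$, at which point $\{X_j,Z_j:j\in V\}$ and $\{Z_jZ_k:(j,k)\in E\}$ all lie in $\mathfrak{g}$.

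Once every $X_j$, $Z_j$, and edge coupling $Z_jZ_k$ belongs to $\mathfrak{g}$, connectedness of $G$ together with the standard universality result~\cite{Barenco_1995_UQC} --- single-qubit rotations plus an entangling interaction along a connected graph generate $\mathfrak{su}(2^N)$ --- gives $\mathfrak{g}=\mathfrak{su}(2^N)$, and \Cref{lemma:Linear_Combo_Generators,lemma:Commutator_Generators} then yield universality. The hard part will be Step 1: the chain proof leaned heavily on the rigidity of the open path (a clean eccentricity filtration, genuine boundary terms to peel off), and carrying the peeling argument --- and especially the double-commutant ``no accidental symmetry'' statement --- over to an arbitrary connected graph with a potentially intricate interplay between its vertex and edge orbit structures is where the real difficulty lies. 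A secondary obstacle is making the Step-2 propagation uniform over all connected graphs: proving that breaking one automorphism at a time via purely local commutator manipulations always terminates with every orbit reduced to a singleton.
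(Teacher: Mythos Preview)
The statement you are addressing is a \emph{conjecture} in the paper, not a proved theorem: the paper explicitly writes ``We leave the full technical proof to future work'' and offers only a two-sentence sketch. Your proposal follows exactly that sketch --- necessity via a surviving permutation symmetry $R_g$ in the commutant, then a two-step sufficiency argument (uniform controls $\Rightarrow$ orbit-sum generators, then $H_{\mathrm{break}}$ to split orbits down to singletons) --- so at the level of architecture you and the paper agree.

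Your write-up is in fact more detailed and more honest than the paper's sketch. You correctly pair Step~1 with \Cref{lemma:equiv_DLA_2} and Step~2 with \Cref{lemma:1D_universal_inter_set} (the paper's prose appears to have these references swapped), and you add substance the paper does not: the equitable-partition/stable-coloring filtration as the graph substitute for the path's eccentricity peeling, the double-commutant reformulation $\mathrm{Comm}(\mathfrak{g}_{\mathrm{unif}}) = \mathrm{span}\{R_g:g\in\Gamma_0\}$ to exclude accidental symmetries, and the spanning-tree propagation in Step~2. You also correctly flag Step~1 as the genuine obstruction --- the chain proof exploits the open boundary to get a canonical starting point for the induction, and for a general connected graph (especially a vertex- or edge-transitive one, where the orbit sums \emph{are} the uniform controls and Step~1 is vacuous but Step~2 must absorb all the work) there is no such free anchor. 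None of this is a gap in your reasoning; it is precisely why the paper states the result as a conjecture rather than a theorem.
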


\section{Compilation efficiency of analog quantum computation\label{app:compilation_efficiency}}

In this appendix, we discuss the complexity of pulse optimization from a theoretical point of view.
In analog quantum computation or simulation, one realizes a target unitary by optimizing the control pulses.
In this setting, the complexity of the compilation is determined by the size of the corresponding quantum optimal control problem.
Here, under physical assumptions, we generalize the result in Ref.~\cite{Lloyd_2014} and provide a bound (for a given precision $\varepsilon$) on the problem size, which is polynomial with respect to the set's size of unitaries that are reachable in polynomial time.
We then explore the implications of the result.
We summarize the essential mathematical notations in this section in Table~\ref{tab:notations_compilation}.

\begin{table}[h]
\centering
\caption{Summary of mathematical notations in this section.}
\label{tab:notations_compilation}
\begin{tabular}{cc}
\toprule
\textbf{Symbol} & \textbf{Meaning} \\
\midrule
$d$      & Dimension of the physical Hilbert space \\
$\mathcal{D}$           & Size of the QOC problem \\
$\mathbf{u}(t) = \{u_{\alpha}(t)\}$ & Control fields (pulses)\\
$\{H_\alpha\}$ & Control Hamiltonians \\
$U_{\mathrm{target}}$ & Target unitary of the QOC problem\\
$\overline{\mathbf{u}}(t)$ & Solution of the QOC problem\\
$\overline{U}(t)$ & Resulting unitary of the QOC problem\\
$b_\alpha$ & maximal information in the pulse $u_\alpha(t)$\\
$\mathcal{D}_\alpha$ & Size of the optimization problem for pulse $u_\alpha(t)$\\
$C_\alpha$ & Channel capacity of pulse $u_\alpha(t)$\\
$\mathcal{U}$ & The set of reachable unitaries of a control system \\
$D_{\mathcal{U}}(d)$      & Dimension of $\mathcal{U}$ \\
$\mathcal{U}^+$ & The set of polynomial-time reachable unitaries \\
$D_{\mathcal{U}^+}(d)$      & Dimension of $\mathcal{U}^+$ \\
\bottomrule
\end{tabular}
\end{table}

We start by defining the size $\mathcal{D}$ of the quantum optimal control (QOC) problem.
As shown in \Cref{eq:time_dependent_Hamiltonian}, we consider the unitary dynamics $U(t)$ generated by a time-dependent Hamiltonian $H(t) = \sum_\alpha u_\alpha(t)H_\alpha$, with control fields and control Hamiltonians specified by $\mathbf{u}(t)$ and $\{H_\alpha\}$, respectively.
Here, $U(t)$ is governed by the following dynamical equation:
\begin{equation}
    i\partial_t{U}(t) = H(t)U(t)\;,
\end{equation}
with the initial condition $U(t=0) = I_d$ for $d$-dimensional Hilbert space.
The manifold generated by arbitrary $\mathbf{u}(t)$'s defines a set of reachable unitaries $\mathcal{U}$ with dimension $D_{\mathcal{U}}(d)$.
For universal dynamics generated by $\{H_\alpha\}$, the generated manifold satisfies $\mathcal{U} = SU(d)$, with $D_{\mathcal{U}}(d) = d^2 - 1$.
For a target unitary $U_{\mathrm{target}}$, the QOC problem aims to find a solution $\overline{\mathbf{u}}(t)$ to generate a unitary $\overline{U}(t)$ that is $\varepsilon$-close to $U_{\mathrm{target}}$ in the operator norm, i.e. $\norm{\overline{U}(t) - U_{\mathrm{target}}}<\varepsilon$.
Formally, we have the following definition of a QOC problem:
\begin{definition}
    A QOC problem can be expressed as the minimization of some functional $\mathcal{F}$:
\begin{equation} \label{eq:QOC_functional}
    \min_{\mathrm{u}(t)} \mathcal{F} (U(t),U_{\mathrm{target}},\{\lambda_i\})\;,
\end{equation}
where $\mathcal{F}$ may be a function of the infidelity between $U(t)$ and $U_{\mathrm{target}}$ (as in \Cref{eq:rollout}), supplementing machine constraints specified by a set of Lagrange multipliers $\{\lambda_i\}$. 
\end{definition}
Given a QOC problem, we can define its size as follows:
\begin{definition}
    The size $\mathcal{D}$ of a QOC problem is defined as the minimal number of independent degrees of freedom in the field $\overline{\mathbf{u}}(t)$ to solve \Cref{eq:QOC_functional} up to precision $\varepsilon$. For example, the minimal number of sampling points to describe $\mathbf{u}(t)$.
\end{definition}

Then, we consider the information content in the QOC problem.
In the remainder of this section, we choose base $2$ for the function $\log$.
By treating a single control pulse $u_\alpha(t)$ as a classical noiseless channel, the channel capacity is given by Hartley's law~\cite{chitode2021information}:
\begin{equation}
    C_\alpha = t \Omega_\alpha \log\left( 1+ \frac{\Delta u_\alpha}{\delta u_\alpha} \right) = t\Omega_\alpha b_\alpha\;,
\end{equation}
where $\Omega_\alpha$ is the pulse frequency (in pulses per second), with $\Delta u_\alpha = u_{\alpha,\max}-u_{\alpha,\min}$ and $\delta u_{\alpha}$ the maximal and minimal variations of the pulse, respectively.
Here, the bit depth $b_\alpha = \log(1+\Delta u_\alpha/\delta u_\alpha)$ characterizes the maximal information (in bits) encoded into a single pulse through the variations.
Since $t\Omega_\alpha$ is the number of sampling points of the pulse (assuming uniform sampling rate), we can take the size $\mathcal{D}_\alpha = t\Omega_\alpha$ for optimizating $u_\alpha(t)$.
Then, the total size of the QOC problem is given by $\mathcal{D} = \sum_\alpha D_\alpha$.
Similarly, the total channel capacity is $C=\sum_\alpha C_\alpha$.
By defining the maximal and minimal bit depths of all $u_\alpha(t)$ by $b_{\max} \coloneqq \max_{\alpha} b_\alpha$ and $b_{\min} \coloneqq \min_{\alpha} b_\alpha$, we have the following upper and lower bounds for the total information content:
\begin{equation}
   \mathcal{D}b_{\min}\equiv C_{\min} \leq C \leq C_{\max} \equiv \mathcal{D}b_{\max}\;.
\end{equation}

In the remainder of this section, we focus on $N$-qubit systems with dimension $d = 2^N$.
In these systems, we study unitaries that are of physical interests.
First, we assume that the control system satisfies physical constraints, so it has bounded control Hamiltonians and pulses, i.e., $\norm{H_{\alpha}} = 1,u_\alpha(t) \in [u_{\alpha,\min},u_{\alpha,\max}],\forall \alpha,\forall t$.
Second, we introduce a set of unitaries $\mathcal{U}^+ \subseteq \mathcal{U}$, with dimension denoted by $D_{\mathcal{U^+}}(d)$.
$\mathcal{U}^+$ contains unitaries which are reachable (with precision $\varepsilon$) in pulse time $T$ that is polynomial to the set size $D_{\mathcal{U}^+}(d)$, namely:
\begin{equation}
    \mathcal{U}^+ = \left\{U_{\mathrm{target}}\left| \norm{U_{\mathrm{target}} - U(t)} \leq \epsilon, \forall \,U(t) = \mathcal{T}\exp \left(-i\int_{0}^{T} dt \sum_\alpha u_\alpha(t)H_\alpha\right)\right.\right\}\;,
\end{equation}
where $T = \mathrm{poly}(D_{\mathcal{U}^+})$.
Notably, for any universal control system $\{H_\alpha\}$, the whole space of unitaries $\mathcal{U} = SU(d)$ is reachable in \emph{exponential} time.
Moreover, if the physical constraints are relaxed, we always have $D_{\mathcal{U}^+}(d) = D_{\mathcal{U}}(d)$.

Here, we provide an information-theoretic counting argument to bound the total size $\mathcal{D}$ of the physically interesting QOC problems.
Our main statement is the following proposition:
\begin{proposition}[Generalization of the main theorem in~\cite{Lloyd_2014}] \label{prop:size_QOC_problem}
    The size $\mathcal{D}$ of a QOC problem, up to precision $\varepsilon$, is a function polynomial of the dimension $D_{\mathcal{U}^+}(d)$ of the polynomial-time reachable unitaries.
\end{proposition}

\begin{proof}
    We first provide a lower bound of $\mathcal{D}$ by a polynomial function of $D_{\mathcal{U}^+}$. 
    We partition the time-polynomial reachable unitaries $\mathcal{U}^+$ by $\varepsilon$-balls with size $\varepsilon^{D_{\mathcal{U}^+}}$.
    By setting the size of $\mathcal{U}^+$ as the unit, it is necessary to have $\varepsilon^{-D_{\mathcal{U^+}}}$ balls to cover $\mathcal{U}^+$, and the target unitary $U_{\mathrm{target}}$ is enveloped by one $\varepsilon$-ball.
    The information necessary to specify the target ball $\varepsilon$ is $ \log \varepsilon^{-D_{\mathcal{U}^+}}$, so we require $C_{\min} \geq \log \varepsilon^{-D_{\mathcal{U}^+}}$:
    \begin{equation}
        \varepsilon \geq 2^{- \frac{\mathcal{D} b_{\min}}{D_{\mathcal{U^+}}}}\;.
    \end{equation}
    Thus, given a constant $\varepsilon$ and $b_{\min}$, we have:
    \begin{equation}
        \mathcal{D} \geq CD_{\mathcal{U}^+}\;,\quad C = -\frac{\log\varepsilon}{b_{\min}}\;,
    \end{equation}
    which implies that $\mathcal{D}$ is lower bounded by a polynomial function of $D_{\mathcal{U}^+}$.

    Next, we prove the upper bound.
    The QOC problem solves a path $\gamma$ with length $L$ that connects the initial unitary $I_{d}$ to the target unitary $U_{\mathrm{target}}$.
    The maximum of (non-redundant) information is the necessary information content to describe the path.
    Specifying each $\varepsilon$-ball along the path requires $\log \varepsilon^{-D_{\mathcal{U}^+}}$ bits of information, while the path has $n_{\varepsilon} = L/\varepsilon$ balls.
    Since the Hamiltonians and pulses are bounded, when traversing along the path (from $I_d$ to $U_{\mathrm{target}}$), the velocity is bounded by some constant $c$.
    In other words, the implementation of $U_{\mathrm{target}}$ (up to precision $\varepsilon$) is not instantaneous.
    Thus, we have the following bound on the maximum information:
    \begin{equation}
    n_{\varepsilon} \log \varepsilon^{-D_{\mathcal{U}^+}} \leq \frac{cT}{\varepsilon} \log \varepsilon^{-D_{\mathcal{U}^+}} = \frac{c\cdot\mathrm{poly}(D_{\mathcal{U}_+})}{\varepsilon} \log \varepsilon^{-D_{\mathcal{U}_+}}\;.
    \end{equation}
    By requiring $C_{\max}\leq n_\varepsilon \log \varepsilon^{-D_{\mathcal{U}^+}}$, we have (given constant $\varepsilon,b_{\max}$ and $c$):
    \begin{equation}
        \mathcal{D} \leq C^\prime \mathrm{poly}(D_{\mathcal{U}_+}),\quad C^\prime = -\frac{c\log \varepsilon}{\varepsilon b_{\max}}\;.
    \end{equation}
    By combining the upper and lower bounds, we have:
    \begin{equation}
        \mathcal{D} = \Theta(\mathrm{poly}(D_{\mathcal{U}^+}))\;.
    \end{equation}
Hence $\mathcal{D}$ has the property we claim.
\end{proof}

In \Cref{prop:size_QOC_problem} we demonstrate the correspondence between the number of parameters $\mathcal{D}$ in a QOC problem and the size of physically interesting unitaries $D_{\mathcal{U}^+}$. Let us assume that an efficient, classical representation of $\mathcal{U}^+$ exists. 
This means that one can use a polynomial number of parameters to describe unitaries in $\mathcal{U}^+$, so the size $\mathcal{D}$ of the QOC problem is also polynomial.
Then, as many polynomial-size QOC problems can be mapped to linear programming problems, which can be solved efficiently with overall polynomial smoothed complexity~\cite{Hernandez1996}, we expect that finding the pulses for these classically representable problems is efficient.

There are many unitaries possessing this property, among which the representative one is the matrix-product operator (MPO) with polynomial bond dimension.
Any MPO on $n$ physical units with $m$-dimensional local Hilbert space can be represented by $O(n m^2 \chi^2)$ parameters, where $\chi$ is the bond dimension~\cite{Vidal_2003_Efficient,Vidal_2004,Verstraete_2004,Verstraete_2004_DMRG}.
As $\chi = O (2^{S_{\mathrm{op}}})$, where $S_{\mathrm{op}}$ is the operator entanglement entropy~\cite{Prosen_2007_Op_EE}, if $S_{\mathrm{op}}\sim \log(n)$, i.e. the unitary is slightly entangled, representing the MPO requires approximately $O(m^2\mathrm{poly}(n))$ parameters.
Therefore, if we restrict to the dynamical process whose operator entanglement entropy is always logarithmic, for a fixed evolution time $T$, we must have
\begin{equation}
    D_{\mathcal{U}^+} \leq D_{\mathcal{U}} = O(Tm^2\mathrm{poly}(n))\;.
\end{equation}
Therefore, those are unitaries that have polynomial complexity in the quantum optimal controls.

As shown previously~\cite{Benjamin_PRL}, if one can turn on and off the nearest-neighbor interactions in an alternating fashion, the globally-controlled system is polynomially equivalent to a typical qubit universal gate set.
Since MPOs with logarithmic entanglement entropy can be mapped to local unitary circuits with logarithmic depth, polynomial time is sufficient to simulate the MPOs.
Indeed, by using dynamical decoupling, one can still realize the qubit universal gate set with a polynomial overhead when the interactions are always on.
These facts support the polynomial smoothed complexity in solving the QOC problems when $\mathcal{U}^+$ has an efficient classical representation.
We leave the more comprehensive analysis for future study.

\section{Universality of fermionic and bosonic simulation} \label{appendix:Universal_fermion}
We investigate the universality of quantum simulation in fermionic and bosonic systems under global control. 
A set of physical operations is termed universal if it can generate the entire unitary group acting on the corresponding Hilbert space. 
Given that all relevant physical operations preserve particle number, we restrict our analysis to fixed-particle-number Hilbert spaces. 
Specifically, we focus on systems composed of $n$ fermions or bosons distributed over $d$ modes. 
These Hilbert spaces are represented by the antisymmetric and symmetric subspaces of $(\mathbb{C}^d)^{\otimes n}$, respectively. 
We denote these spaces as $\mathcal{H}_f = \wedge ^{n}(\mathbb{C}^d)$ for fermions and $\mathcal{H}_b = \mathrm{Sym}^{n}(\mathbb{C}^d)$ for bosons.

We begin by considering spinless particles on lattices containing $N$ sites. 
In this scenario, each lattice site corresponds to a single orbital, thus equating the number of modes $d$ to the number of sites $N$. 
In second quantization, each mode is associated with creation and annihilation operators $c_i^\dagger, c_i$ for fermions and $b_i^\dagger, b_i$ for bosons. 
The Hilbert space structures of $\mathcal{H}_f$ and $\mathcal{H}_b$ are embedded within the canonical commutation relations detailed in \Cref{eqn:canonical_comm_relations}. 
We denote passive fermionic and bosonic linear optics by $\mathrm{LO}_f$ and $\mathrm{LO}_b$, respectively. 
These represent free evolutions characterized by quadratic Hamiltonians~\cite{Bravyi_2002,Terhal_2002,DiVincenzo_2005}:
\begin{equation} \label{eqn:free_fermion_boson_Hamiltonian}
H_{\mathrm{free},f} = \sum_{i,j=1}^N \alpha_{ij} c_i^\dagger c_j, \quad H_{\mathrm{free},b} = \sum_{i,j=1}^N \beta_{ij} b_i^\dagger b_j,
\end{equation}
with arbitrary coefficients which satisfy $\alpha_{ij} = \bar{\alpha}_{ji}$ and $\beta_{ij} = \bar{\beta}_{ji}$ to ensure Hermiticity.

We first investigate universal simulation in spinless fermionic systems. The following lemma provides a sufficient condition for achieving universal control over the Hilbert space $\mathcal{H}_f$ for spinless fermions.

\begin{lemma}[Example 3 of \cite{Oszmaniec_2017}] \label{lemma:Universality_spinless_fermion}
Passive fermionic linear optics $\mathrm{LO}_f$, described by $H_{\mathrm{free},f}$, supplemented by any non-quadratic interacting Hamiltonian $H_{\mathrm{int}}$ containing only two-mode terms, generates the entire unitary group $U(\mathcal{H}_f)$, where $\mathcal{H}_f = \wedge^n(\mathbb{C}^d)$ is the fermionic Hilbert space with fixed particle number $n$ and $d$ modes.
\end{lemma}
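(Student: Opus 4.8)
The plan is to pass to the dynamical Lie algebra and show it is everything. Set $\mathfrak{g} = \mathrm{span}_{\mathbb{R}}\langle \{iH_{\mathrm{free},f}\}\cup\{iH_{\mathrm{int}}\}\rangle_{\mathrm{Lie}}$, a subalgebra of $\mathfrak{u}(\mathcal{H}_f)$; since a global phase is irrelevant and can be supplied by $iH_{\mathrm{free},f}$ itself, it suffices, by \Cref{lemma:Linear_Combo_Generators,lemma:Commutator_Generators}, to prove $\mathfrak{g} \supseteq \mathfrak{su}(\mathcal{H}_f)$. First I would pin down the free part: the Hermitian quadratics $\sum_{ij}\alpha_{ij}c_i^\dagger c_j$ are exactly the image of $\mathfrak{u}(d)$ under second quantization, and on the fixed-particle sector $\mathcal{H}_f = \wedge^n(\mathbb{C}^d)$ this image acts by the $n$-th exterior-power representation, which is irreducible for $1 \le n \le d-1$. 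Hence $\mathfrak{g}$ contains a copy $\mathfrak{k} \cong \mathfrak{su}(d)$ acting irreducibly on $\mathcal{H}_f$, the $\mathfrak{u}(1)$ piece contributing only phase.

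Second, I would check that $H_{\mathrm{int}}$ strictly enlarges $\mathfrak{k}$. The content of ``non-quadratic two-mode interaction'' is that the restriction of $H_{\mathrm{int}}$ to the $n$-particle sector is not expressible as a quadratic operator — for a Hubbard term $n_i n_j$ with $i\neq j$ this holds whenever $2 \le n \le d-2$ — so $iH_{\mathrm{int}}$, after removing its trace, lies in $\mathfrak{su}(\mathcal{H}_f)$ but not in $\mathfrak{k}$. Therefore $\mathfrak{g}$ is a Lie subalgebra of $\mathfrak{su}(\mathcal{H}_f)$ strictly containing an irreducibly acting $\mathfrak{su}(d)$. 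Part of writing this up carefully is delimiting the admissible range of $(n,d)$; the statement degenerates for very small systems, and the precise hypotheses making $\mathcal{H}_f$ ``large enough'' should be recorded at this point.

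Third, and this is the heart of the matter, I would classify the Lie algebras sandwiched as $\mathfrak{k} \subsetneq \mathfrak{g} \subseteq \mathfrak{su}(\mathcal{H}_f)$ with $\mathfrak{k}$ acting irreducibly. Since $\mathfrak{k}$ is semisimple and irreducible, $\mathfrak{g}$ is semisimple, and by the classification of irreducible subalgebras of $\mathfrak{su}(N)$ (Dynkin's maximal-subalgebra tables, the same input used in generic-universality and controllability arguments) a branching analysis shows the only proper intermediate candidates are $\mathfrak{g}=\mathfrak{so}(\mathcal{H}_f)$ or $\mathfrak{g}=\mathfrak{sp}(\mathcal{H}_f)$ — which require $\mathcal{H}_f$ to carry a $\mathfrak{g}$-invariant nondegenerate symmetric, respectively antisymmetric, bilinear form — or else $\mathfrak{g}=\mathfrak{su}(\mathcal{H}_f)$. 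But $\wedge^n(\mathbb{C}^d)$ is self-dual only when $d=2n$, with the form symmetric or antisymmetric according to the parity of $n$, so the orthogonal and symplectic options are automatically excluded unless $d=2n$; in that one remaining case I would exhibit, in an occupation-number basis of $\wedge^n(\mathbb{C}^{2n})$, that $iH_{\mathrm{int}}$ fails to be (anti)symmetric with respect to the unique invariant form, hence cannot lie in $\mathfrak{so}$ or $\mathfrak{sp}$. Collecting the three steps forces $\mathfrak{g}=\mathfrak{su}(\mathcal{H}_f)$, so the generated group is $\mathrm{SU}(\mathcal{H}_f)$ and, with the phase, all of $\mathrm{U}(\mathcal{H}_f)$.

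The main obstacle is Step three: invoking the classification of irreducible subalgebras correctly — in particular verifying that no tensor-product subalgebra $\mathfrak{su}(a)\otimes\mathfrak{su}(b)$ with $ab=\dim\mathcal{H}_f$, and no other primitive irreducible subalgebra of $\mathfrak{su}(\mathcal{H}_f)$, can contain $\mathfrak{k}$ in the exterior-power representation — and then, in the self-dual case, carrying out the explicit check that a two-mode non-quadratic operator breaks the invariant bilinear form. A lighter-weight alternative, should the classification route prove unwieldy, is to decompose $\mathfrak{su}(\mathcal{H}_f)\cong(\mathcal{H}_f\otimes\mathcal{H}_f^*)\ominus\mathbb{C}$ into $\mathfrak{k}$-isotypic components, note that $\mathfrak{g}$ is a sum of such components containing $\mathfrak{k}$ together with the component of $iH_{\mathrm{int}}$, and show by iterated commutators that this component generates every other component — but establishing that closure property is essentially as hard as the classification input itself.
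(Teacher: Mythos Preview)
The paper does not prove this lemma; it is quoted as Example~3 of the cited reference and used as a black box in the proofs of \Cref{thm:universal_1d_fermion,thm:universal_1d_spinful_fermion}. Your outline is, in fact, the strategy of that reference: embed $\mathfrak{su}(d)$ irreducibly in $\mathfrak{su}(\mathcal{H}_f)$ via the exterior-power representation of the quadratics, verify that $iH_{\mathrm{int}}$ lies outside this image on the $n$-particle sector, and then invoke the Dynkin classification of irreducible simple subalgebras of $\mathfrak{su}(N)$ to exclude every proper intermediate --- the $\mathfrak{so}/\mathfrak{sp}$ candidates being ruled out because $\wedge^n\mathbb{C}^d$ is not self-dual unless $d=2n$, and in that residual case by a direct check against the invariant form. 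You have also correctly isolated the range restriction $2\le n\le d-2$ that the paper's statement suppresses; the claim degenerates at $n\in\{1,d-1\}$, where a two-mode number--number term restricts to a quadratic operator. The classification step you flag as the main obstacle is precisely the nontrivial external input, and in the cited work it is discharged by appeal to Dynkin's tables for maximal connected subgroups of the classical groups rather than by the isotypic-decomposition alternative you sketch.
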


By \Cref{lemma:Universality_spinless_fermion}, the combination of the free fermionic Hamiltonian $H_{\mathrm{free},f}$ and a uniform nearest-neighbor Hubbard interaction
\begin{equation} \label{eqn:nearest_Hubbard}
H_{U} = \sum_{i=1}^{N-1} n_i n_{i+1}
\end{equation}
achieves universal control, where $n_i = c_i^\dagger c_i$ is the particle number operator at site $i$. Experimentally, fermions are typically controlled globally using optical superlattices, making the Hubbard interaction in \Cref{eqn:nearest_Hubbard} straightforward to implement. On such experimental platforms, hopping amplitudes and on-site chemical potentials are only globally addressable, resulting in periodic spatial patterns.
We expect these global free Hamiltonians to generate $H_{\mathrm{free},f}$ for universal simulation.
The simplest periodic pattern, corresponding to uniform hopping amplitudes and chemical potentials, as given by
$$H^{(\mathrm{hop})} = \sum_{i=1}^{N-1} \left( c_i^\dagger c_{i+1} + \text{h.c.} \right),\quad \text{and}\quad H^{(\mu)} = \sum_{i=1}^{N} n_i  \;,$$
cannot achieve this goal since $[H^{(\mathrm{hop})},H^{(\mu)}] = 0$.
Thus, inspired by \Cref{qubit:UQC_chain}, we seek the minimal periodic pattern that breaks the lattice reflection symmetry, which is the dual-site alternating pattern as illustrated in the figure below:
\begin{equation}
\begin{gathered}
\includegraphics[width = 0.5\textwidth]{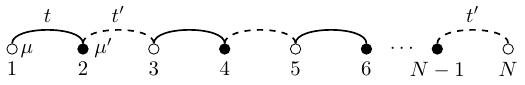}
\end{gathered}
\end{equation}
This alternating arrangement enables the following set of global control Hamiltonians:
\begin{equation} \label{eqn:spinless_fermion_alternating_set}
\begin{aligned}
&H_{\mathrm{odd}}^{(\mathrm{hop})} = \sum_{i \text{ odd}, i\neq N} \left(c_i^\dagger c_{i+1} + \text{h.c.}\right),\quad H_{\mathrm{even}}^{(\mathrm{hop})} = \sum_{i \text{ even}} \left(c_i^\dagger c_{i+1} + \text{h.c.}\right),\ \\
&H_{\mathrm{odd}}^{(\mu)} = \sum_{i \text{ odd}} n_i,\quad H_{\mathrm{even}}^{(\mu)} = \sum_{i \text{ even}} n_i,\quad H_{U} = \sum_{i=1}^{N-1} n_i n_{i+1},
\end{aligned}
\end{equation}
where $H_{\mathrm{odd}/\mathrm{even}}^{(\mathrm{hop})}$ and $H_{\mathrm{odd}/\mathrm{even}}^{(\mu)}$ represent hopping amplitudes and chemical potentials on odd/even bonds and sites, respectively. As demonstrated in \Cref{thm:universal_1d_fermion}, the dual-site controls generated by the Hamiltonians in \Cref{eqn:spinless_fermion_alternating_set} enable universal simulation of spinless fermions.
This again illustrates the connection between geometrical symmetry breaking and universality.

\begin{theorem}[Universality of the spinless Fermi-Hubbard chain] \label{thm:universal_1d_fermion}
Consider an open-boundary spinless Fermi-Hubbard chain with an odd number $N$ of sites. The dual-site control given by \Cref{eqn:spinless_fermion_alternating_set} achieves universal simulation of spinless fermions. Specifically, the unitary evolution generated by the dynamical Lie algebra (DLA) $\mathfrak{g} =\mathrm{span}_{\mathbb{R}}\langle iH_{\mathrm{odd}}^{(\mathrm{hop})},iH_{\mathrm{even}}^{(\mathrm{hop})},iH_{\mathrm{odd}}^{(\mu)},iH_{\mathrm{even}}^{(\mu)},iH_{U}\rangle_{\mathrm{Lie}}$ equals $U(\mathcal{H}_f)$, where $\mathcal{H}_f = \Lambda^n(\mathbb{C}^N)$.
\end{theorem}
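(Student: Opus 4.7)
The plan is to reduce the claim to generating the full Lie algebra of quadratic fermionic Hamiltonians and then invoke \Cref{lemma:Universality_spinless_fermion}. Because $H_U$ is a nontrivial two-mode interaction, that lemma yields universality as soon as the DLA contains every $\sum_{ij}\alpha_{ij} c_i^\dagger c_j$. By \Cref{lemma:fermion_H_free_generators}, this in turn is reduced to producing every nearest-neighbor hopping $c_j^\dagger c_{j+1}+\mathrm{h.c.}$ and every on-site number operator $n_j$, i.e.\ establishing the equality in \Cref{eq:free_fermion_equivalence}.

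The strategy mirrors the ``peel from the boundary'' induction of \Cref{lemma:equiv_DLA_2}. First I would generate imaginary hoppings on each parity class from $[H^{(\mu)}_{\mathrm{odd}},H^{(\mathrm{hop})}_{\mathrm{odd}}] = \sum_k (c_{2k-1}^\dagger c_{2k}-c_{2k}^\dagger c_{2k-1}) \equiv iS^o$, and similarly $iS^e$ from the even bonds. A second commutator $[H^{(\mathrm{hop})}_{\mathrm{odd}},iS^o]$ telescopes to $-2\sum_{k=1}^{(N-1)/2}(n_{2k-1}-n_{2k}) = -2\bigl(H^{(\mu)}_{\mathrm{odd}}-n_N\bigr) + 2H^{(\mu)}_{\mathrm{even}}$, the key point being that with $N$ odd the telescoping covers only the odd sites $\{1,3,\dots,N-2\}$, so the rightmost number operator $n_N$ is left uncancelled. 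Linear combination with the known $H^{(\mu)}_{\mathrm{odd/even}}$ then isolates $n_N$; the mirror computation built from $H^{(\mathrm{hop})}_{\mathrm{even}}$ and $iS^e$ isolates $n_1$. These two single-site number operators serve as the symmetry-breaking seeds that the dual-site alternation alone cannot provide.

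Once $n_1,n_N\in\mathfrak{g}$, single commutators with the global hoppings extract the boundary bonds: $[n_1,H^{(\mathrm{hop})}_{\mathrm{odd}}] = c_1^\dagger c_2 - c_2^\dagger c_1$, and a further commutator with $n_1$ returns the Hermitian hopping $T_1 = c_1^\dagger c_2 + \mathrm{h.c.}$ itself. Subtracting $T_1$ from $H^{(\mathrm{hop})}_{\mathrm{odd}}$ leaves a sum of strictly interior odd hoppings; the same commutator-and-subtract procedure using $H^{(\mathrm{hop})}_{\mathrm{even}}$ together with the now-isolated $T_1$ peels off the next site and bond, producing $n_2$ and $T_2$. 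Iterating inward from both ends yields every $n_j$ and every $c_j^\dagger c_{j+1}+\mathrm{h.c.}$. Combining this with \Cref{lemma:fermion_H_free_generators} to close the free-fermion algebra, and then with \Cref{lemma:Universality_spinless_fermion} to absorb $H_U$, completes the proof that $\mathfrak{g} = \mathfrak{u}(\mathcal{H}_f)$.

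The main obstacle will be making the base step of the induction watertight, since it is exactly where the parity of $N$ enters: the telescoping that cleanly exposes $n_N$ for odd $N$ would instead cancel completely for even $N$ because both parity classes then contain equally many sites, leaving a residual reflection symmetry. One must verify that, after peeling off a boundary site and bond from each end, the remaining interior chain is again of odd length so that the same argument applies inductively; this is what keeps the symmetry-breaking ``seed'' alive at every recursion level. Once this structural point is established, the commutator bookkeeping in the inductive step is routine and essentially identical to the qubit analogue in \Cref{lemma:equiv_DLA_2}.
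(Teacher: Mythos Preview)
Your proposal is correct and follows essentially the same route as the paper: reduce via \Cref{lemma:Universality_spinless_fermion} and \Cref{lemma:fermion_H_free_generators} to generating every local $n_j$ and $c_j^\dagger c_{j+1}+\mathrm{h.c.}$, isolate the two boundary number operators by a telescoping commutator that exploits the odd site count, extract the adjacent boundary hoppings, and then peel inward inductively. Your second-order commutator $[H^{(\mathrm{hop})}_{\mathrm{odd}},iS^o]$ is in fact the same object as the paper's fourth-order $H_{a,1}$ (the inner two brackets there just rebuild the real hopping), so the base steps coincide; the paper's recursion strips both ends simultaneously and reruns the base step on the length-$(N-2)$ interior chain with odd/even roles swapped, which is the precise version of the parity-preserving induction you flag in your final paragraph.
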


Since the global interaction $H_{U}$ is already included in \Cref{eqn:spinless_fermion_alternating_set}, to prove \Cref{thm:universal_1d_fermion}, it suffices to demonstrate that the full passive linear optics Hamiltonian $H_{\mathrm{free},f}$ is generated by the Hamiltonians $H_{\mathrm{odd}/\mathrm{even}}^{(\mathrm{hop})}$ and $H_{\mathrm{odd}/\mathrm{even}}^{(\mu)}$. To accomplish this, we introduce an intermediate generating set, as stated in the following lemma:

\begin{lemma} \label{lemma:fermion_H_free_generators}
The Hamiltonian $H_{\mathrm{free},f}$ is generated by the following set of nearest-neighbor hopping and on-site chemical potential Hamiltonians:
\begin{equation} \label{eqn:local_Hamiltonian_fermions}
H_{i}^{(\mathrm{hop})} = c_i^\dagger c_{i+1} + \text{h.c.},\quad i = 1,\dots,N-1,\quad\text{and}\quad H_{i}^{(\mu)} = n_i = c_i^\dagger c_i,\quad i = 1,\dots, N.
\end{equation}
\end{lemma}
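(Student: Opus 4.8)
The plan is to prove equality of two Lie algebras: the dynamical Lie algebra $\mathfrak g := \mathrm{span}_{\mathbb R}\langle \{iH_i^{(\mathrm{hop})}\}_{i=1}^{N-1}, \{iH_i^{(\mu)}\}_{i=1}^{N}\rangle_{\mathrm{Lie}}$ generated by the local controls, and the algebra $\mathfrak h_{\mathrm{free}} := \mathrm{span}_{\mathbb R}\{\, iH_{\mathrm{free},f} : \alpha = \alpha^\dagger \,\}$ of all particle-number-conserving quadratic anti-Hermitian operators. The inclusion $\mathfrak g \subseteq \mathfrak h_{\mathrm{free}}$ is immediate once one observes that $\mathfrak h_{\mathrm{free}}$ is itself a Lie subalgebra: from the canonical anticommutation relations \Cref{eqn:canonical_comm_relations} (equivalently the identities in \Cref{eqn:useful_comm_relations}) one gets the structure constants $[c_i^\dagger c_j, c_k^\dagger c_l] = \delta_{jk}\,c_i^\dagger c_l - \delta_{il}\,c_k^\dagger c_j$, so the bracket of two quadratic number-conserving operators is again of that form. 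Since all the generators $iH_i^{(\mathrm{hop})}, iH_i^{(\mu)}$ lie in $\mathfrak h_{\mathrm{free}}$, so does their Lie closure $\mathfrak g$.

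The substance is the reverse inclusion $\mathfrak h_{\mathrm{free}} \subseteq \mathfrak g$, which I would establish by a bracket-generating induction on the hopping range $m$. The inductive claim: $\mathfrak g$ contains all on-site operators $in_k$ ($1\le k\le N$) --- these are generators --- and, for every $1\le m\le N-1$ and every $i$ with $i+m\le N$, both the ``real'' hopping $i(c_i^\dagger c_{i+m}+c_{i+m}^\dagger c_i)$ and the ``imaginary'' hopping $c_i^\dagger c_{i+m}-c_{i+m}^\dagger c_i$ (the latter already anti-Hermitian). The base case $m=1$ holds because $[\,iH_i^{(\mathrm{hop})},\, in_{i+1}\,] = -\big(c_i^\dagger c_{i+1}-c_{i+1}^\dagger c_i\big)$, i.e.\ the imaginary nearest-neighbor hopping is produced from the given real one together with an on-site term. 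For the inductive step, assuming the range-$m$ hoppings are in $\mathfrak g$, I would compute, for $i$ with $i+m+1\le N$,
\[
\big[\, i(c_i^\dagger c_{i+1}+c_{i+1}^\dagger c_i),\ i(c_{i+1}^\dagger c_{i+1+m}+c_{i+1+m}^\dagger c_{i+1})\,\big] = -\big(c_i^\dagger c_{i+1+m} - c_{i+1+m}^\dagger c_i\big),
\]
all other terms vanishing by the Kronecker deltas in the structure constants above; a further bracket with $in_i$ (or $in_{i+1+m}$) rotates this into the real range-$(m+1)$ hopping. As $i$ ranges over $1,\dots,N-1-m$ this covers every range-$(m+1)$ pair, so iterating up to $m=N-1$ puts every $c_i^\dagger c_j+\mathrm{h.c.}$ and $i(c_i^\dagger c_j-\mathrm{h.c.})$ into $\mathfrak g$.

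Finally I would note that the real span of $\{in_k\}_k$ together with all these real and imaginary hoppings is exactly $\mathfrak h_{\mathrm{free}}$: decomposing $\alpha$ entrywise as $\alpha_{ij}=a_{ij}+\mathrm{i}\,b_{ij}$ with $a_{ij},b_{ij}$ real gives $\alpha_{ij}c_i^\dagger c_j+\bar\alpha_{ij}c_j^\dagger c_i = a_{ij}(c_i^\dagger c_j+c_j^\dagger c_i)+b_{ij}\,\mathrm{i}(c_i^\dagger c_j-c_j^\dagger c_i)$, so $H_{\mathrm{free},f}$ is a real linear combination of the Hermitian operators whose anti-Hermitian versions we have produced. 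All the individual brackets and linear combinations used above correspond to physically realizable evolutions by \Cref{lemma:Linear_Combo_Generators,lemma:Commutator_Generators}, hence $\mathfrak g = \mathfrak h_{\mathrm{free}}$, which is the claim of the lemma (and, combined with the uniform Hubbard term and \Cref{lemma:Universality_spinless_fermion}, yields \Cref{thm:universal_1d_fermion}). I do not expect a conceptual obstacle: the connectivity of the open chain is exactly what guarantees the bracket-generating property, and since an on-site operator $n_k$ is available at every site, the real$\leftrightarrow$imaginary rotation at the endpoints is never impeded. The only real work is the error-prone sign bookkeeping through the nested commutators and checking that the inductive step indeed sweeps out all hopping ranges.
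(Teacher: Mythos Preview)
Your proposal is correct and follows essentially the same inductive strategy as the paper: extend the hopping range one step at a time by commuting a nearest-neighbor hopping with a longer-range one and then rotating with an on-site number operator. Your write-up is slightly more explicit than the paper's (you prove both inclusions and track the real/imaginary hoppings separately, whereas the paper only writes out the double commutator $H_{(i,i+k+1)}^{(\mathrm{hop})} = [H_i^{(\mu)},[H_{(i,i+k)}^{(\mathrm{hop})},H_{i+k}^{(\mathrm{hop})}]]$ for the real part), but the core argument is identical.
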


\begin{proof}
    The $H_{\mathrm{free},f}$ in \Cref{eqn:free_fermion_boson_Hamiltonian} is a linear combination of hopping and chemical potential terms, where the nearest-neighbor hoppings are given by $H_{i}^{(\mathrm{hop})}$, and the on-site chemical potential terms are $H_{i}^{(\mu)}$, respectively.
    Thus, it suffices to show that the hopping term $H_{(i,j)}^{(\mathrm{hop})} = c_{i}^\dagger c_j + \text{h.c.}$ between two arbitrary sites $i$ and $j$ can be generated by $H_{i}^{(\mathrm{hop})}$ and $H_{i}^{(\mu)}$. 
    The following algebra gives the hopping term between next-nearest-neighbor sites:
    \begin{equation}
        H_{(i,i+2)}^{(\mathrm{hop})} = [H_{i}^{(\mu)},[H_{i}^{(\mathrm{hop})},H_{{i+1}}^{(\mathrm{hop})}]] = c_{i}^\dagger c_{i+2} + \text{h.c.}\;.
    \end{equation}
    Then, given the hopping between $i$ and $i+k$ sites for an arbitrary $k$ as $H_{t,(i,i+k)} = c_i^\dagger c_{i+k} + \text{h.c.}$, we have
    \begin{equation}
        H_{(i,i+k+1)}^{(\mathrm{hop})} = [H_{i}^{(\mu)},[H_{(i,i+k)}^{(\mathrm{hop})}, H_{i+k}^{(\mathrm{hop})}]] = c_i^\dagger c_{i+k+1} + \text{h.c.} \;.
    \end{equation}
    By induction, any hopping term $H_{(i,j)}^{(\mathrm{hop})}$ can be generated by repeating the procedure, which completes the proof.
\end{proof}
Now we complete the proof of \Cref{thm:universal_1d_fermion} by generating the Hamiltonians in \Cref{eqn:local_Hamiltonian_fermions}.

\begin{proof}[Proof of \Cref{thm:universal_1d_fermion}]
    By Lemmas~\ref{lemma:Universality_spinless_fermion} and \ref{lemma:fermion_H_free_generators}, it suffices to prove that $H_{\mathrm{odd}/\mathrm{even}}^{(\mathrm{hop})}$ and $H_{\mathrm{odd}/\mathrm{even}}^{(\mu)}$ can generate all hopping and chemical potential Hamiltonians $H_{i}^{(\mathrm{hop})}$ and $H_{i}^{(\mu)}$ in \eqref{eqn:local_Hamiltonian_fermions}. This is realized by repeatedly calculating the commutator between $H_{\mathrm{odd}/\mathrm{even}}^{(\mathrm{hop})}$ and $H_{\mathrm{odd}/\mathrm{even}}^{(\mu)}$ and hierarchically singling out the Hamiltonians $H_{i}^{(\mathrm{hop})}$ and $H_{i}^{(\mu)}$ from both ends of the one-dimensional chain.
    In calculating the commutators, we repeatedly use the second relation in \Cref{eqn:useful_comm_relations} and the fermionic canonical commutation relation (\Cref{eqn:canonical_comm_relations}).

    We start by obtaining $H_{1}^{(\mu)} = n_1$. The following algebras are helpful:
    \begin{equation}
        \begin{aligned}
        & [H_{\mathrm{even}}^{(\mathrm{hop})},H_{\mathrm{odd}}^{(\mu)}] = \sum_{i\in \mathrm{even}}[c_i^\dagger c_{i+1} + \text{h.c.}, n_{i+1} ] = \sum_{i\in\mathrm{even}}\left(c_i^\dagger c_{i+1} - c_{i+1}^\dagger c_i\right)\\
        & [H_{\mathrm{odd}}^{(\mu)},[H_{\mathrm{even}}^{(\mathrm{hop})},H_{\mathrm{odd}}^{(\mu)}]] \propto \sum_{i\in \mathrm{even}} \left(c_i^\dagger c_{i+1} + c_{i+1}^\dagger c_i \right)\\
        &H_{a,1} = \sum_{i \in \mathrm{even}} \left(n_i - n_{i+1}\right) \propto [[H_{\mathrm{odd}}^{(\mu)},[H_{\mathrm{even}}^{(\mathrm{hop})},H_{\mathrm{odd}}^{(\mu)}]],[H_{\mathrm{even}}^{(\mathrm{hop})},H_{\mathrm{odd}}^{(\mu)}]] \;.
    \end{aligned}
    \end{equation}
    Notice that the ancillary Hamiltonian $H_{a,1}$ is a linear combination of number operators on different sites, except for $n_1$. Thus, we can single out $n_1$ by:
    \begin{equation}
        H_{1}^{(\mu)} = H_{\mathrm{odd}}^{(\mu)} - H_{\mathrm{even}}^{(\mu)} + H_{a,1} = n_1\;.
    \end{equation}
    With $H_{1}^{(\mu)}$, one can single out $H_{1}^{(\mathrm{hop})}$ via:
    \begin{equation}
        H_{1}^{(\mathrm{hop})}= c_1^\dagger c_2 + c_2^\dagger c_1 \propto [H_{1}^{(\mu)},[H_{\mathrm{odd}}^{(\mathrm{hop})},H_{1}^{(\mu)}]] \;.
    \end{equation}
    Similarly, one can single out $H_{N}^{(\mu)}$ and $H_{N-1}^{(\mathrm{hop})}$ by the following steps:
    \begin{align}
        H_{a,N} &=\sum_{i\in \mathrm{odd},i\neq N} (n_{i+1} - n_i) \propto [[H_{\mathrm{odd}}^{(\mu)},[H_{\mathrm{odd}}^{(\mathrm{hop})},H_{\mathrm{odd}}^{(\mu)}]],[H_{\mathrm{odd}}^{(\mathrm{hop})},H_{\mathrm{odd}}^{(\mu)}]]  \\
        H_{N}^{(\mu)} &= H_{\mathrm{odd}}^{(\mu)} - H_{\mathrm{even}}^{(\mu)} + H_{a,N} = n_N\;,\quad H_{N-1}^{(\mathrm{hop})} = [H_{N}^{(\mu)},[H_{\mathrm{even}}^{(\mathrm{hop})},H_{N}^{(\mu)}]] = c_{N-1}^\dagger c_N + c_N^\dagger c_{N-1}\;.
    \end{align}
    Then, one can define a set of new generators as:
    \begin{align}
        &\widetilde{H}_{\mathrm{odd}}^{(\mathrm{hop})} = H_{\mathrm{odd}}^{(\mathrm{hop})} - H_{1}^{(\mathrm{hop})} = \sum_{i\in \mathrm{odd}, i\neq 1, N} c_i^\dagger c_{i+1} + h.c.\;,\\
        &\widetilde{H}_{\mathrm{even}}^{(\mathrm{hop})} = H_{\mathrm{even}}^{(\mathrm{hop})} - H_{N-1}^{(\mathrm{hop})} = \sum_{i \in \mathrm{even}, i\neq N-1} c_i^\dagger c_{i+1} + h.c.\;,\\
        &\widetilde{H}_{\mathrm{odd}}^{(\mu)} = H_{\mathrm{odd}}^{(\mu)} - H_{1}^{(\mu)} - H_{N}^{(\mu)} = \sum_{i\in \mathrm{odd}, i \neq 1,N} n_i\;,\\
        &\widetilde{H}_{\mathrm{even}}^{(\mu)} = H_{\mathrm{even}}^{(\mu)} = \sum_{i\in \mathrm{even}} n_i \;.
    \end{align}
    These generators define a new one-dimensional chain whose two ends are the $2$-nd and $(N-1)$-th sites in the original chain. Equivalently, this new chain can be obtained by removing the $1$-st and $N$-th sites of the old chain. In the new chain, $\widetilde{H}_{\mathrm{odd}}^{(\mathrm{hop})},\widetilde{H}_{\mathrm{even}}^{(\mathrm{hop})},\widetilde{H}_{\mathrm{odd}}^{(\mu)}$ and $\widetilde{H}_{\mathrm{even}}^{(\mu)}$ play the same role as $H_{\mathrm{even}}^{(\mathrm{hop})}, H_{\mathrm{odd}}^{(\mathrm{hop})}, H_{\mathrm{even}}^{(\mu)}, H_{\mathrm{odd}}^{(\mu)}$ in the previous chain, i.e., the odd and even sites are exchanged. Thus, one can replace the Hamiltonians in the previous steps correspondingly to single out $H_{2}^{(\mathrm{hop})},H_{2}^{(\mu)},H_{N-2}^{(\mathrm{hop})},H_{N-1}^{(\mu)}$ as:
    \begin{align}
        &H_{2}^{(\mu)} = n_2 \propto\widetilde{H}_{\mathrm{even}}^{(\mu)} - \widetilde{H}_{\mathrm{odd}}^{(\mu)} + \frac{1}{2}[[\widetilde{H}_{\mathrm{even}}^{(\mu)},[\widetilde{H}_{\mathrm{odd}}^{(\mathrm{hop})},\widetilde{H}_{\mathrm{even}}^{(\mu)}]],[\widetilde{H}_{\mathrm{odd}}^{(\mathrm{hop})},\widetilde{H}_{\mathrm{even}}^{(\mu)}]] \;,\\
        &H_{2}^{(\mathrm{hop})} = c_2^\dagger c_3 + c_3^\dagger c_2 \propto [H_{2}^{(\mu)},[\widetilde{H}_{\mathrm{even}}^{(\mathrm{hop})},H_{2}^{(\mu)}]] \;,\\
        &H_{N-1}^{(\mu)} = n_{N-1} \propto \widetilde{H}_{\mathrm{even}}^{(\mu)} - \widetilde{H}_{\mathrm{odd}}^{(\mu)} + \frac{1}{2}[[\widetilde{H}_{\mathrm{even}}^{(\mu)}, [\widetilde{H}_{\mathrm{even}}^{(\mathrm{hop})},\widetilde{H}_{\mathrm{even}}^{(\mu)}]],[\widetilde{H}_{\mathrm{even}}^{(\mathrm{hop})},\widetilde{H}_{\mathrm{even}}^{(\mu)}]] \;,\\
        &H_{N-2}^{(\mathrm{hop})} = c_{N-2}^\dagger c_{N-1} + c_{N-1}c_{N-2}^\dagger \propto [H_{N-1}^{(\mu)},[\widetilde{H}_{\mathrm{odd}}^{(\mathrm{hop})},H_{N-1}^{(\mu)}]] \;. 
    \end{align}
    By repeating the above procedure to the middle of the chain, we can single out all $H_{i}^{(\mathrm{hop})}$ and $H_{i}^{(\mu)}$ in Lemma~\ref{lemma:fermion_H_free_generators}. Therefore, we have completed the proof of Theorem~\ref{thm:universal_1d_fermion}.
\end{proof}

The proof for the universality of spinless bosonic systems is analogous.  
Similar to \Cref{lemma:Universality_spinless_fermion}, we have the following sufficient condition for achieving universal control over $\mathcal{H}_b$.
\begin{lemma}[Adapted from Theorem 2 of \cite{Oszmaniec_2017}] \label{lemma:Universality_spinless_boson}
For bosonic systems with $d>2$ modes, passive bosonic linear optics described by $H_{\mathrm{free},b}$, supplemented by any Hamiltonian $H$ that is not contained within the linear optics set $\mathrm{LO}_b$, generates the entire unitary group $U(\mathcal{H}_b)$, where $\mathcal{H}_b = \mathrm{Sym}^n(\mathbb{C}^d)$ is the bosonic Hilbert space with fixed particle number $n$.
\end{lemma}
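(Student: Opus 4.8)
The plan is to translate the statement into the language of dynamical Lie algebras and then reduce it to the classification of irreducibly acting compact Lie algebras, which is exactly what is proved in \cite{Oszmaniec_2017}; the only genuinely new work is to check that our setting (fixed particle number, and the restriction $d>2$) matches the hypotheses of that classification. First I would record the representation-theoretic picture: passive bosonic linear optics on $d$ modes is the group $U(d)$ acting on $\mathcal{H}_b=\mathrm{Sym}^n(\mathbb{C}^d)$ via the $n$-th symmetric power $\pi=\mathrm{Sym}^n$ of its defining representation, so by \Cref{lemma:Linear_Combo_Generators,lemma:Commutator_Generators} the dynamical Lie algebra generated by the quadratic Hamiltonians $\{iH_{\mathrm{free},b}\}$ is precisely $\pi(\mathfrak{u}(d))\subseteq\mathfrak{u}(\mathcal{H}_b)$, and this representation is irreducible for all $n\ge 1$, $d\ge 2$. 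The cases $n=0,1$ are trivial or vacuous (for $n\le 1$ one has $\mathrm{LO}_b=U(\mathcal{H}_b)$ already), so I would assume $n\ge 2$, where $\mathcal{H}_b$ is strictly larger than $\mathbb{C}^d$.

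The core of the argument is then: adjoining a single $iH$ with $H\notin\mathrm{LO}_b$ --- i.e.\ $iH\notin\pi(\mathfrak{u}(d))$ as an operator on $\mathcal{H}_b$ --- produces a compact subalgebra $\mathfrak{g}=\mathrm{span}_{\mathbb{R}}\langle \pi(\mathfrak{u}(d)),iH\rangle_{\mathrm{Lie}}$ of $\mathfrak{u}(\mathcal{H}_b)$ that acts irreducibly (it contains the irreducibly acting $\pi(\mathfrak{su}(d))$) and strictly contains $\pi(\mathfrak{su}(d))$ together with the central phase. I would then invoke the classification of the compact Lie algebras $\mathfrak{h}$ sandwiched as $\pi(\mathfrak{su}(d))\subsetneq\mathfrak{h}\subseteq\mathfrak{su}(\mathcal{H}_b)$ that still act irreducibly: by Dynkin's theory of maximal subalgebras any such proper $\mathfrak{h}$ lies in a maximal one, and the only maximal subalgebras that can contain an irreducible image of $\mathfrak{su}(d)$ are either a larger simple Lie algebra acting through a bigger irreducible representation, or the stabiliser $\mathfrak{so}(\dim\mathcal{H}_b)$ or $\mathfrak{sp}(\dim\mathcal{H}_b)$ of an invariant symmetric or antisymmetric bilinear form. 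The form-preserving branch is killed by the hypothesis $d>2$: for $d\ge 3$ the highest weight $n\omega_1$ of $\mathrm{Sym}^n(\mathbb{C}^d)$ is never equal to that of its dual $n\omega_{d-1}$, so the representation is not self-dual and carries no invariant bilinear form; the ``larger simple group'' branch is excluded because $\pi(\mathfrak{su}(d))$ is in fact a maximal subalgebra of $\mathfrak{su}(\mathcal{H}_b)$ in this range, a fact one reads off from Dynkin's tables (this is where \cite{Oszmaniec_2017} does the heavy lifting). Hence $\mathfrak{g}=\mathfrak{u}(\mathcal{H}_b)$, and the generated group is all of $U(\mathcal{H}_b)$.

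The hard part is precisely this classification/maximality step: establishing that for $d>2$ there is no irreducibly acting compact Lie algebra strictly between $\pi(\mathfrak{su}(d))$ and $\mathfrak{su}(\mathcal{H}_b)$, and bookkeeping the finitely many low-$(d,n)$ coincidences that the assumption $d>2$ is designed to avoid --- the $d=2$ case genuinely fails because $\mathrm{Sym}^n(\mathbb{C}^2)$ always carries an invariant form, so $\mathfrak{so}$ (for odd-dimensional, integer-spin multiplets) or $\mathfrak{sp}$ (for even-dimensional, half-integer-spin multiplets) can sit in between. Since this representation-theoretic classification is exactly the content of Theorem~2 of \cite{Oszmaniec_2017}, the cleanest execution is to verify that our hypotheses ($d>2$, fixed particle number, $H$ outside the linear-optics algebra) are a special case of theirs and simply cite the theorem, rather than re-derive the subalgebra classification.
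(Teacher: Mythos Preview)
Your proposal is correct and, at its endpoint, takes the same approach as the paper: both reduce the statement to Theorem~2 of \cite{Oszmaniec_2017} and cite it. The paper in fact offers no proof at all beyond the citation (the lemma is stated as ``Adapted from Theorem~2 of \cite{Oszmaniec_2017}'' and simply invoked), whereas you supply a correct representation-theoretic sketch of \emph{why} the cited result holds --- the irreducibility of $\mathrm{Sym}^n(\mathbb{C}^d)$ under $\mathfrak{su}(d)$, the Dynkin-style maximality argument, and the non-self-duality of $n\omega_1$ for $d\ge 3$ that rules out the form-preserving intermediate algebras and explains the $d>2$ hypothesis. This added context is accurate and useful, but strictly more than the paper provides; your final recommendation to simply verify the hypotheses and cite the theorem is exactly what the paper does.
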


Since $d = N$, and for arbitrary many-body systems we always have $N > 2$, the assumption of \Cref{lemma:Universality_spinless_boson} is always satisfied. In bosonic systems, the Bose-Hubbard chain with dual-site control can be readily implemented using optical superlattices, providing the following control Hamiltonians analogous to their fermionic counterparts in \Cref{eqn:spinless_fermion_alternating_set}:
\begin{equation} \label{eqn:spinless_boson_alternating_set}
\begin{aligned}
&H_{\mathrm{odd}}^{(\mathrm{hop})} = \sum_{i \text{ odd}, i\neq N} \left(b_i^\dagger b_{i+1} + \text{h.c.}\right),\quad H_{\mathrm{even}}^{(\mathrm{hop})} = \sum_{i \text{ even}} \left(b_i^\dagger b_{i+1} + \text{h.c.}\right),\ \\
&H_{\mathrm{odd}}^{(\mu)} = \sum_{i \text{ odd}} n_i,\quad H_{\mathrm{even}}^{(\mu)} = \sum_{i \text{ even}} n_i,\quad H_{U} = \sum_{i=1}^{N-1} n_i (n_{i}-1).
\end{aligned}
\end{equation}
As established in \Cref{thm:universal_1d_boson}, the dual-site controls defined by \Cref{eqn:spinless_boson_alternating_set} enable universal simulation of spinless bosons.

\begin{theorem}[Universality of the spinless Bose-Hubbard chain] \label{thm:universal_1d_boson}
Consider an open-boundary spinless Bose-Hubbard chain with an odd number of sites $N>2$. The dual-site control provided by \Cref{eqn:spinless_boson_alternating_set} achieves universal simulation of spinless bosons. Specifically, the DLA $\mathfrak{g} =\mathrm{span}_{\mathbb{R}}\langle iH_{\mathrm{odd}}^{(\mathrm{hop})},iH_{\mathrm{even}}^{(\mathrm{hop})},iH_{\mathrm{odd}}^{(\mu)},iH_{\mathrm{even}}^{(\mu)},iH_{U}\rangle_{\mathrm{Lie}}$ equals $U(\mathcal{H}_b)$ generates the full unitary group $U(\mathcal{H}_b)$, where $\mathcal{H}_b = \mathrm{Sym}^n(\mathbb{C}^N)$.
\end{theorem}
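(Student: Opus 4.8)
The plan is to follow the proof of \Cref{thm:universal_1d_fermion} essentially line by line, changing only the two places where particle statistics enter. Since every generator in \Cref{eqn:spinless_boson_alternating_set} conserves particle number, the whole argument takes place inside the finite-dimensional space $\mathcal{H}_b=\mathrm{Sym}^n(\mathbb{C}^N)$, on which the number operators are bounded, so that \Cref{lemma:Linear_Combo_Generators} and \Cref{lemma:Commutator_Generators} apply and it is enough to compute the dynamical Lie algebra. First I would invoke \Cref{lemma:Universality_spinless_boson}: because $N>2$ and the Hubbard term $H_U=\sum_{i} n_i(n_i-1)$ is quartic in the mode operators and hence not contained in $\mathrm{LO}_b$, it suffices to prove that the dual-site controls $H_{\mathrm{odd}/\mathrm{even}}^{(\mathrm{hop})}$ and $H_{\mathrm{odd}/\mathrm{even}}^{(\mu)}$ generate the full passive bosonic linear-optics Hamiltonian $H_{\mathrm{free},b}$.

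Second I would establish the bosonic analogue of \Cref{lemma:fermion_H_free_generators}: the nearest-neighbor hoppings $H_i^{(\mathrm{hop})}=b_i^\dagger b_{i+1}+\mathrm{h.c.}$ together with the on-site numbers $H_i^{(\mu)}=n_i$ generate $H_{\mathrm{free},b}$. The observation that makes this routine is that the bilinears $E_{ij}=b_i^\dagger b_j$ obey the $\mathfrak{gl}(N)$ relations $[E_{ij},E_{kl}]=\delta_{jk}E_{il}-\delta_{li}E_{kj}$, which are identical to those obeyed by the fermionic bilinears $c_i^\dagger c_j$; hence the bilinear sector closes and the inductive construction of longer-range hoppings via $H_{(i,i+k+1)}^{(\mathrm{hop})}\propto[H_i^{(\mu)},[H_{(i,i+k)}^{(\mathrm{hop})},H_{i+k}^{(\mathrm{hop})}]]$ carries over verbatim. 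Third, with that lemma in hand, I would run the same ``peel from both ends'' induction as in \Cref{thm:universal_1d_fermion}: compute $[H_{\mathrm{even}}^{(\mathrm{hop})},H_{\mathrm{odd}}^{(\mu)}]=\sum_{i\ \mathrm{even}}(b_i^\dagger b_{i+1}-b_{i+1}^\dagger b_i)$ and its companions, isolate $H_1^{(\mu)}=n_1$, then $H_1^{(\mathrm{hop})}$, then the mirrored operators at site $N$, subtract these off to obtain reduced generators $\widetilde H_{\mathrm{odd}/\mathrm{even}}^{(\cdot)}$ living on the chain with sites $1$ and $N$ deleted, and iterate to the center. This produces every $H_i^{(\mathrm{hop})}$ and $H_i^{(\mu)}$, hence all of $H_{\mathrm{free},b}$, and combined with $H_U$ and \Cref{lemma:Universality_spinless_boson} shows that the DLA is the full Lie algebra of $U(\mathcal{H}_b)$, so $U(\mathcal{H}_b)$ is attainable.

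The part deserving the most care is verifying that bosonic statistics do not spoil the commutator bookkeeping: a priori one might worry that commutators of bosonic bilinears produce new operators with number-operator-dependent coefficients that are absent in the fermionic calculation. The resolution is exactly the $\mathfrak{gl}(N)$ observation above, which shows that the single-particle (bilinear) representations generated by $b_i^\dagger b_j$ and by $c_i^\dagger c_j$ have the same structure constants, so no such terms ever appear and every algebraic identity in the proof of \Cref{thm:universal_1d_fermion} holds unchanged. The only genuinely new inputs relative to the fermionic case are then the hypotheses $N>2$ and $H_U\notin\mathrm{LO}_b$ required by \Cref{lemma:Universality_spinless_boson}, both of which hold here.
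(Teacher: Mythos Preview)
Your proposal is correct and matches the paper's own proof essentially step for step: the paper also invokes \Cref{lemma:Universality_spinless_boson} to reduce to generating $H_{\mathrm{free},b}$, then reruns the fermionic calculation of \Cref{thm:universal_1d_fermion} with bosonic operators to obtain all local $H_i^{(\mathrm{hop})}$ and $H_i^{(\mu)}$, and finally appeals to the bosonic analogue of \Cref{lemma:fermion_H_free_generators}. Your explicit $\mathfrak{gl}(N)$ remark, that $[b_i^\dagger b_j,b_k^\dagger b_l]=\delta_{jk}b_i^\dagger b_l-\delta_{il}b_k^\dagger b_j$ has the same structure constants as the fermionic bilinear algebra, is a slightly cleaner justification than the paper's phrasing but amounts to the same argument.
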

\begin{proof}
The proof follows closely the approach used in \Cref{thm:universal_1d_fermion}. Since the Hubbard Hamiltonian $H_{U}$ is not contained within the set of passive linear optics $\mathrm{LO}_b$, \Cref{lemma:Universality_spinless_boson} implies that it suffices to demonstrate the generation of $H_{\mathrm{free},b}$ from the Hamiltonians $H_{\mathrm{odd/even}}^{(\mathrm{hop})}$ and $H_{\mathrm{odd/even}}^{(\mu)}$. By conducting a calculation analogous to the one used in the proof of \Cref{thm:universal_1d_fermion}, but substituting the anticommutation relations of fermionic operators $c^\dagger,c$ with the commutation relations of bosonic operators $b^\dagger,b$, and employing the second identity in \Cref{eqn:useful_comm_relations}, we obtain the following intermediate set of nearest-neighbor hopping and on-site chemical potential Hamiltonians:
\begin{equation} \label{eqn:local_Hamiltonian_bosons}
H_{i}^{(\mathrm{hop})} = b_i^\dagger b_{i+1} + \text{h.c.},\quad i = 1,\dots,N-1,\quad\text{and}\quad H_{i}^{(\mu)} = n_i = b_i^\dagger b_i,\quad i = 1,\dots, N.
\end{equation}
Finally, following a similar argument as in \Cref{lemma:fermion_H_free_generators}, one can verify that this set indeed generates $H_{\mathrm{free},b}$, thereby completing the proof.
\end{proof}

The physically relevant fermionic systems are spinful, requiring a generalization of \Cref{thm:universal_1d_fermion} to the spinful Fermi-Hubbard chain. The generators for the one-dimensional spinful Fermi-Hubbard model on $N$ sites (with $N$ odd), under open boundary conditions and dual-site superlattice control, are given by: 
\begin{equation} \label{eqn:spinful_fermion_alternating_set}
\begin{aligned}
    &H_{\mathrm{odd}}^{(\mathrm{hop})} = \sum_{i \in \mathrm{odd}, i\neq N,\sigma} c_{i,\sigma}^\dagger c_{i+1,\sigma} + \text{h.c.},\quad H_{\mathrm{even}}^{(\mathrm{hop})} = \sum_{i \in \mathrm{even},\sigma} c_{i,\sigma}^\dagger c_{i+1,\sigma} + \text{h.c.},\quad\\ &H_{\mathrm{odd}}^{(\mu)} = \sum_{i\in \mathrm{odd}} n_{i,\uparrow}+n_{i,\downarrow},\quad H_{\mathrm{even}}^{(\mu)} = \sum_{i\in\mathrm{even}} n_{i,\uparrow}+n_{i,\downarrow},\quad\\
    &H_{B_X} = \sum_{i=1}^N c_{i,\uparrow}^{\dagger}c_{i,\downarrow}+h.c.,\quad H_{B_Z} = \sum_{i=1}^N(a*i+b)(n_{i,\uparrow}-n_{i,\downarrow}),\quad H_{U} = \sum_{i=1}^N n_{i,\uparrow} n_{i,\downarrow}\;.
    \end{aligned}
\end{equation}
Here, $H_{\mathrm{odd}}^{(\mathrm{hop})}$ and $H_{\mathrm{even}}^{(\mathrm{hop})}$ describe spin-preserving hopping on odd and even bonds for $\sigma=\uparrow,\downarrow$. The terms $H_{\mathrm{odd}}^{(\mu)}$ and $H_{\mathrm{even}}^{(\mu)}$ correspond to chemical potentials on odd and even sites, respectively. $H_{B_X}$ represents a global magnetic field along the $x$-axis, while $H_{B_Z}$ corresponds to a spatially tilted magnetic field along the $z$-axis. Finally, $H_{U}$ denotes the on-site Hubbard interaction.

As stated in \Cref{thm:universal_1d_spinful_fermion}, the dual-site controls defined by the Hamiltonians in \Cref{eqn:spinful_fermion_alternating_set} enable universal simulation of spinful fermions. A key distinction in the spinful case is that the number of modes becomes $d = 2N$, as the spin degree of freedom at each site doubles the number of modes.

\begin{theorem}[Universality of the spinful Fermi-Hubbard chain] \label{thm:universal_1d_spinful_fermion}
Consider an open-boundary spinful Fermi-Hubbard chain with an odd number $N$ of sites. The dual-site control given by \Cref{eqn:spinful_fermion_alternating_set} achieves universal simulation of spinful fermions. Specifically, the unitary evolution generated by the dynamical Lie algebra (DLA) $\mathfrak{g} =\mathrm{span}_{\mathbb{R}}\langle iH_{\mathrm{odd}}^{(\mathrm{hop})},iH_{\mathrm{even}}^{(\mathrm{hop})},iH_{\mathrm{odd}}^{(\mu)},iH_{\mathrm{even}}^{(\mu)},iH_{B_X},iH_{B_Z},iH_{U}\rangle_{\mathrm{Lie}}$ equals $U(\mathcal{H}_f)$, where $\mathcal{H}_f = \Lambda^n(\mathbb{C}^{2N})$.
\end{theorem}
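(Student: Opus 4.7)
The plan is to mirror the proof of \Cref{thm:universal_1d_fermion} by viewing the spinful chain as a $2N$-mode fermionic system in which each lattice site hosts two modes labeled by spin $\sigma\in\{\uparrow,\downarrow\}$. By \Cref{lemma:Universality_spinless_fermion} applied with $d=2N$, it suffices to generate every element of $H_{\mathrm{free},f}$ on these $2N$ modes together with any non-quadratic two-mode interaction; the on-site Hubbard $H_U=\sum_i n_{i,\uparrow}n_{i,\downarrow}$ trivially supplies the latter, since each summand acts only on the two modes $(i,\uparrow)$ and $(i,\downarrow)$. A direct analogue of \Cref{lemma:fermion_H_free_generators} for a $2N$-mode chain further reduces the task to producing every nearest-neighbor hopping $c_{i,\sigma}^\dagger c_{i+1,\sigma'}+\mathrm{h.c.}$ and every on-site number $n_{i,\sigma}$.

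The execution proceeds in three steps. (i) Using the identity $[H_{B_Z},c_{i,\sigma}^\dagger c_{i+1,\sigma}]=-a\,s_\sigma\,c_{i,\sigma}^\dagger c_{i+1,\sigma}$ with $s_\uparrow=+1,s_\downarrow=-1$, the commutator $[H_{B_Z},H_{\mathrm{odd/even}}^{(\mathrm{hop})}]$ produces the spin-antisymmetric combination of bond currents, while the spinless commutator $[H_{\mathrm{odd/even}}^{(\mu)},H_{\mathrm{odd/even}}^{(\mathrm{hop})}]$ produces the spin-symmetric combination; linearly combining the two isolates the spin-polarized dual-site hoppings $H_{\mathrm{odd/even},\sigma}^{(\mathrm{hop})}$. (ii) Iterated commutators of $H_{B_Z}$ with $H_{B_X}$ generate the towers $\sum_j (aj+b)^{2m}X_j$ and $\sum_j (aj+b)^{2m+1}Y_j$ for $m=0,1,\dots,N-1$, where $X_j=c_{j,\uparrow}^\dagger c_{j,\downarrow}+\mathrm{h.c.}$ and $Y_j=i(c_{j,\uparrow}^\dagger c_{j,\downarrow}-\mathrm{h.c.})$. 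For generic, non-degenerate gradient ($a\neq 0$ with $-2b/a$ not an integer in $[2,2N]$), the polynomial weights are Vandermonde-invertible on $j\in\{1,\dots,N\}$, so each individual $X_j$ and $Y_j$ lies in the DLA, and $[X_j,Y_j]=-2i\,\Delta_j$ with $\Delta_j=n_{j,\uparrow}-n_{j,\downarrow}$ delivers the spin-difference number at every site. (iii) Recombining $\Delta_j$ with the uniform chemical potentials gives the spin-polarized versions $H_{\mathrm{odd/even},\sigma}^{(\mu)}=\tfrac{1}{2}\bigl(H_{\mathrm{odd/even}}^{(\mu)}\pm\sum_{j\in\mathrm{odd/even}}\Delta_j\bigr)$.

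At this stage each spin sector is equipped with the full dual-site control set of \Cref{thm:universal_1d_fermion}, so applying that theorem within each sector (equivalently invoking \Cref{lemma:fermion_H_free_generators}) produces every spin-preserving quadratic $c_{i,\sigma}^\dagger c_{j,\sigma}+\mathrm{h.c.}$ and every individual $n_{i,\sigma}$. Spin-mixing quadratics are then generated by conjugating a spin-preserving hop with the on-site spin flip $X_k$ for $k\in\{i,j\}$: the commutator $[X_k,c_{i,\sigma}^\dagger c_{j,\sigma}+\mathrm{h.c.}]$ yields a spin-mixing current, and a second commutator with the appropriate $n_{i,\sigma}$ rotates the current into a Hermitian spin-mixing hopping $c_{i,\uparrow}^\dagger c_{j,\downarrow}+\mathrm{h.c.}$. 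Together with the preceding step, these exhaust $H_{\mathrm{free},f}$ on $2N$ modes, and \Cref{lemma:Universality_spinless_fermion} with $H_U$ as the non-quadratic two-mode interaction then yields $\mathfrak{g}=\mathfrak{u}(\mathcal{H}_f)$. The main obstacle is step (iii): because $[H_{B_Z},H_{\mathrm{odd/even}}^{(\mu)}]=0$, $H_{B_Z}$ cannot directly resolve spin in the chemical-potential block, and the workaround of extracting individual $\Delta_j$ via polynomial inversion in step (ii) is precisely what forces the linear gradient $a\neq 0$ to be genuinely nonzero—explaining why a merely uniform Zeeman field would not suffice.
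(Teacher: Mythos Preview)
Your argument is correct but follows a genuinely different route from the paper. The paper isolates the spin-resolved data at site~$1$ first, via a hand-crafted chain of ancillary Hamiltonians $H_{a,1},\dots,H_{a,7}$ built from specific linear combinations and nested commutators of the controls, and then iterates inward along the chain one site at a time; in doing so it invokes $H_{B_Z}$ at \emph{two} distinct parameter values, the tilted field $(a,b)=(1,0)$ and the uniform field $(a,b)=(0,1)$. Your approach instead separates the spin label globally: you split the hopping sector into spin-polarized pieces by comparing $\mathrm{ad}_{H_{B_Z}}$ with $\mathrm{ad}_{H^{(\mu)}}$ on the bond currents, then exploit the tower $\mathrm{ad}_{H_{B_Z}}^{m}(H_{B_X})$ together with Vandermonde inversion to extract every local $X_j,Y_j,\Delta_j$ in one shot, and finally invoke the spinless \Cref{thm:universal_1d_fermion} as a black box inside each spin sector before generating the spin-mixing quadratics by conjugation with the local $X_k$. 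This is cleaner and more modular, makes the role of the gradient $a\neq 0$ completely transparent (it is exactly what guarantees invertibility of the polynomial weights), and needs only a single generic $(a,b)$ rather than access to both tilted and uniform Zeeman fields. The paper's construction, by contrast, is more explicit at the level of concrete pulse sequences and avoids any genericity hypothesis on $(a,b)$, at the cost of tacitly using two independent $H_{B_Z}$ controls.
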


\begin{proof}
By \Cref{lemma:Universality_spinless_fermion,lemma:fermion_H_free_generators}, it is sufficient to show that the set $\{iH_{\mathrm{odd}}^{(\mathrm{hop})},iH_{\mathrm{even}}^{(\mathrm{hop})},iH_{\mathrm{odd}}^{(\mu)},iH_{\mathrm{even}}^{(\mu)},iH_{B_X},iH_{B_Z}\}$ can generate arbitrary free fermion gates. Without specification, we will choose $a=1,b=0$ for $H_{B_Z}$ to realize the tilted $z$-field.

Starting with
\eqs{
[H_{\mathrm{even}}^{(\mathrm{hop})},i[H_{\mathrm{even}}^{(\mathrm{hop})},H_{\mathrm{odd}}^{(\mu)}]]=\sum_{i\in \mathrm{even},\sigma}n_{i,\sigma}-n_{i+1,\sigma},
}
we can single out $\sum_{\sigma}n_{1,\sigma}$ by the algebra:
\eqs{ \label{eqn:n_1_sigma}
\sum_{\sigma}n_{1,\sigma}=H_{\mathrm{odd}}^{(\mu)}-H_{\mathrm{even}}^{(\mu)}+[H_{\mathrm{even}}^{(\mathrm{hop})},[H_{\mathrm{even}}^{(\mathrm{hop})},H_{\mathrm{odd}}^{(\mu)}]].
}
By defining the following ancillary Hamiltonians:
\begin{equation}
    \begin{aligned}
        H_{a,1} &= H_{B_Z}+H_{\mathrm{odd}}^{(\mu)}+2H_{\mathrm{even}}^{(\mathrm{hop})}\\
    &=2n_{1,\uparrow}+4n_{2,\uparrow}+\sum_{i\in\mathrm{odd},i\neq 1}(i+1)n_{i,\uparrow}-(i-1)n_{i,\downarrow}+\sum_{i\in \mathrm{even},i\neq 2}(i+2)n_{i,\uparrow}-(i-2)n_{i,\downarrow},\\
    H_{a,2}  &= [H_{\mathrm{odd}}^{(\mathrm{hop})},[H_{\mathrm{odd}}^{(\mathrm{hop})},\sum_{\sigma}n_{1,\sigma}]]=\sum_{\sigma}(c^{\dagger}_{1,\sigma}c_{2,\sigma}+c^{\dagger}_{2,\sigma}c_{1,\sigma})\;,
    \end{aligned}
\end{equation}
we can obtain $n_{2,\uparrow}-n_{1,\uparrow}$ by
\begin{equation}
    [H_{a,2},[H_{a,2},H_{a,1}]]=2(n_{2,\uparrow}-n_{1,\uparrow})\;.
\end{equation}
Then, we can use $n_{2,\uparrow}-n_{1,\uparrow}$ and the uniform $B_X$ field to single out the $B_X$ field on sites $1$ and $2$ as:
\eqs{
H_{a,3}=[n_{2,\uparrow}-n_{1,\uparrow},[H_{B_X},n_{2,\uparrow}-n_{1,\uparrow}]]=c^{\dagger}_{1,\uparrow}c_{1,\downarrow}+c^{\dagger}_{1,\downarrow}c_{1,\uparrow}+c^{\dagger}_{2,\uparrow}c_{2,\downarrow}+c^{\dagger}_{2,\downarrow}c_{2,\uparrow}.
}
Using the uniform $B_Z$ field by selecting $a =0, b = 1$, i.e., $\sum_{i}n_{i,\uparrow}-n_{i,\downarrow}$, we construct another two ancillary Hamiltonians as:
\begin{equation}
    \begin{aligned}
        H_{a,4}&=\sum_{i}n_{i,\uparrow}-n_{i,\downarrow}+\sum_{\sigma}n_{1,\sigma}+\sum_{\sigma}n_{2,\sigma}=2n_{1,\uparrow}+2n_{2,\uparrow}+\sum_{i\neq 1,2}n_{i,\uparrow}-n_{i,\downarrow},\\
        H_{a,5}&=2H_{a,4}-H_{a,1} = 2n_{1,\uparrow}-\sum_{i\in\mathrm{odd},i\neq 1}((i+3)n_{i,\uparrow}-(i-3)n_{i,\downarrow})-\sum_{i\in \mathrm{even},i\neq 2}(in_{i,\uparrow}-in_{i,\downarrow}).
    \end{aligned}
\end{equation}
Then we can single out the spin flip on the first site as:
\eqs{
H_{a,6}=[H_{a,5},[H_{a,3},H_{a,5}]]=c^{\dagger}_{1,\uparrow}c_{1,\downarrow}+c^{\dagger}_{1,\downarrow}c_{1,\uparrow}\;,
}
which gives: 
\eqs{
H_{a,7}=[[H_{a,3},H_{a,5}],H_{a,6}]=n_{1,\uparrow}-n_{1,\downarrow}\;.
}
With $n_{1,\uparrow}+n_{1,\downarrow}$ (see \Cref{eqn:n_1_sigma}) and $n_{1,\uparrow}-n_{1,\downarrow}$, we can control $n_{1,\uparrow}$ and $n_{1,\downarrow}$ separately. 
Then, we can get the hopping between sites 1 and 2 for spin-up and spin-down individually as:
\eqs{
&[[H_{\mathrm{odd}}^{(\mathrm{hop})},n_{1,\uparrow}],n_{1,\uparrow}]=c^{\dagger}_{1,\uparrow}c_{2,\uparrow}+c^{\dagger}_{2,\uparrow}c_{1,\uparrow},\\
&[[H_{\mathrm{odd}}^{(\mathrm{hop})},n_{1,\downarrow}],n_{1,\downarrow}]=c^{\dagger}_{1,\downarrow}c_{2,\downarrow}+c^{\dagger}_{2,\downarrow}c_{1,\downarrow}.
}
Therefore, we have all individual controls on site 1 being constructed: $n_{1,\uparrow}$, $n_{1,\downarrow}$, $c^{\dagger}_{1,\uparrow}c_{2,\uparrow}+c^{\dagger}_{2,\uparrow}c_{1,\uparrow}$, $c^{\dagger}_{1,\downarrow}c_{2,\downarrow}+c^{\dagger}_{2,\downarrow}c_{1,\downarrow}$ and $c^{\dagger}_{1,\uparrow}c_{1,\downarrow}+c^{\dagger}_{1,\downarrow}c_{1,\uparrow}$. Following the same constructive method, one can get all individual $n_{i,\uparrow/\downarrow}$ and hopping terms. By \Cref{lemma:fermion_H_free_generators}, we show $\{iH_{\mathrm{odd}}^{(\mathrm{hop})},iH_{\mathrm{even}}^{(\mathrm{hop})},iH_{\mathrm{odd}}^{(\mu)},iH_{\mathrm{even}}^{(\mu)},iH_{B_X},iH_{B_Z}\}$ can generate all free fermion operations. Together with the onsite Hubbard interaction $H_{U}$ and \Cref{lemma:Universality_spinless_fermion}, we show that this globally controlled spinful fermionic chain is universal.

\end{proof}

\section{Next-Nearest-Neighbor Hoppings in the Fermi-Hubbard model} \label{appendix:NNN_hopping}
\begin{figure}[t]
    \centering
    \includegraphics[width=\linewidth]{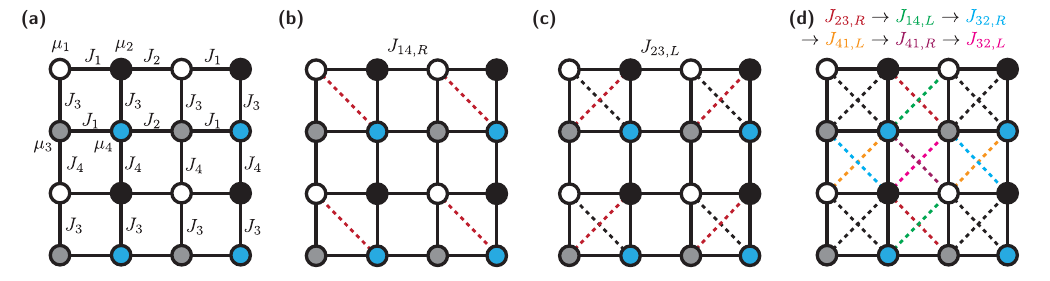}
    \caption{Implementation of next-nearest-neighbor (NNN) hoppings in the two-dimensional Fermi-Hubbard model using global control.  
    (a) A two-dimensional superlattice with four atomic species, labeled \(1\)–\(4\) and color-coded as white, black, gray, and blue, respectively. The control Hamiltonians consist of species-dependent chemical potentials \(\mu_1 \sim \mu_4\) and nearest-neighbor (NN) hopping terms \(J_1 \sim J_4\), indicated on the corresponding bonds.  
    (b) Realization of the rightward NNN hopping \(J_{14,R}\) between type-1 and type-4 sites, shown by red dashed lines.  
    (c) Given \(J_{14,R}\), implementation of the leftward NNN hopping \(J_{23,L}\) between type-2 and type-3 sites (also in red dashed lines).  
    (d) Sequential construction of all NNN hoppings \(J_{23,R}, J_{14,L}, J_{32,R}, J_{41,L}, J_{41,R}, J_{32,L}\), indicated by colored dashed lines corresponding to the hopping types.
    }
    \label{fig:NNN_FH_model}
\end{figure}

The two-dimensional Fermi-Hubbard model with next-nearest-neighbor (NNN) hoppings is widely regarded as a promising platform for capturing key features of high-temperature superconductivity. Here, we propose a protocol to implement such NNN hoppings in an analog fermionic quantum simulator based on four atomic species, as illustrated in Figure~\ref{fig:NNN_FH_model}.  

Figure~\ref{fig:NNN_FH_model}(a) presents the superlattice geometry and the corresponding global control scheme, using a \(3\times3\) lattice as a representative example. The superlattice consists of four distinct site types, indicated by white, black, gray, and blue nodes. Each site type \(i\) is subjected to a homogeneous, species-specific chemical potential \(\mu_i\). The nearest-neighbor hopping pattern is periodic: horizontal hoppings from odd (even) columns to the adjacent rightward even (odd) columns are controlled by \(J_1\) (\(J_2\)), while vertical hoppings from odd (even) rows to neighboring downward even (odd) rows are governed by \(J_3\) (\(J_4\)). This structured pattern of hopping and on-site potentials is particularly natural for implementation in fermionic optical superlattices~\cite{superlattice_experiment}.

We focus on the spinless fermion case, as the absence of a magnetic field implies that spin-up and spin-down components are dynamically decoupled. The spinful case can be straightforwardly generalized by including spin degrees of freedom. To implement the next-nearest-neighbor (NNN) hopping terms, we employ a set of global control Hamiltonians corresponding to site-dependent chemical potentials (\( H_{i}^{(\mu)},\, i = 1, \dots, 4 \)) and nearest-neighbor (NN) hopping operators (\( H_{i}^{(\mathrm{hop})},\, i = 1, \dots, 4 \)), defined as:
\begin{equation}
\begin{aligned}
    &H_{1}^{(\mu)} = \sum_{i \in \mathrm{odd}} \sum_{j \in \mathrm{odd}} n_{i,j},\quad 
    H_{2}^{(\mu)} = \sum_{i \in \mathrm{odd}} \sum_{j \in \mathrm{even}} n_{i,j},\quad 
    H_{3}^{(\mu)} = \sum_{i \in \mathrm{even}} \sum_{j \in \mathrm{odd}} n_{i,j},\quad 
    H_{4}^{(\mu)} = \sum_{i \in \mathrm{even}} \sum_{j \in \mathrm{even}} n_{i,j},\\
    &H_{1}^{(\mathrm{hop})} = \sum_i \sum_{j \in \mathrm{odd}} \left( c_{i,j+1}^\dagger c_{i,j} + \mathrm{h.c.} \right),\quad 
    H_{2}^{(\mathrm{hop})} = \sum_i \sum_{j \in \mathrm{even}} \left( c_{i,j+1}^\dagger c_{i,j} + \mathrm{h.c.} \right),\\
    &H_{3}^{(\mathrm{hop})} = \sum_{i \in \mathrm{odd}} \sum_j \left( c_{i+1,j}^\dagger c_{i,j} + \mathrm{h.c.} \right),\quad 
    H_{4}^{(\mathrm{hop})} = \sum_{i \in \mathrm{even}} \sum_j \left( c_{i+1,j}^\dagger c_{i,j} + \mathrm{h.c.} \right).
\end{aligned}
\end{equation}

To engineer the rightward NNN hopping between type-1 and type-4 sites, as shown in Figure~\ref{fig:NNN_FH_model}(b), we begin by isolating the appropriate NN hopping terms using commutators with the chemical potential Hamiltonians:
\begin{align}
    H_{1}^{(\mathrm{hop})\prime} &= i[H_{2}^{(\mu)}, H_{1}^{(\mathrm{hop})}] 
    = i \sum_{i \in \mathrm{odd}} \sum_{j \in \mathrm{odd}} \left( c_{i,j+1}^\dagger c_{i,j} - c_{i,j}^\dagger c_{i,j+1} \right),\\
    H_{3}^{(\mathrm{hop})\prime} &= i[H_{4}^{(\mu)}, H_{3}^{(\mathrm{hop})}] 
    = i \sum_{i \in \mathrm{odd}} \sum_{j \in \mathrm{even}} \left( c_{i+1,j}^\dagger c_{i,j} - c_{i,j}^\dagger c_{i+1,j} \right).
\end{align}
Their nested commutator yields an imaginary NNN hopping:
\begin{equation}
    H_{14,R}^{(\mathrm{hop})\prime} = i[H_{1}^{(\mathrm{hop})\prime}, H_{3}^{(\mathrm{hop})\prime}] = i \sum_{i \in \mathrm{odd}} \sum_{j \in \mathrm{odd}} \left( c_{i+1,j+1}^\dagger c_{i,j} - c_{i,j}^\dagger c_{i+1,j+1} \right).
\end{equation}
To obtain a Hermitian NNN hopping, we eliminate the imaginary prefactor using a final commutator with a chemical potential term:
\begin{equation}
    H_{14,R}^{(\mathrm{NNN})} = i[H_{1}^{(\mu)}, H_{14,R}^{(\mathrm{hop})\prime}] = \sum_{i \in \mathrm{odd}} \sum_{j \in \mathrm{odd}} \left( c_{i+1,j+1}^\dagger c_{i,j} + c_{i,j}^\dagger c_{i+1,j+1} \right).
\end{equation}
Thus, the rightward NNN hopping term \( H_{14R}^{(\mathrm{NNN})} \) is obtained via the triple nested commutator:
\begin{equation}
    H_{14,R}^{(\mathrm{NNN})} = [H_{1}^{(\mu)}, [[H_{2}^{(\mu)}, H_{1}^{(\mathrm{hop})}], [H_{4}^{(\mu)}, H_{3}^{(\mathrm{hop})}]]].
\end{equation}

Similarly, the leftward NNN hopping between type-2 and type-3 sites, depicted in Figure~\ref{fig:NNN_FH_model}(c), is given by:
\begin{equation}
    H_{23,L}^{(\mathrm{NNN})} = [H_{2}^{(\mu)}, [[H_{1}^{(\mu)}, H_{1}^{(\mathrm{hop})}], [H_{3}^{(\mu)}, H_{3}^{(\mathrm{hop})}]]].
\end{equation}
Together, these two terms realize the NNN hoppings on a single plaquette, and the remaining six NNN hoppings can be obtained analogously using the following expressions:
\begin{align}
    H_{23,R}^{(\mathrm{NNN})} &= [H_{2}^{(\mu)}, [[H_{1}^{(\mu)}, H_{2}^{(\mathrm{hop})}], [H_{3}^{(\mu)}, H_{3}^{(\mathrm{hop})}]]],\\
    H_{14,L}^{(\mathrm{NNN})} &= [H_{1}^{(\mu)}, [[H_{2}^{(\mu)}, H_{2}^{(\mathrm{hop})}], [H_{4}^{(\mu)}, H_{3}^{(\mathrm{hop})}]]],\\
    H_{32,R}^{(\mathrm{NNN})} &= [H_{3}^{(\mu)}, [[H_{4}^{(\mu)}, H_{1}^{(\mathrm{hop})}], [H_{2}^{(\mu)}, H_{4}^{(\mathrm{hop})}]]],\\
    H_{41,L}^{(\mathrm{NNN})} &= [H_{4}^{(\mu)}, [[H_{3}^{(\mu)}, H_{1}^{(\mathrm{hop})}], [H_{1}^{(\mu)}, H_{4}^{(\mathrm{hop})}]]],\\
    H_{41,R}^{(\mathrm{NNN})} &= [H_{4}^{(\mu)}, [[H_{3}^{(\mu)}, H_{2}^{(\mathrm{hop})}], [H_{1}^{(\mu)}, H_{4}^{(\mathrm{hop})}]]],\\
    H_{32,L}^{(\mathrm{NNN})} &= [H_{3}^{(\mu)}, [[H_{4}^{(\mu)}, H_{2}^{(\mathrm{hop})}], [H_{2}^{(\mu)}, H_{4}^{(\mathrm{hop})}]]].
\end{align}

These expressions follow a clear structural rule: for a given NNN hopping \( H_{ij,L/R}^{(\mathrm{NNN})} \), we identify a two-step path from site \(i\) to \(j\) via an intermediate site \(k\), such that the three chemical potential Hamiltonians correspond to \(H_{i}^{(\mu)}, H_{k}^{(\mu)}, H_{j}^{(\mu)}\), and the NN hopping operators correspond to the two bonds along the path. The order of commutators reflects this path structure.

With all NNN hopping terms constructed, we can implement the full NNN Fermi-Hubbard Hamiltonian using Trotterized evolution, as described in \Cref{lemma:Linear_Combo_Generators,lemma:Commutator_Generators}. This approach enables the efficient analogue simulation of strongly correlated electron systems, including candidate models for high-temperature superconductivity.

\section{Symmetry-Protected-Topological Phases\label{app:spt}}
The effective Hamiltonian we focus on is the cluster-Ising model:
\eqs{
H_{\text{ZXZ}}=J_{\text{eff}}\sum_{i}Z_{i-1}X_{i}Z_{i+1},\label{eq:zxz_app}
}
which serves as a pedagogical example of a system realizing a symmetry-protected topological (SPT) phase. Through the Jordan-Wigner transformation, this model can be mapped to a pair of decoupled Kitaev chains \cite{KitaevChain, ruben_spt}, supporting a topological edge mode, as illustrated in \Cref{fig:fig2}(a). When the total number of sites $N$ is even, the Hamiltonian in \Cref{eq:zxz_app} respects a $\mathbb{Z}_2 \times \mathbb{Z}_2$ symmetry generated by the operators $P_1 = X_1 X_3 X_5 \cdots X_{N-1}$ and $P_2 = X_2 X_4 Z_6 \cdots X_N$. Although $P_1$ and $P_2$ are non-local, their action in the ground-state subspace is effectively localized near the boundaries. Specifically, they reduce to boundary operators: $P_{1}^{(L)}=X_{1}Z_{2}$, $P_{1}^{(R)}=Z_{N}$ and $P_{2}^{(L)}=Z_1$, $P_2^{(R)}=Z_{N-1}X_{N}$. Notably, the boundary operators $P_1^{(L)}$ and $P_2^{(L)}$ are anti-commuting symmetries on the boundary, which confirms the presence of degenerate topological edge modes in the cluster-Ising model. In the following, we show that using quantum optimal control, it is possible to design a sequence of global control pulses $u_{\alpha}^{(\tau)}(t)$ such that 
\eqs{
\mathcal{T}\left[e^{-i\int dt \sum_{\alpha}u_{\alpha}^{(\tau)}(t)H_{\alpha} }\right]=e^{-i \tau H_{\mathrm{ZXZ}}}
}
realizing the target unitary evolution under the cluster-Ising 

\subsection{The limitation of quantum simulation in the blockade regime} \label{appendix:limitation_blockade}
Most current quantum simulation experiments using Rydberg atoms operate in the blockade regime, where the $V_{jl}$ in \Cref{eq:Rydberg_Hamiltonian} is sufficiently large for adjacent sites $j$ and $l$, so the neighboring atoms cannot be excited simultaneously. 
In this regime, the system is effectively governed by the PXP model Hamiltonian~\cite{Turner_2018_PXP, quantum_scar}:
\begin{equation} \label{eqn:PXP_Hamiltonian}
    H_{PXP} = \frac{\Omega(t)}{2} \sum_{i=2}^{N-1} P_{i-1} X_i P_{i+1} - \Delta(t)\sum_{i=1}^N n_i\;,
\end{equation}
where $\Omega(t)$ and $\Delta(t)$ are the global Rabi frequency and detuning as in \Cref{eq:Rydberg_Hamiltonian}, respectively.
In \Cref{eqn:PXP_Hamiltonian}, $P_{i} = \frac{1}{2}(I_i+Z_i)$ is the projector to the $\ket{0}$ subspace at the site $i$, and $n_i = I_i - P_i = \frac{1}{2}(I_i - Z_i)$ is the number operator.

Here, we prove that the PXP model cannot generate the dynamics of the ZXZ model described by \Cref{eq:zxz_app}.
If we define the control Hamiltonians of $H_{PXP}$ as 
\begin{equation} \label{eq:PXP_control}
    H_{\Omega} = \sum_{i=2}^{N-1} P_{i-1}X_i P_{i+1}\;,\quad H_{\Delta} = \sum_{i=1}^N n_i\;,
\end{equation}
we will prove the following proposition:
\begin{proposition} \label{prop:limitaion_PXP}
    Consider a chain of qubits.
    Given the quantum control described by the PXP model in \Cref{eqn:PXP_Hamiltonian}, one cannot simulate the dynamics of the ZXZ Hamiltonian in \Cref{eq:zxz_app}.
\end{proposition}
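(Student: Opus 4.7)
The plan is to use the dynamical Lie algebra (DLA) framework already developed in the paper: exhibit an invariant subspace preserved by both PXP control Hamiltonians but violated by $H_{\text{ZXZ}}$. If such a subspace exists, every unitary reachable by PXP control preserves it, whereas $e^{-i\tau H_{\text{ZXZ}}}$ does not, which yields the impossibility.

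The natural candidate is the Rydberg-blockade subspace $\mathcal{H}_B \subset (\mathbb{C}^2)^{\otimes N}$ spanned by computational-basis states containing no two adjacent excitations, with orthogonal projector $\Pi_B = \prod_{j=1}^{N-1}(I - n_j n_{j+1})$. First, I would verify that $[H_\Delta,\Pi_B] = 0$ (immediate, since both are diagonal) and $[H_\Omega,\Pi_B] = 0$. For the latter, the key identity is $P_{i\pm 1}n_{i\pm 1} = 0$, so that $(I-n_{i-1}n_i)P_{i-1}X_iP_{i+1} = P_{i-1}X_iP_{i+1} = P_{i-1}X_iP_{i+1}(I-n_{i-1}n_i)$, and similarly for the bond $(i,i+1)$; the remaining blockade factors commute with $P_{i-1}X_iP_{i+1}$ term-by-term since their supports are disjoint or involve only diagonal operators.

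Next, invoking \Cref{lemma:Linear_Combo_Generators} and \Cref{lemma:Commutator_Generators}, the property of commuting with $\Pi_B$ is preserved under real linear combinations and Lie brackets, so every element of $\mathfrak{g}_{\mathrm{PXP}} := \mathrm{span}_{\mathbb{R}}\langle iH_\Omega, iH_\Delta\rangle_{\mathrm{Lie}}$ commutes with $\Pi_B$. Hence every unitary of the form in \Cref{eq:time_dependent_Hamiltonian} generated by the PXP controls leaves $\mathcal{H}_B$ invariant. To conclude, I would exhibit a blockade-allowed state that $H_{\text{ZXZ}}$ kicks out of $\mathcal{H}_B$: for $|\psi\rangle = |10\cdots 0\rangle$, the summand $Z_1 X_2 Z_3$ produces $-|110\cdots 0\rangle \notin \mathcal{H}_B$, while the remaining summands yield states still inside $\mathcal{H}_B$, so $H_{\text{ZXZ}}|\psi\rangle$ has a nonzero component outside $\mathcal{H}_B$. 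Expanding $e^{-i\tau H_{\text{ZXZ}}}|\psi\rangle$ to first order in $\tau$ then shows that for all sufficiently small nonzero $\tau$ the evolved state has nonzero projection on $\mathcal{H}_B^\perp$, so $e^{-i\tau H_{\text{ZXZ}}}$ cannot coincide with any PXP-reachable unitary.

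The main obstacle is the bookkeeping in the commutator check of Step 2: $\Pi_B$ is a product over all bonds, and one must ensure that sliding $P_{i-1}X_iP_{i+1}$ past the full product does not pick up residual terms. The identity $P_{i\pm 1}(I-n_{i\pm 1}n_{i\pm 2}) = P_{i\pm 1} = (I-n_{i\pm 1}n_{i\pm 2})P_{i\pm 1}$ handles the bonds touching the flipped site, and all other blockade factors are built from number operators on sites disjoint from $\{i-1,i,i+1\}$, which manifestly commute with $P_{i-1}X_iP_{i+1}$. With this observation the remainder of the argument is a direct application of the DLA machinery in \Cref{appendix:Notations_Preliminaries}.
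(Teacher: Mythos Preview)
Your proposal is correct and uses essentially the same mechanism as the paper: both exploit that the PXP controls preserve the Rydberg-blockade subspace $\mathcal{H}_B$ (no adjacent excitations) while $H_{\text{ZXZ}}$ does not. The paper's execution differs only in detail—it argues invariance informally (the projectors $P_{i\pm1}$ forbid flipping into a doubly-excited configuration) and, in place of your first-order small-$\tau$ argument, exhibits an exact finite-time violation by showing $e^{-i\frac{\pi}{2}H_{\text{ZXZ}}}|0\rangle^{\otimes N} \propto |011\cdots10\rangle$ via the conjugation identity $H_{\text{ZXZ}} = U_{CZ}\bigl(\sum_{i=2}^{N-1} X_i\bigr)U_{CZ}$.
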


\begin{proof}
    It suffices to show that there is one evolution by $H_{ZXZ}$ that is not attainable by $H_{PXP}$. 
    For the chain with $N$ qubits, we define the computational basis states as $\{\ket{\mathbf{s}}\}$ for all binary strings $\mathbf{s} \in \mathbb{F}_2^N$.
    We define a doubly-excited state as the computational basis states $\ket{\mathbf{s}}$ with at least two consecutive $1$'s in the string $\mathbf{s}$, such as $\ket{1100\dots00}$ and $\ket{11100\dots 0}$.

   If the qubit chain is initialized to the all $0$ state $\ket{0}^{\otimes N}$, using the PXP control Hamiltonians in \Cref{eq:PXP_control}, we cannot evolve the system to any doubly-excited state, a defining property of the PXP model, as illustrated initially in~\cite{Turner_2018_PXP}.
   To see this, we first notice that $H_{\Delta}$ is diagonal in the computational basis, so we focus on the evolution generated by $H_{\Omega}$.
   For the $i$-th qubit, there is only one term, $P_{i-1}X_iP_{i+1}$, that flips it along the $X$-axis.
   The flipping happens when the neighboring qubits at sites $i-1$ and $i+1$ are both at the $\ket{0}$ state, a constraint imposed by the projectors $P_{i-1}$ and $P_{i+1}$.
   This prohibits the evolution from $\ket{0}^{\otimes N}$ to any doubly-excited state.

   However, one can use $H_{ZXZ}$ to evolve $\ket{0}^{\otimes N}$ to a  doubly excited state $\ket{\phi} \equiv \ket{011\dots 10}$ where there are $(N-2)$ consecutive $\ket{1}$'s between the two $\ket{0}$'s at both ends.
    To see this, we notice that:
    \begin{equation}
        H_{ZXZ} = \sum_{i=2}^{N-1}Z_{i-1}X_i Z_{i+1} = U_{CZ} \left(\sum_{i=2}^{N-1} X_i \right)U_{CZ}\;,\quad \mathrm{with}\quad U_{CZ} = \prod_{i=1}^{N-1} CZ_{i,i+1}\;,
    \end{equation}
    where $CZ_{i,i+1}$ is the controlled-Z gate between the $i$-th and $(i+1)$-th qubits, which has the following algebra with Pauli operators:
    \begin{equation}
        CZ_{i,i+1}\cdot X_i\cdot  CZ_{i,i+1} = X_i Z_{i+1}\;,\quad CZ_{i,i+1}\cdot  X_{i+1}\cdot  CZ_{i,i+1} = Z_i X_{i+1}\;.
    \end{equation}
    Utilizing the fact that $U_{CZ} \ket{0}^{\otimes N} = \ket{0}^{\otimes N}$ and $U_{CZ}\ket{\phi} = (-1)^{N-3}\ket{\phi}$, we find that a $\frac{\pi}{2}$-evolution by $H_{ZXZ}$ can evolve $\ket{0}^{\otimes N}$ to $\ket{\phi}$ as (using $U_{CZ}^2 = I$)
    \begin{equation}
    \begin{aligned}
        \bra{\phi} e^{-i\frac{\pi}{2}H_{ZXZ} } \ket{0}^{\otimes N} &= \bra{\phi}U_{CZ} e^{-i\frac{\pi}{2} \sum_{i=2}^{N-1} X_i} U_{CZ} \ket{0}^{\otimes N} = (-1)^{N-3}\bra{\phi} \prod_{i=2}^{N-1} e^{-i\frac{\pi}{2} X_i}\ket{0}^{\otimes N} \\
        &= (-1)^{N-3} \bra{\phi} \prod_{i=2}^{N-1} (-i)X_i \ket{0}^{\otimes N} = (-1)^{N-3} (-i)^{N-2}\;.
    \end{aligned}
    \end{equation}
    Up to a physically irrelevant phase factor $(-1)^{N-3} (-i)^{N-2}$, we evolve $\ket{0}^{\otimes N}$ to $\ket{\phi}$, a task impossible for the $PXP$ model. Therefore, we reach our conclusion.
\end{proof}

\section{Direct Trajectory Optimization\label{app:DC}}

\textit{Quantum optimal control} describes an optimization problem over the space of time-dependent state and control \textit{trajectories} subject to the Schr\"odinger equation. The goal of a quantum optimal control problem is to return controls $u(t)$ such that the generator of the dynamics---the Hamiltonian $H[\textbf{u}(t)]$---drives the quantum state to a desired form: for a state vector, that form might be a specific state preparation, while for a unitary propagator, it might be a high-fidelity gate. In all but the simplest scenarios, $H[\textbf{u}(t)]$ induces dynamics involving many non-commuting Hamiltonian terms. Analytic controls $u(t)$ are not easily found, and numerical approaches are necessary.

\textit{Trajectory optimization problems} are an umbrella term for this category of optimization problem. The core structure is
\begin{align}
    \min_{x(t), u(t)} &\quad \int_0^T dt \ \ell(x(t), u(t)) + \ell_T(x(T)) \label{eq:to-prob} \\
    \text{s.t.} &\quad \dot{x} = f(x(t), u(t), t) \label{eq:to-dynamics} \\
    &\quad x(0) = x_{\text{init}} 
\end{align}
Here, $x(t)$ is the state trajectory, $u(t)$ is the control trajectory, $t \in [0, T]$, $\ell(\cdot, \cdot)$ is an arbitrary objective function along the entire trajectory, and $\ell_T(\cdot)$ is a terminal objective. A typical terminal objective measures the distance between $x(T)$ and a goal state, $x_{\text{goal}}$. In practice, trajectory optimization problems can be solved numerically by first discretizing time into $N$ intervals, with the trajectory data retained as $N$ \textit{knot points} $z_k = \qty(x_k, u_k)$, $k \in \qty{1\dots N}$. The dynamics, Eq.~\eqref{eq:to-dynamics}, are enforced between knot points over each interval $[t_k, t_k + \Delta t_k]$, such that $x_{k+1} =  F(x_k, u_k, t_k, \Delta t_k) = x_k + \int_{t_k}^{t_k + \Delta t_k}\text{d}t \, f(x(t), u(t), t) $.

A common approach in quantum optimal control is to solve \ref{eq:to-prob} using \textit{indirect} trajectory optimization, i.e., optimizing solely over the control variables $u_{1:N-1}$:
\begin{align}
    \minimize{u_{1:N-1}} &\quad \sum_{k=1}^{N-1} \ell(x_k(u_{1:k-1}), u_k) + \ell_T\qty(x_N(u_{1:N-1})) \label{eq:to-indirect} \\
    \text{subject to} &\quad \abs{u(t)} \le u_\text{max} \label{eq:to-constraint-indirect}
\end{align}
where it is natural to add bound constraints on the controls, Eq.~\eqref{eq:to-constraint-indirect}. The GRAPE algorithm~\cite{khaneja2005optimal} is the most well-known example. Advantageously, indirect approaches involve a relatively small number of optimization variables because controls are fewer in dimension than states. However, the cost of this advantage is paid in function evaluations, as each $x_k$ must be retrieved by evolving the initial state according to the dynamics. This evolution becomes a part of any gradients involving $x_k$, which must propagate its dependence on $u_{1:{k-1}}$, making constraints on intermediate states difficult to enforce. Moreover, the cost landscape over the space of control trajectories becomes highly nonlinear and challenging to navigate~\cite{rabitz2004quantum, day2019glassy}.

In robotics and aerospace engineering, offline control problems like Eq.\eqref{eq:to-prob} are often solved using \textit{direct} trajectory optimization methods. The direct approach treats the state variables $x_{1:N}$ as optimization variables alongside the controls $u_{1:N-1}$, e.g., Ref.~\cite{goldschmidt2022model, trowbridge2023direct}:
\begin{align}
    \minimize{x_{1:N}, u_{1:N-1}} &\quad \sum_{k=1}^{N-1} \ell(x_k, u_k) + \ell_T(x_N) \label{to-direct} \\
    \text{subject to} &\quad x_{k+1} = F(x_k, u_k, t_k, \Delta t_k) \label{eq:to-dynamics-direct} \\
    &\quad c(x_k, u_k) \le 0 \label{eq:to-constraints-direct}\\ 
    &\quad x_1 = x_{\text{init}}
\end{align}
where $F$ in Eq.~\eqref{eq:to-dynamics-direct} denotes the dynamics, which is enforced as a nonlinear constraint. We include this dynamics constraint alongside arbitrary non-linear constraints, represented by Eq.~\eqref{eq:to-constraints-direct} and encompassing the bounds constraints on controls. Both constraints are handled by solving Eq.~\eqref{to-direct} using a modern nonlinear optimization tool like IPOPT~\cite{wachter2006implementation}, which is expressly designed for large-scale nonlinear programming with constraints. Direct methods have a significant structural advantage over indirect methods when functions of the state variables are required because the gradients will depend directly on $x_k$. Indeed, direct methods offer much finer control over properties like smoothness, time-optimality, and robustness that inform the design of the final controls~\cite{bhattacharyya2024using, goldschmidt2025quantum}. For example, by directly optimizing timesteps, direct methods can seek minimum time solutions or provide control over otherwise non-actuated dynamics (like drift Hamiltonians). Moreover, the cost landscape of $z_{1:N}$ is fundamentally different from the landscape over $u_{1:N}$, critical for navigating highly constrained spaces~\cite{trowbridge2023direct}. For the case of unitary optimal control, we set up the following direct trajectory optimization problem:
\begin{align}
    \minimize{z_{1:N}} &\quad \sum_k J(z_k) + Q \qty(1 - \mathcal{F}\qty(\vec{\tilde{U}}_N)) \label{eq:problem-template} \\
    \text{subject to} &\quad \vec{\tilde{U}}_{k+1} = \Phi \qty(\vec{\tilde{U}}_k, u_k, \dot{u}_k, t_k, \Delta t_k) \\
    &\quad u_{k+1} = u_k + \dot{u}_k \Delta t_k \\
    &\quad \dot{u}_{k+1} = \dot{u}_k + \ddot{u}_k \Delta t_k \\
    &\quad \abs{\ddot{u}_k} < \ddot{u}_{\max} \\
    &\quad t_{k+1} = t_k + \Delta t_k \\
    &\quad \vec{\tilde{U}}_1 = \text{isovec}(I)
\end{align}
where
\begin{equation}
   z_k = \begin{pmatrix} \vec{\tilde{U}}_k \\ u_k \\ \dot{u}_k \\ \ddot{u}_k \\ t_k \\ \Delta t_k \end{pmatrix} \qquad 
   \text{and} 
   \qquad \vec{\tilde{U}}_k = \text{isovec}(U_k) = \text{vec} \begin{pmatrix} \Re(U_k) \\ \Im(U_k) \end{pmatrix}
\end{equation}
To set up and solve Eq.~\eqref{eq:problem-template}, we use \texttt{Piccolo.jl}~\cite{piccolo2025}, a state-of-the-art software ecosystem addressing quantum optimal control using direct trajectory optimization.

Control pulses that vary smoothly over a parameterized gate family can be obtained by solving a coordinated quantum optimal control problem and interpolating the results~\cite{sauvage2022optimal,chadwick2023efficient,bhattacharyya2024using}. If the interpolated representation is a neural network, efficient calibration of the entire control manifold implementing the gate family is possible~\cite{bhattacharyya2024using}.

\section{Quantum Optimal Control Performance\label{app:direct_and_indirect_data}}
We compare memory and runtime costs for different approaches to optimal control of a quantum state in Table~\ref{tab:complexity_comparison}. In the table, $D$ is the dimension of the Hilbert space and $N$ is the number of knot points. The runtime of indirect solvers is dominated by numerical integration. The per-knot cost is $\mathcal{O}(\nu D^2)$ for $\nu$ matrix-vector operations between knot points. We include two categories of indirect solvers in the table, automatic differentiation and the adjoint state method. The difference between the two is the way the control gradients are computed and the resulting memory overhead. Automatic differentiation stores the computational graph across knots, then backtracks to compute the gradient. The memory requirement is increased to $\mathcal{O}(N D)$, but the backward pass can be parallelized (the forward pass must still be computed serially). Alternatively, adjoint state methods rely on a separate system of equations for the gradient that evolve backward in time, lowering the memory overhead to $\mathcal{O}(D)$ by adding a constant factor increase to runtime~\cite{gautier2025optimal}.

In a direct method solved using interior-point methods, like IPOPT, the dominant runtime cost comes from solving a large, sparse linear equation called the Karush-Kuhn-Tucker (KKT) system~\cite{wachter2006implementation}, which encodes the first-order optimality condition for the constrained problem. Due to the causal structure of optimal control, this linear equation is block-banded with block sizes of $D$. The banded linear system can be solved with $N$ separate block factorizations, for a total cost of $\mathcal{O}(N D^3)$. Direct trajectory optimization stores the state variables at each knot point, similar to automatic differentiation; however, a direct method must also store the dynamics Jacobians, which are used as part of the KKT system, naively requiring $\mathcal{O}(N D^2)$ in memory. 

Other classes of solvers can be adapted to achieve similar performance without paying this memory overhead. A notable class is matrix-free Augmented Lagrangian (AL) methods, à la Alpaqa \cite{alpaqa}. Matrix-free solver methods offer significant potential for scaling direct trajectory optimization methods in the quantum domain because the dynamics' Jacobians need not be explicitly formed; only Jacobian-vector products (JVPs) are required. The memory cost reduces back to the indirect case, where only states must be stored. AL also replaces the expensive KKT solve with an inner loop of easier problems that restores the $\mathcal{O}(N D^2)$ runtime. If computational resources allow, dividing the quantum state trajectory into $N$ knot points offers additional parallelization and convergence advantages unique to a direct approach. This is included as an additional column in Table~\ref{tab:complexity_comparison}.


\begin{table}
    \centering
    \renewcommand{\arraystretch}{1.5}
    \begin{tabular}{llcccc}
        \hline
        & Method & Memory & Runtime & Parallelizable? \\ \hline

        \multirow{2}{*}{\parbox{3.3cm}{\textbf{Indirect trajectory optimization}}}
        & Automatic differentiation
        & $\mathcal{O}(N D)$
        & $\mathcal{O}(\nu N D^2)$
        & $\times$ \\
        & Adjoint state method
        & $\mathcal{O}( D)$
        & $\mathcal{O}(\nu N D^2)$
        & $\times$ \\
        \hline

        \multirow{2}{*}{\parbox{3.3cm}{\textbf{Direct trajectory optimization}}}
        & IPOPT
        & $\mathcal{O}(N D^2)$
        & $\mathcal{O}(N D^3)$
        & $\checkmark$ \\
        & Augmented Lagrangian (AL)
        & $\mathcal{O}(ND)$
        & $\mathcal{O}(\nu N D^2)$
        & $\checkmark$ \\
        \hline
    \end{tabular}
    \caption{Per iteration memory and runtime complexity comparison of trajectory optimization methods for a quantum state in a Hilbert space of dimension $D$ with $N$ knot points. Between knot points, $\nu$ quantifies the number of computations per numerical integration.}
    \label{tab:complexity_comparison}
\end{table}


A well-conditioned optimizer converges better than an ill-conditioned optimizer. Direct methods are time-local and almost block diagonal, so sensitivity (and thereby, conditioning) scales weakly with the number of knot points, usually linearly. An indirect method compresses all time slices into one endpoint sensitivity, so even for unitary dynamics, conditioning worsens when many time slices push the gradient in the same direction. As knot points increase, the sensitivity eventually starts to diverge as the impact of controls becomes unidentifiable. In addition, IPOPT is particularly effective at achieving feasibility under difficult constraint sets. Alongside conditioning, this helps explain its ubiquity in fields like aerospace engineering, robotics, and chemical process engineering~\cite{betts2010practical}. Indeed, even the increased per-iteration runtime of IPOPT (which can be obviated by matrix-free AL methods) is often balanced out by its fast convergence properties and practical versatility~\cite{wachter2006implementation}. Reference~\cite{kamen2026accurate} has demonstrated that direct trajectory optimization (IPOPT) can significantly outperform indirect trajectory optimization (adjoint method) on constrained quantum control problems.
 
To directly evaluate the performance of direct quantum optimal control on our problems, we prepared pulses using GRadient Ascent Pulse Engineering (GRAPE), an \textit{indirect} quantum optimal control method frequently implemented using automatic differentiation (although a lower memory overhead is possible using hardcoded gradients or adjoint state methods)~\cite{khaneja2005optimal,gautier2025optimal}. In what follows, we outline our implementation of GRAPE. 

Our pulse parameters consist of the intermediate values of $\Omega(t)$ and $\Delta(t)$ (as the values at $t = 0$ and $t = T$ are fixed to zero). Between these intermediate points, the controls are linearly interpolated to produce piecewise linear pulses. To respect the time resolution available on Aquila, in total, there are 48 parameters for a pulse of length $T = 1.25\mu \text{s}$, with 24 parameters each for the detuning and Rabi frequency. We fix the interatomic spacing to be $8.9 \mu \text{m}$, the same value used for the direct method of pulse optimization. 
We utilized automatic differentiation and the Adam optimizer available in JAX to minimize the following loss function: 

\begin{equation}
    \ell = 1 - \langle F_i \rangle_i + \lambda \sum_k g_k + r\sum_\alpha \left\langle \frac{d^2 u_\alpha}{dt^2}\right\rangle_t
\end{equation}
Here, $F_i$ refers to the fidelity between a Haar random state $|\phi_i\rangle$ evolved under the target $H_{ZXZ}$ Hamiltonian for $\tau = 0.8$ and the same Haar random state evolved under the global control pulse for the native Hamiltonian. We average over fifty states $|\phi_i\rangle$ in total. Physical constraints on the value and first derivative of the control are encoded via the quadratic penalty functions $g_k$, which remain zero unless the upper or lower bounds are violated. If any violations occur, the difference between the violated bound and the value is squared and added to $g_k$, which is then weighted by $\lambda = 100$ to encourage respecting the physical constraints. Finally, $r$ is a regularization weighting the average second derivative of the control to encourage smoothness in the controls $u_\alpha$. We vary $r$ between 0.0 and 1e-6. 

We perform 100 optimization trials with the initial values for $u_\alpha$ randomly selected from a uniform distribution, using bounds $(0, b)$ for $\Omega(t)$ and $(-b, b)$ for $\Delta(t)$. After generating 100 optimization trials each for $b = 1, 2, $ and $5$, we determined that there is little dependence on optimization range. These results are provided in Fig. \ref{app_fig:grape} The optimized pulses are more heavily dependent on the regularization $r$; this dependence is visualized in Fig. \ref{fig:fig3} of the main text.

\begin{figure*}
    \includegraphics[width=6.0in]{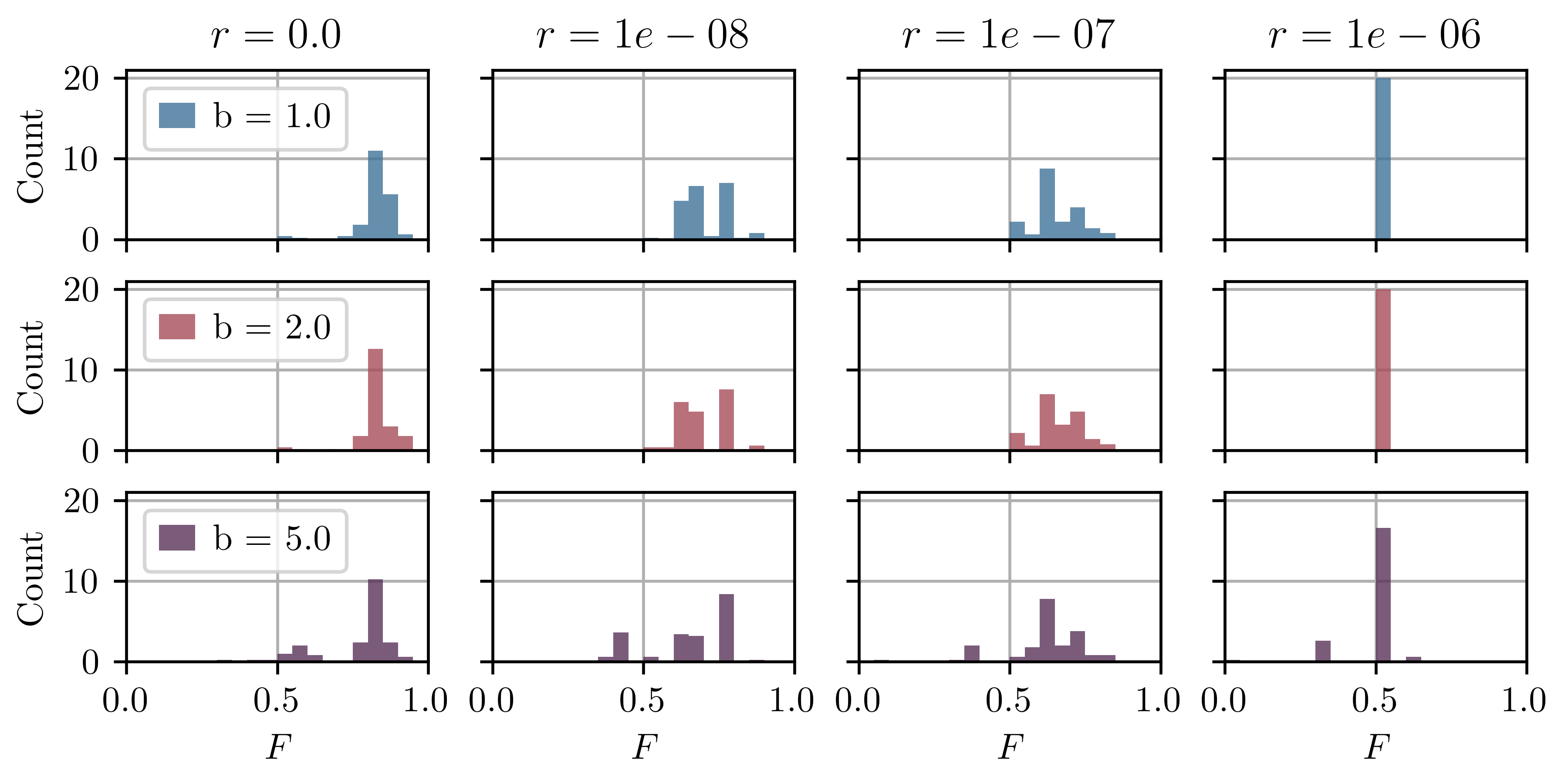}
    \caption{\textbf{GRAPE simulation data}. The second derivative regularization ($r$) is varied across columns, and the spread of the random initialization ($b$) is varied across rows. No appreciable dependence on $b$ is found in simulation, but the value of $r$ can highly influence the optimized pulse. Smaller values of $r$ tend to produce higher fidelity pulses; however, these solutions are increasingly non-smooth. \label{app_fig:grape}}
\end{figure*}


\section{Additional experimental results\label{app:additional_experimental}}

\begin{figure*}[htbp]
    \includegraphics[width=1\linewidth]{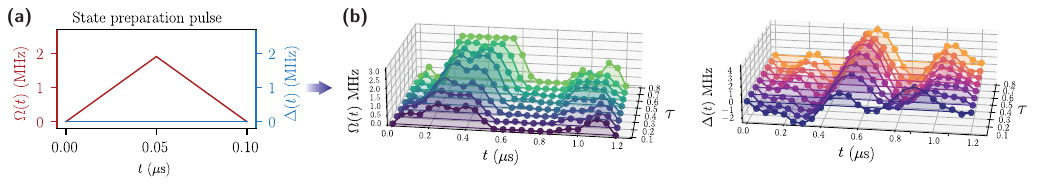}
    \caption{\textbf{Control pulses for additional experiments}. (a) State preparation pulse used to initialize the $\Lambda$-state, a different initial state for benchmarking. (b) Replotted control pulses from \Cref{fig:fig2}(c) used to engineer the ZXZ Hamiltonian dynamics, shown as a function of effective time $\tau$. \label{app_fig:state_prep_pulse}}
\end{figure*}

\begin{figure*}[htbp]
    \includegraphics[width=1\linewidth]{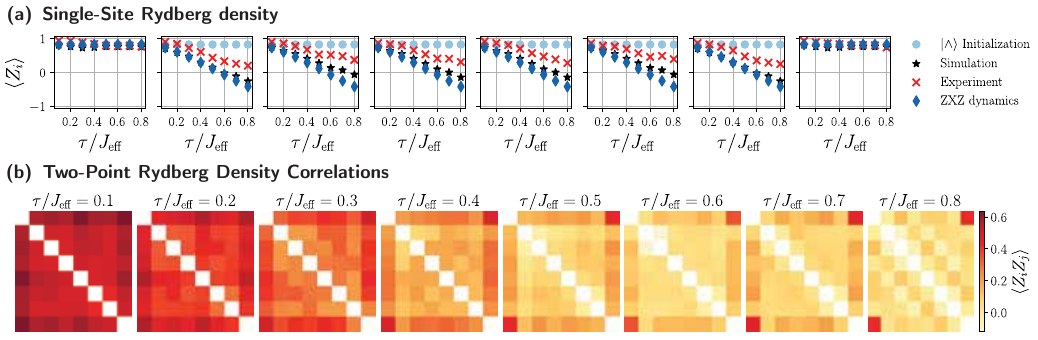}
    \caption{\textbf{Additional experimental signatures of topological dynamics}.
(a) Single-site Rydberg densities after applying the optimized global pulse [\Cref{fig:fig2}(c), \Cref{app_fig:state_prep_pulse}(b)] to an 8-atom chain initialized in the $\Lambda$-state using the state preparation pulse in \Cref{app_fig:state_prep_pulse}(a). Boundary atoms (1 and 8) retain high populations, while bulk atoms exhibit decay with increasing $\tau/J_{\mathrm{eff}}$, consistent with edge mode dynamics of the ZXZ Hamiltonian (black stars). Experimental data (red crosses) agrees with ideal simulations (blue diamonds).
(b) Connected two-point correlations $\langle Z_i Z_j \rangle$ measured at various $\tau/J_{\mathrm{eff}}$, showing persistent edge-edge correlations in experiment, in agreement with theoretical predictions. The size of the experimental error bars (standard deviation) is smaller than that of the red cross markers.\label{app_fig:additional_data}}
\end{figure*}

In this subsection, we provide additional experimental details. The Rydberg atom array platform used in our experiments is provided by QuEra Computing. More information can be found in the Aquila white paper \cite{aquila}. \Cref{tab:control_constraints} summarizes the key control parameters and constraints relevant to our work. Note that in the white paper the unit for the Rabi frequency is given in $\text{rad}/\mu\text{s}$, with the conversion $2\pi~(\text{rad}/\mu\text{s}) = 1~\text{MHz}$.

Due to hardware limitations discussed in the main text, we are restricted to measurements in the Pauli-$Z$ basis by detecting the Rydberg density of each atom. This constraint prevents access to information in other bases, limiting our ability to directly probe other topological edge signatures. To address this, we applied our control protocol to different initial states by prepending an additional short-duration global control pulse prior to the effective $ZXZ$ dynamics. Specifically, we used the global pulse shown in \Cref{app_fig:state_prep_pulse}(a) to prepare a distinct initial state, which we refer to as the $\Lambda$-state due to the shape of the pulse.

Prepending a state preparation pulse increases the total duration of the experiment and can introduce additional decoherence. To minimize this effect, we designed the pulse in \Cref{app_fig:state_prep_pulse}(a) to ramp the Rabi frequency up to near its maximum and then quickly back down. The corresponding experimental results are shown in \Cref{app_fig:additional_data}. In subplot (a), we observe good agreement between experiment and ideal simulation for the edge atoms, which maintain high Rydberg density, while the bulk atoms show slight deviations, possibly due to control errors. In subplot (b), persistent edge-edge correlations between the two boundary atoms are also observed, consistent with the expected topological behavior.

\begin{table}[htbp]
\centering
\begin{tabular}{|l|c|}
\hline
\textbf{Description} & \textbf{Constraint} \\
\hline
Rabi frequency bounds & $0 < \Omega < 2.41~\text{MHz}$ \\
Detuning range & $-19.9 < \Delta < 19.9~\text{MHz}$ \\
Maximum Rabi frequency slew rate & $\left|\delta \Omega/\delta t\right| < 39.7~\text{MHz}/\mu\text{s}$ \\
Maximum detuning slew rate & $\left|\delta \Delta/\delta t\right| < 397~\text{MHz}/\mu\text{s}$ \\
Minimum time resolution & $\delta t \geq 0.05~\mu\text{s}$ \\
\hline
\end{tabular}
\caption{System control constraints for laser pulse shaping.}
\label{tab:control_constraints}
\end{table}

In \Cref{app_fig:positions}(a), we show the spatial configuration of the atom sets used in \Cref{fig:fig4}. The separation between different sets is at least three times larger than the intra-set atomic spacing, ensuring that residual interactions between sets are negligible. In \Cref{app_fig:positions}(b), we present single-atom Rabi oscillation measurements for the central atom of each set with $\Delta = 0$, for which no significant spatial inhomogeneity is observed. We emphasize, however, that spatial inhomogeneities may arise under time-dependent control protocols.

\begin{figure}[htbp]
    \centering
    \includegraphics[width=1\linewidth]{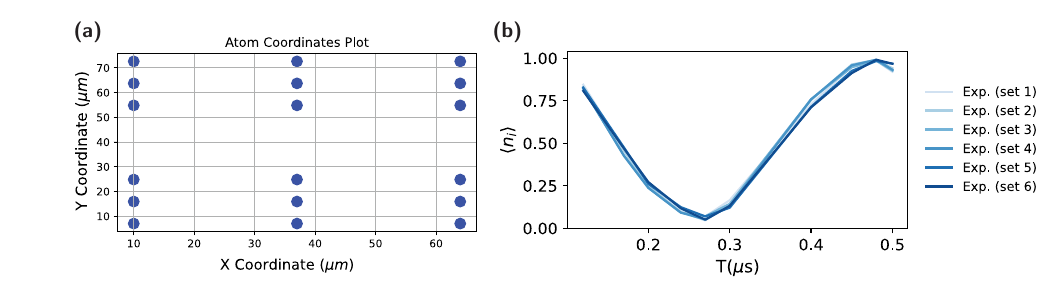}
    \caption{\textbf{Calibration experiments.} (a) Atom positions and the six three-atom sets used in \Cref{fig:fig4}. (b) Single-atom Rabi oscillation measurements for the central atom of each set with $\Delta = 0$.}
    \label{app_fig:positions}
\end{figure}

\section{Anti-concentration and information scrambling\label{app:scrambling}}

In this section, we review the concept of anti-concentration in quantum information scrambling and the simulation details related to \Cref{fig:scrambling}. In quantum information theory, anti-concentration measures the extent of information scrambling in a system by quantifying how widely the many-body wave function spreads across the computational basis $\{\ket{z}\}$
. Highly anti-concentrated states are broadly distributed among basis elements, making them challenging to simulate or learn with classical computers, a property that forms a cornerstone of proposed quantum supremacy experiments using random circuit sampling.

The uniform distribution is the ultimate anti-concentrated distribution because the probability mass is equally spread, with each outcome having probability $1/D$. A distribution is considered anti-concentrated as long as the average fluctuations from the uniform distribution are no larger than $O(1/D)$. One might expect that for random Haar states, the outcome probabilities would resemble a uniform distribution with each outcome having exactly probability $1/D$. However, due to coherent interference, this is not the case; instead, the probabilities exhibit speckle patterns. The statistical properties of $p(z)$ are universal and follow the so-called Porter-Thomas (PT) distribution. Specifically, when $D\gg 1$, the fraction of outcomes with $p(z)\in [x,x+dx]$ follows the Porter-Thomas distribution: $P[p(z)=x]=\mu^{-1}\exp(-x/\mu)dx$, where $\mu=1/D$. The anti-concentration statistics can be captured precisely by the \emph{collision probability}: $\mathcal{Z}=\sum_{z}p(z)^2$. The collision probability measures the probability that measurement outcomes from two independent copies of the quantum state or circuit are identical. For a uniform distribution $p(z)$, the collision probability reaches its minimum: $\mathcal{Z}_{\mathrm{uniform}}=1/D$. In contrast, for random Haar states where $p(z)$ follows the PT distribution, $\mathcal{Z}_{\mathrm{Haar}}=2/(D+1)\sim 2/D$, which is a constant factor larger than $\mathcal{Z}_{\mathrm{uniform}}$. In the literature, an output distribution $p(z)$ is considered anti-concentrated if its collision probability is a constant factor larger than the uniform distribution, i.e., $\mathcal{Z}=c\mathcal{Z}_{\mathrm{uniform}}$ with $c=O(1)$.

In this paper, we are interested in the time required for a globally driven dual-species system to anti-concentrate as a function of system size $N$. The system Hamiltonian is

\eqs{
H(t)=\sum_{j<k}V_{jk}n_i n_k+\sum_{\alpha\in \{A,B\}}\frac{\Omega_\alpha(t)}{2}(\sum_{i\in\{A,B\}} \ket{g_i}\bra{r_i}+\ket{r_i}\bra{g_i})-\sum_{\alpha\in \{A,B\}}\Delta_{\alpha}(t)(\sum_{i\in\{A,B\}}n_i),
}
where $V_{ij}$ is van der Waals interactions between atoms, $\Omega_{\alpha}(t)$ and $\Delta_{\alpha}(t)$ are global Rabi and detuning controls for the $A,B$ species atoms.

To quantify this, we focus on the rescaled output probability $w=Dp_U(z)=D|\langle z\ket{\psi}|^2=D|\langle z|U\ket{0}|^2$, where $U$ is a random unitary generated by the global pulses specified below. For random Haar states, we have
\eqs{
P[w=x]=\frac{D-1}{D}\left(1-\frac{x}{D}\right)^{D-2}\underset{\underset{D\gg 1}{\lim}}{=\!=}e^{-x}.
}

For random states generated by globally driven dual-species neutral atom arrays, we use the following protocol: The initial state is the ground state of all atoms, $\ket{0}^{\otimes N}$, which we abbreviate as $\ket{0}$ when there is no confusion. We place the two atom species in an alternating $ABAB...$ pattern with an even number of atoms in a one-dimensional chain, where $A$ denotes one atomic species and $B$ denotes the other. The interatomic separation is set to $d=8.9\mu \mathrm{m}$. We use $C_6=862690\times 2\pi \,\mathrm{MHz}\cdot \mu \mathrm{m}^6$ to describe the interactions between atoms, $V_{ij}=C_6/|r_i-r_j|^6$, although we emphasize that the particular value here is not critical and depends on the atomic species and levels used. For the random global pulses $\Omega_{A/B}(T)$, we use random piecewise constant functions with a constant time resolution $\Delta t=0.05\mu \mathrm{s}$ and choose values uniformly from $\Omega_{A/B}(t)\in[0,2.387]\,\mathrm{MHz}$. Similarly, the global detuning $\Delta_{A/B}(t)\in[-2.387,2.387]\, \mathrm{MHz}$ is chosen with the same time resolution. The number of piecewise constant segments determines the total evolution time. All these values are chosen to match a near-future dual-species atom array experimental setup.

Each random global pulse generates a unitary evolution $U$ that drives the state to $\ket{\psi_{U}}=U\ket{0}$. We then evaluate the collision probability
\eqs{
\mathcal{Z}=\mathbb{E}_{U}\left[\sum_{z}|\langle z|\psi_{U}\rangle|^4\right].
}
In practice, we sample 1000 random states and used empirical mean to estimate the ensemble average. In \Cref{fig:scrambling}(c), we plot the relative error defined as $\varepsilon_{r}=|\mathcal{Z}-\mathcal{Z}_{\mathrm{Haar}}|/\mathcal{Z}_{\mathrm{Haar}}$. 

Setting the characteristic time $T^*$ to reach anti-concentration as the time when $\varepsilon_{r}\leq 0.05$, we find that this globally driven quantum system with only temporal randomness anti-concentrates at $T^*\sim \log N$, similar to RUC.

\section{Additional results\label{app:additional_large_scale}}

In this section, we present supplementary numerical analyses to contextualize our experimental findings. We find that the discrepancy between the simulated and ideal ZXZ dynamics shown in \Cref{fig:fig5} (a) arises primarily from the finite fidelity of the short-duration pulses used in the experiment. To mitigate the relatively high noise levels of the current hardware, we constrained the evolution to a maximum pulse duration of $T_{\mathrm{max}}=1.2\,\mu\mathrm{s}$. As demonstrated in \Cref{fig:fig4} (c), decoherence remains negligible within this regime; however, this temporal constraint limits the system's expressivity and, consequently, the maximum achievable pulse fidelity. Despite these limitations, we identified high-fidelity pulse sequences using direct quantum collocation. If hardware coherence times improve to support pulses of $T_{\mathrm{max}}\approx 5\,\mu\mathrm{s}$ (\Cref{app_fig:performance_compare} (a)), performance is expected to enhance. To validate this, we compared the expectation values $\langle Z_i\rangle$ for each atom under four conditions: ideal ZXZ dynamics, the optimized $5\,\mu\mathrm{s}$ pulse in \Cref{app_fig:performance_compare} (a) (long-time pulse in the figure), the experimental pulse from \Cref{fig:fig2}(c) (short-time pulse in the figure), and the experimental pulse with Lindblad noise. The results confirm a clear performance gain with longer durations when there is no decoherence. Consequently, we conclude that the minor deviations from ideal ZXZ dynamics observed for atoms 3 and 6 in \Cref{fig:fig5} (a) are predominantly a consequence of pulse imperfections necessitated by the short-time constraint.
Beyond single-site expectation values, we evaluate the spatial correlations by plotting the two-point correlators $\langle Z_i Z_j \rangle$ in \Cref{app_fig:performance_compare} (c). Consistent with our previous observations, the extended pulse duration yields superior performance. In the experimental setting, the interplay between the limited expressivity of short-time pulses and Lindbladian decoherence causes the correlations between boundary atoms to be lower than those predicted by ideal ZXZ dynamics. Future implementations could benefit from hardware advancements, such as the integration of additional trapping lasers to maintain confinement for atoms in Rydberg states. Such state-insensitive trapping would mitigate decoherence, enabling the execution of the higher-fidelity, long-duration pulses identified in our simulations.


\begin{figure*}[htbp]
    \includegraphics[width=1\linewidth]{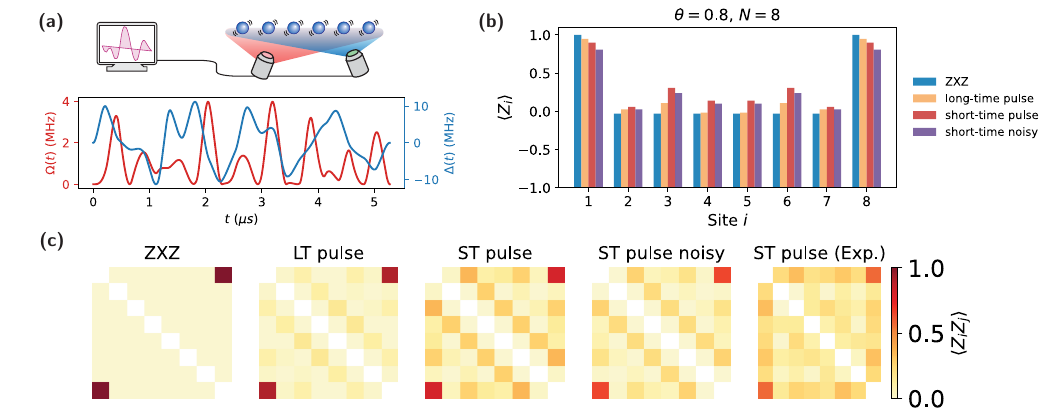}
    \caption{\textbf{Additional results on topological edge modes:} (a) Optimized long-duration control pulse, characterized by the Rabi frequency $\Omega(t)$ and detuning $\Delta(t)$, achieving better fidelity in implementing $\exp(-i\theta H_{\mathrm{ZXZ}})$ with $\theta = 0.8$.
(b) Simulated expectation values $\langle Z_i \rangle$ for each site of an eight-atom chain with interatomic spacing $d = 8.9~\mu\mathrm{m}$. The long-time pulse is shown in panel (a), while the short-time pulse corresponds to the experimental protocol used in \Cref{fig:fig2}(c). We compare ideal ZXZ dynamics with simulations using the long-time pulse, the short-time pulse, and the short-time pulse with Lindblad noise, all based on the full Rydberg Hamiltonian.
(c) Comparison of two-point correlations $\langle Z_i Z_j \rangle$ obtained from ideal ZXZ dynamics, long-time (LT) pulses, short-time (ST) pulses, ST pulses with Lindblad noise, and experimental ST-pulse data. The dominant discrepancy between experiment and ideal ZXZ dynamics arises from imperfections in the short-time pulse. Due to hardware constraints, this pulse represents the best achievable approximation to the ideal ZXZ dynamics for the current hardware. \label{app_fig:performance_compare} }
\end{figure*}

A significant advantage of global control is the ability to optimize pulses on smaller systems and extrapolate them to larger scales. Since the control fields are spatially uniform, pulses optimized for small-to-medium-sized chains can be directly applied to large system regimes. To demonstrate this scalability, we utilized tensor network methods to simulate the performance of the experimental pulses, visualized in \Cref{fig:fig2} (c), on a 50-atom chain.The time evolution of the expectation values $\langle Z_i\rangle$ as a function of the effective dynamical time $\tau/J_{\mathrm{eff}}$ is presented in \Cref{app_fig:50atoms}. Notably, the edge spins $\langle Z_{1}\rangle$ and $\langle Z_{50}\rangle$ exhibit a robust constant, in stark contrast to the decay observed in the bulk atoms. These simulations were performed using a second-order Trotterization scheme with a bond dimension $\chi=256$ and a time step of $\delta t=0.01\,\mu\mathrm{s}$. Given that the third-order long-range interactions are approximately $0.0002$ relative to the leading terms, we truncated the Hamiltonian to include only nearest-neighbor and next-nearest-neighbor interactions without loss of physical accuracy.

\begin{figure*}[htbp]
    \includegraphics[width=1\linewidth]{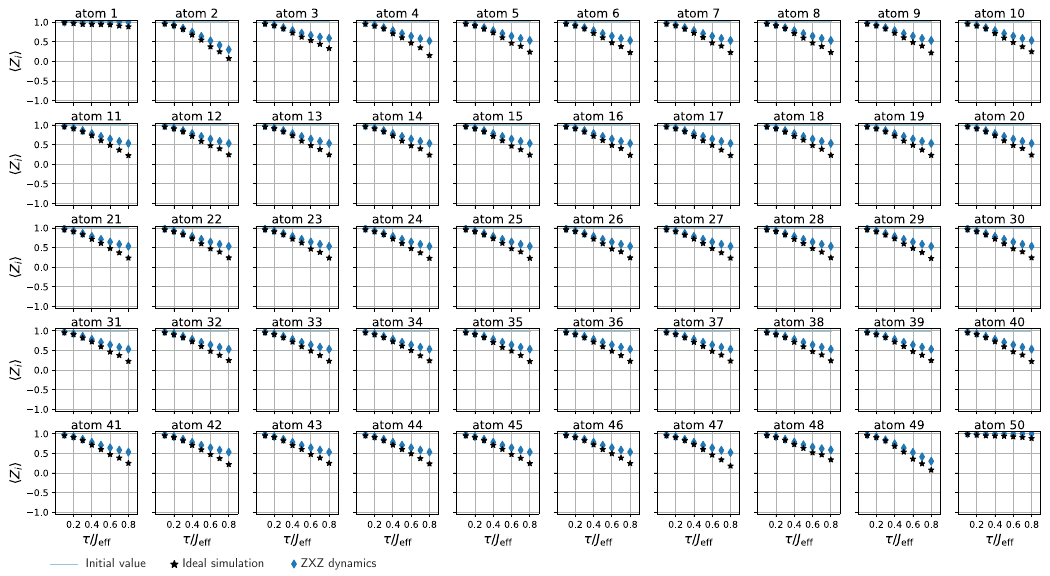}
    \caption{\textbf{Large-scale simulation of a 50-atom chain}. Tensor-network simulation results for a 50-atom chain driven by the same experimental control pulse used in \Cref{fig:fig2}(c). The dynamics of $\langle Z_i \rangle$ are shown for each site, and are compared with the corresponding ideal ZXZ Hamiltonian evolution.  
\label{app_fig:50atoms}}
\end{figure*}

Lastly, we demonstrate another application of the globally controlled universal system by realizing a subregion entangling gate. Specifically, we implement a controlled-NOT (CNOT) gate between the first two qubits, as shown in \Cref{app_fig:CNOT}(a). This gate generates entanglement within a targeted subregion while leaving the remaining qubits unentangled. The experimental setup is inspired by dual-species atom arrays with global control fields. We emphasize that this capability is not limited to a specific platform: any globally driven universal analog system can, in principle, achieve this functionality. Our setup is largely motivated by recent experimental advances in this direction. The corresponding global control Hamiltonian is shown in \Cref{app_fig:CNOT}(b).

Although the target unitary appears simple, it is important to emphasize that the underlying analog system features always-on interactions and therefore naturally tends to generate entanglement across the entire system, whereas the desired unitary entangles only a specific subregion. Using direct quantum optimal control, we identify global control pulses, shown in \Cref{app_fig:CNOT}(c), that achieve this selective entangling operation. In \Cref{app_fig:CNOT}(d), we visualize the density matrices of the ideal target state and the simulated state under the full Rydberg Hamiltonian, demonstrating a high state fidelity of 99.8\%.

\begin{figure*}[htbp]
    \centering
    \includegraphics[width=1\linewidth]{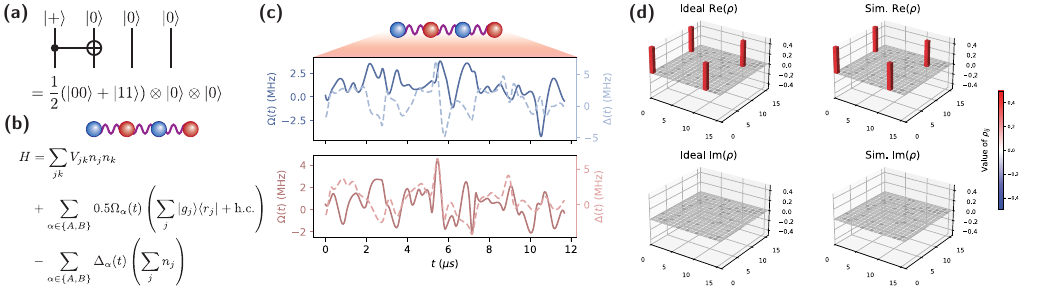}
    \caption{\textbf{Realizing a subregion entangling gate with globally driven dual-species atom arrays.} (a) Target quantum circuit implementing a CNOT gate on the first two qubits. Starting from the initial state $\ket{+,0,0,0}$, the circuit generates the maximally entangled state $\tfrac{1}{\sqrt{2}}(\ket{00}+\ket{11}) \otimes \ket{0} \otimes \ket{0}$, with entanglement localized to the first two qubits.
(b) Proposed experimental setup using a four-atom dual-species array with global control fields. Due to always-on interactions, the system naturally tends to generate multipartite entanglement, posing a challenge for subregion control.
(c) Optimized global control pulses for the two atomic species that implement the desired CNOT gate on the first two qubits.
(d) Real and imaginary parts of the density matrices for the ideal target state and the simulated state under the full Rydberg Hamiltonian. The resulting state fidelity reaches 99.8\%.}
    \label{app_fig:CNOT}
\end{figure*}

\end{document}